\documentclass{article} %
\usepackage{iclr2026_conference_unanon,times}

\usepackage[utf8]{inputenc} %
\usepackage[T1]{fontenc}    %
\usepackage{hyperref}       %
\usepackage{url}            %
\usepackage{booktabs}       %
\usepackage{amsfonts}       %
\usepackage{nicefrac}       %
\usepackage{microtype}      %
\usepackage{amsmath, amssymb}
\usepackage{subcaption}
\usepackage{hyperref}
\usepackage{url}
\usepackage{enumitem}

\usepackage{amsthm}

\usepackage[dvipsnames]{xcolor} %
\usepackage{xcolor,colortbl} %

\usepackage{algorithm}
\usepackage{algpseudocode}
\usepackage{proof} %
\newcommand{\blank}{{-}}

\newtheorem{theorem}{Theorem}[section]
\newtheorem{lemma}[theorem]{Lemma}
\newtheorem{proposition}[theorem]{Proposition}

\theoremstyle{definition}
\newtheorem{definition}[theorem]{Definition}
\newtheorem{remark}[theorem]{Remark}

\renewcommand{\eqref}[1]{Equation~\ref{eq:#1}}

\newcommand{\secref}[1]{Section~\ref{sec:#1}}
\newcommand{\secstworef}[2]{Sections~\ref{sec:#1} and~\ref{sec:#2}}

\newcommand{\appref}[1]{Appendix~\ref{app:#1}}

\newcommand{\figref}[1]{Figure~\ref{fig:#1}}
\newcommand{\figstworef}[2]{Figures~\ref{fig:#1} and~\ref{fig:#2}}

\newcommand{\tabref}[1]{Table~\ref{tab:#1}}
\newcommand{\tabstworef}[2]{Tables~\ref{tab:#1} and~\ref{tab:#2}}

\renewcommand{\algref}[1]{Algorithm~\ref{alg:#1}}

\newcommand{\thmref}[1]{Theorem~\ref{thm:#1}}
\newcommand{\lemref}[1]{Lemma~\ref{lem:#1}}

\newcommand{\propref}[1]{Proposition~\ref{prop:#1}}

\newcommand{\defref}[1]{Definition~\ref{def:#1}}

\usepackage[colorinlistoftodos,prependcaption,textsize=tiny,textwidth=14mm]{todonotes}

\DeclareRobustCommand{\DE}[3]{#3}
\DeclareRobustCommand{\VAN}[3]{#3}

\title{What Was Said, Not What Was `Thought': Type-6 Logic for CoT Verification}

\author{%
  Adrian de Wynter\\
  Microsoft and the University of York\\
  \texttt{adewynter@microsoft.com}
}

\begin{document}
\maketitle
\begin{abstract}
We introduce Type-6 logic, a variant of dynamic epistemic logic augmented with two operators (uncertainty and recurrence), designed to model the inferential dynamics of contemporary large language model (LLM) chain-of-thought (CoT) reasoning. 
Type-6 accounts for common LLM reasoning pathologies such as unlicensed revision, enthymemes, loopbacks, and unverifiable/incorrect claims. 
We propose a verifier based on Type-6 logic that builds a graph out the trace, and checks it against Type-6's axioms and inference rules. 
We evaluate our framework on LLM-generated CoTs four splits spanning formal and informal reasoning. 
Our verifier detects structurally unsound reasoning steps that surface-level heuristics miss, and allows for easy visualisation of the model's reasoning process. 
In our corpus, our verifier shows that derived contradiction is the most common hard-fail category in CoT, and that only about 3\% of the propositions of a trace have impact on the final derivation. 
Ablation studies show that other verification methods (LLMs-as-judges, other neurosymbolic approaches, etc.) cannot be considered interchangeable: for example, agreement between LLMs-as-judges and LINC is $\kappa \approx 0.034$, and this persists within a method across underlying models. 
Type-6, however, is the most agreed-with method amongst the ones we tested. 
We prove our verifier runs on average-case linear time; and release our logic specification and artefacts. 

\end{abstract}

\section{Introduction}\label{sec:introduction}

Chain-of-thought (CoT) prompting is now a standard way to induce complex problem-solving capabilities in large language models (LLMs; \citealt{wei2022chain}). 
This, in theory, means that an LLM's reasoning steps are auditable through their traces. 
However, CoT lengths mean that they are often intractable for humans to verify. 
Thus, a body of work has been developed to create automated methods, 
such as process reward models (PRMs; \citealt{lightman2023lets,uesato2022solving}), 
prompts \citep{wang2023selfconsistency,madaan2023self}, 
and coherence metrics such as ROSCOE \citep{golovneva2023roscoe}. 
These methods largely conflate two questions: 
whether a trace \emph{arrives at} a correct answer, 
and whether its steps \emph{cohere with one another}. 
Indeed, it is known that traces often reach the right answer through incoherent steps \citep{zheng2025tmlr-curse}. 

We argue that the tokens a model emits do not carry provable truth conditions \emph{for the model itself}: 
we cannot ascribe knowledge or belief to the generator in any meaningful sense. 
However, \textbf{the statements and conclusions assembled from those tokens do carry truth} conditions \textit{for the reader}. 
Verification, on this view, lives on the interpretive side of the trace: 
we do not check what the model knows, but what the trace commits to when read as a natural-language reasoning artefact. 
Because CoT verification itself is ill-defined \citep{zaman-srivastava-2026-chain}, other methods--usually designed for a specific application--often miss this and other pathologies. 
For example, a trace may commit to $p$ and then, without derivation, commit to $\neg p$--an \textit{unlicensed revision}. 
It could also conclude $a$ from premises the chain never establishes (an \emph{unverifiable derivation}; distinct from contradiction). 
Or, it may state a conclusion without ever attempting to derive it (\emph{reasoning avoidance}), or leave key premises doubted or unresolved at the point of concluding. 
Consider:
\begin{quote}
\small
\texttt{K p1}     : I know memories are reliable evidence of the past.\\
\texttt{K NOT p1} : Actually, memories could be planted.\\
\texttt{p2}       : If memories were planted, we can't trust that yesterday existed.\\
\texttt{THEN a}   : So the world was created five minutes ago.
\end{quote}
Both the unlicensed knowledge-flip on $p_1$ (a hard contradiction), 
and the conclusion $a$ drawn from an antecedent whose grounding depends on an unresolved
conditional (an unverifiable derivation), would be missed by answer-correctness and NLI scorers. 
These are semantic pathologies of reasoning-as-written, and require a more nuanced approach accounting for the fact that CoT traces loop, meander, jump from point to point, and do not necessarily maintain a coherent or sound narrative--regardless of the answer's final correctness. 

For this we introduce \textbf{Type-6 logic,} a trace-driven update-semantic logic for CoT verification. 
Type-6 assigns each proposition a state $(\tau, \sigma, \delta)$--truth value, commitment strength (knowledge, belief, or uncommitted), and doubt. 
It specifies how modal operators, connectives, and structural markers (loopbacks, pivots, doubt) transform that state. 
We use Type-6 to build a CoT neurosymbolic verifier over a multigraph reconstruction of the trace. 
This verifier applies Type-6 rules to the reconstruction, and produces three reports: 
a structural \emph{coherence} report over all reasoning routes from opening to conclusion, 
a sequential \emph{verification} report logging contradictions and quality signals in order, 
and a \emph{per-path} report exposing where routes disagree or under-resolve. 
The design is paraconsistent by construction: 
contradictions are logged rather than suppressed, and the agent's commitments are always retained.
Crucially, Type-6 and its verifier require no gold labels, nor subscribe to a specific concept of CoT verification. Our contributions are:

\begin{itemize}[leftmargin=*]
\item \textbf{Type-6 logic}: a formal trace-driven update semantics for CoT, with a four-valued truth space, 
an explicit doubt flag, Kleene-style back-propagation over logical constraints, 
and a licensed-revision rule that distinguishes epistemic updates from unlicensed reversals. 
\item \textbf{A contradiction taxonomy}: hard and soft contradictions, 
residual signals (unresolved doubt or unknowability at the conclusion), 
a trace-level reasoning-avoidance signal, and cross-path disagreement; 
each mapped to concrete verifier output.
\item \textbf{A CoT verifier}: a single-pass graph construction equipped with Type-6 logic, 
producing reports with statement-level provenance.
\end{itemize}

We test our verifier against four common methods--LLMs-as-judges, ROSCOE, LINC \citep{olausson2023linc}, and PRMs--on a human-annotated corpus of 996 formal and informal reasoning traces. 
We find that none of these methods are interchangeable, since their failure decisions disagree with one another. 
However, Type-6 is the most-agreed-with method amongst them, correlating with LLM-judges ($\rho = 0.25$) and PRMs ($\rho = 0.31$) more strongly than baseline correlates with the other. 
We also find with Type-6's verifier deeper insights on traces, such as how many propositions lie on the main derivation chain on average (3\%). 
In all, Type-6 allows for comprehensive, tuneable, and auditable CoT trace verification.

\section{Background and Related Work}\label{sec:related}

\textbf{Scoring chain-of-thought.} Outcome-based scorers train on final-answer correctness \citep{cobbe2021training}, or aggregate over samples \citep{wang2023selfconsistency}. 
They treat the trace as a means to the answer, scoring two traces identically when one is coherent and the other is not. PRMs instead score individual steps against human annotations \citep{uesato2022solving,lightman2023lets}. They are successful at mathematical reasoning \citep{wang2024mathshepherd}, but require a trained model and do not generalise well beyond their domain. 
Judgement-based scorers prompt a model to evaluate another's output \citep{madaan2023self,shinn2023reflexion,zheng2023judging,li2023alpacaeval}. 
They are flexible, but carry known position and style biases, and are hard to audit at the step level. 
Reference-free coherence methods like ROSCOE use NLI and embedding metrics to the trace, and have been extended to reasoning-graph metrics \citep{prasad2023receval} and faithfulness probes \citep{lanham2023measuring,turpin2023language}. 
Our work is close to the latter two, but its neurosymbolic nature recovers a parseable structure. 
It also reports Type-6-based contradictions with statement-level provenance, as opposed to aggregate similarity.

\textbf{Neurosymbolic and structured-intermediate methods.} 
Other research converts natural-language reasoning into a formal representation and reasons over it symbolically. 
The most prominent is LINC \citep{olausson2023linc}. It translates premises and conclusions into a first-order logic (FOL) and calls a theorem prover. 
It has had success in arithmetic \citep{gao2023pal,chen2022programofthoughts} and constraint satisfaction \citep{ye2023satlm} tasks. 
These approaches build an intermediate structure similar to ours, but typically are restricted to FOL. 
More importantly, they do not cover--by design--informal reasoning, modal commitments, or approximately-correct derivations. 
Type-6 preserves these signals, in addition to structural markers and per-path decompositions. %

\textbf{Consistency, hallucination, and self-verification.} 
It is known that CoT as a formal reasoning process is flawed \citep{zheng2025tmlr-curse,yuehhan2026reasoningmodelsstrugglecontrol,turpin2023language,lanham2023measuring}; also see \citet{zaman-srivastava-2026-chain}. 
Approaches hence attempt to check whether the LLM itself is consistent via self-checking prompts \citep{wang2023selfconsistency,miao2024selfcheckgpt}, 
verifier-generator loops \citep{weng2023large}, 
or entailment probes. 
These target the generating LLM, while our verifier targets the generated artifact (trace). 
This distinction determines what the score \textit{means}: 
when the same trace comes from different generators, or there is no access to the generator, 
only the artifact-level reading is available. 
Hallucination detection \citep{ji2023survey,li2023halueval} shares this framing but focuses on factual accuracy rather than inferential structure.

\textbf{Formal and informal foundations.} 
Reading a reasoning trace as an object of interpretation, distinct from any agent's epistemic state, has roots in informal logic--typically defined as `anything that is not formal logic' \citep{perelman,WaltonInformal}; also see \cite{AnthonyBlair2015}. 
\citet{toulmin1958uses} argues that everyday reasoning is enthymematic, and \citet{walton2008informal} extends this with defeasible argumentation schemes. 
This is often applied to argument mining \citep{lawrence2019argument} and argumentation frameworks \citep{dung1995acceptability}. 
Type-6 is more classical, and closer to dynamic epistemic logic (DEL) \citep{baltag1998logic,vanditmarsch2007del}. 
It retains modal operators for knowledge and belief, but excludes multi-agent constructs (e.g., announcements) and replaces Kripke-frame semantics with an update semantics indexed by the trace. 
Still, it remains parametric on a specific system, such as S5 or KD45, since \textbf{a CoT does not need to conform to an agent's epistemic profile}, nor it always licenses a coherent Kripke frame. 
Type-6 also maintains a four-valued discrete truth space $\{T, U_c, U_k, F\}$, to avoid ascribing a subjective credence to the agent. 
This follows \citet{belnap1977useful}, 
, with $U_c$ playing the role of `no information'; 
and $U_k$ that of apending free variable. 
Back-propagation uses strong Kleene semantics \citep{kleene1952introduction} extended so that $U_c$ operands abstain. 
Paraconsistency (i.e., tolerating local contradiction without global explosion) is by design: 
Type-6 logs conflicting commitments rather than deriving arbitrary consequences, 
closer to the annotation-and-log style of \citet{arieli1998reasoning} than to dialetheic systems \citep{priest2007paraconsistent}. 
This logging means it also admits enthymemes, since its revision rules allow implicit derivations via back-propagation. 

\section{Type-6 Logic}\label{sec:type6}

\subsection{Operators and State}\label{sec:t6-informal}

Type-6 is a discrete, trace-driven, update-semantic logic meant for CoT verification and audit. 
It emphasises the fact that CoT tokens do not carry provable truth conditions for the generating LLM, 
but the statements assembled from them do, for the interpreter. 
Verification is then at the level of interpretation. 
This is because Type-6 logic is dynamic: a per-proposition state $\Sigma(\pi) = (\tau, \sigma, \delta)$ assigns a truth value
$\tau \in \{T, U_c, U_k, F\}$ (true, unclear/unparseable, unknowable/unresolved, false), 
a commitment strength $\sigma \in \{\bot, B, K\}$--where $K$ and $B$ are modal operators, discussed below--and a doubt flag $\delta \in \{0, 1\}$ (no/yes). 
It evolves as the trace is walked, and update rule violations are logged (\secref{t6-updates-failures}). 
This logging is the verifier in disguise. We discuss it in \secstworef{type6}{eval}. 

$\Sigma$ evolves via two operator types: \textbf{modal} and \textbf{structural}, in addition to the standard operators and modifier ($\mathrm{NOT}$) from FOL (\tabref{t6-ops}). 
The modal operators $B$ (belief) and $K$ (knowledge) capture the strength of an agent's commitment to a proposition. 
This separates statement-level content (what is said, or asserted) from modal commitment (how strongly). 

The structural operators $\mathord{?}$ (doubt), $R$ (loopback), and $N$ (pivot) govern how the trace revisits, questions, and organises its content without committing to propositional values. 
They are modelled after the standard CoT discourse. 
$N$ and $R$ do not change $\Sigma$, but change the logging: 
$N$ preserves an open constraint chain across a pivot, and $R$ marks the next commitment on a proposition as a licensed revision (\secref{t6-updates-failures}).
$\mathord{?}$ is the only operator allowed to alter $\delta$: 
doubt is not an implicit consequence of belief, so $B\,\pi$ (a positive commitment) and $B\,\pi$ followed by $\mathord{?}\,\pi$ (doubted belief) are distinct configurations. 

\begin{table}[t]
\centering
\small
\setlength{\tabcolsep}{4pt}
\begin{tabular}{llll}
\toprule
Operator & Agent-side read & Verifier-side update & Notes \\
\midrule
$K$ & Claims to know $\pi$ & $(\alpha,\, K,\, 0)$ & Requires $\alpha$$\in$$\{T$,$F\}$; clears $\delta$ \\
$B$ & Claims to believe $\pi$ & $(\alpha,\, B,\, \blank)$ & Keeps $\delta$; allows $\alpha$ $=$ $U_k$ \\
$\mathord{?}$ & Doubts its target & $(\blank,\, \blank,\, 1)$ & Only operator setting $\delta$=$1$ \\
$R$ & Revisits $\pi$ & No state change & licenses following revision \\
$N$ & Pivots the discourse & No state change & Inert \\
$\mathrm{NOT}$ & Negates next operator & Flip update's polarity & Folds into $K$/$B$/connective \\
$\mathrm{IF}$ & Creates antecedent & Opens an $\mathrm{IF}$-$\mathrm{THEN}$ chain & Ends any prior open chain \\
$\mathrm{THEN}$ & Sequential composition & Introduces constraint $r$ & Ends chain; triggers bprop \\
$\mathrm{AND}, \mathrm{OR}$ & Conjunction, disjunction & Extends the current chain & Per-operand polarity w/ $\mathrm{NOT}$ \\
(bare) & Assert $\pi$, no $K$,$B$ claim & $(\alpha,\, \blank,\, \blank)$ & Updates $\tau$; leaves $\sigma, \delta$ \\
\bottomrule
\end{tabular}
\caption{Type-6 operators and their induced update on $\Sigma(\pi) = (\tau, \sigma, \delta)$. 
$\alpha$ is the annotation on the current statement; $\blank$ denotes preservation. 
Chains are either open (via $\mathrm{IF}$), extended, or closed (via $\mathrm{THEN}$). 
The latter triggers backprop by introducing a constraint $r$. 
$N$ being inert means it preserves the open constraint chain. 
Modal operators ($B, K$) commit, but (bare) updates $\tau$ without committing. 
Structural operators ($\mathord{?}, R, N$) direct the flow.}\label{tab:t6-ops}
\end{table}

\textbf{Trace reading.} 
In Type-6, operators are read left-to-right, each scoping over the stack to its right within its statement. 
To facilitate reading a trace in Type-6, we encode assertions as the \emph{bare} case: a proposition asserted without $K$ or $B$ updates $\tau$ but does not change $\sigma$. 
This left-to-right read makes $\mathord{?}\,K\,\pi$ (`do I know $\pi$?') and $K\,\mathord{?}\,\pi$ (`I know I doubt $\pi$') distinct: 
the first commits $K$ on $\pi$ and flags doubt \textit{on the commitment}, 
and the second flags doubt \textit{on $\pi$} then commits $K$ on the doubt-state. 
At a first glance, doubted knowledge ($K\,\mathord{?}\,\pi$) and belief ($B\,\pi$) might look similar, 
since both yield a value one should not rely on without further support. 
However, these are semantically distinct, and $(\sigma, \delta)$ preserves this distinction. 
A worked example is in \appref{t6-nested}.

\subsection{Language and Update Rules}\label{sec:t6-formal}

Type-6's full description, constraint accumulator, Kleene connective tables, and axiomatic system and metatheory are in \appref{t6-axioms}. 
From before, remark that every commitment made by an agent can be logged, even when a later derivation suggests it is wrong. 
Indeed, under Type-6's update and language rules, when a back-propagated constraint assigns $\pi$ a value conflicting with the agent's committed value, it does not fail. 
Instead, it yields a derived contradiction but retains the agent's value. 
In turn, this agent-wins policy allows flagging `confidently incorrect' outputs--a well-known pathology in CoT.

\textbf{Language.}
Let $\mathcal{P}$ be a countable set of proposition symbols. 
The language $\mathcal{L}_{\text{Type-6}}$ consists of \emph{statements} in two classes:
\begin{align}
\text{Unary} &\quad s = \omega_1 \cdots \omega_k\, \pi : \varsigma,
  \label{eq:unary-stmt}\\
\text{Connective-led} &\quad s = \beta\, \omega_1 \cdots \omega_k\, \pi : \varsigma,
  \label{eq:conn-stmt}
\end{align}
where each $\omega_i \in \Omega_u = \{B, K, \mathord{?}, R, N, \mathrm{NOT}\}$, 
$\beta \in \Omega_c = \{\mathrm{THEN}, \mathrm{AND}, \mathrm{OR}, \mathrm{IF}\}$, 
$\pi \in \mathcal{P} \cup \{\varnothing\}$ is an optional proposition, 
and $\varsigma$ is an uninterpreted natural-language surface form. 
A connective-led statement binds $\pi$ to the proposition of the preceding statement, resolving the connective sequentially. 
A \emph{trace} is then a finite sequence $R = (s_1, \dots, s_n)$.

\textbf{Truth space and state.} The truth space $\mathcal{V} = \{T, U_c, U_k, F\}$ is ordered by resolvedness, $U_c \sqsubseteq U_k \sqsubseteq T, F$, with $T, F$ incomparable. 
$U_c$ (unclear) marks content the interpreter cannot parse, and 
$U_k$ (unknowable) marks content undetermined but resolvable in principle. 
The state is a partial function $\Sigma : \mathcal{P} \to \mathcal{V} \times \mathcal{C} \times \{0,1\}$ with $\mathcal{C} = \{\bot, B, K\}$. 
The initial state $\Sigma_0$ assigns $(U_k, \bot, 0)$ to every proposition in the trace.

\textbf{Updates.}
The modal updates commit a value and a strength,
\begin{align}
K\,\pi &: \Sigma' = \Sigma[\pi \mapsto (\alpha,\, K,\, 0)],\quad \alpha \in \{T, F\},
  \label{eq:K-update}\\
B\,\pi &: \Sigma' = \Sigma[\pi \mapsto (\alpha,\, B,\, \blank)],
  \label{eq:B-update}
\end{align}
where $\alpha$ is the annotation on the current statement. 
A $K$-commitment with $\alpha$$\in$$\{U_c, U_k\}$, or a $B$-commitment with $\alpha$=$U_c$, is a \emph{modal mismatch} (\secref{t6-updates-failures}). 
The $K$-update clears $\delta$ (knowledge is decisive) while $B$ preserves it (belief tolerates residual doubt). 
A \emph{bare} statement asserting $\pi$ without a modal operator updates $\tau$ and ignores $\sigma$. 
Doubt sets $\delta$=$1$ and preserves the rest; 
$N$ and $R$ do not alter $\Sigma$. 
Connective-led statements do not act on $\Sigma$ but extend or close a \emph{constraint chain}: 
a $\mathrm{THEN}$ closes the chain into a constraint $r$$\equiv$$\phi(p_1, \dots, p_k)$, which is added to a persistent constraint set and triggers back-propagation.

\textbf{Back-propagation} over the constraint set uses strong Kleene three-valued semantics \citep{kleene1952introduction}. 
We extend them so that a $U_c$ operand makes the enclosing derivation \emph{abstain} rather than propagate (\appref{t6-connectives}). 
Given some $r$$\equiv$$\phi(p_1, \dots, p_k)$ and the current state, 
back-prop either \emph{pins} an unresolved $\tau(r)$ to the derived value, 
logs a \emph{derived contradiction} when the derived value conflicts with a committed $\tau(r)$ (retaining the agent's value), 
or logs an \emph{unverifiable derivation} when the operands do not \textit{force} the committed target. 
Events are logged only on a constraint's status transition, so a single unresolved constraint does not flood the report.

\subsection{Updates and Failure Modes}\label{sec:t6-updates-failures}

\textbf{Revisions and reasoning avoidance.} 
An agent may revisit their prior beliefs, as well as skip the evaluation of certain statements. 
In other words, commitments are defeasible: an earlier value on $\pi$ may be overwritten, provided the overwrite is \emph{licensed}. 
A sub-trace between two opposite-valued commitments on $\pi$ is a \emph{revision segment} iff it contains 
(i) a $\mathsf{jump}$ or $\mathsf{loopback}$ edge ,and 
(ii) a constraint targeting $\pi$ (explicit or back-propagated). 
A revision across a revision segment is a \emph{licensed revision} and emits a revision event; 
without one, it is a $K$-$K$ contradiction (modal case) or a bare-reassertion conflict (bare case). 
The two-part condition is core to Type-6's enthymematic characteristics: 
implicit warrants are permitted, but the verifier requires positive evidence that revision was worked for. 
Note that licencing captures effort, not correctness: a licensed revision may still be wrong on the merits, which is a matter for downstream evaluation. 
Revision segments have trace analogues: 
if no constraint targets the sink proposition $a$, even transitively, the trace is \emph{reasoning-avoidant}: 
it announces a conclusion without attempting to derive it. 
This is a hard trace-level signal, since the absence of any conclusion-directed derivation is a property of the trace rather than of any route through it. 
Full definitions of the licencing window and its interaction with same-value re-assertion are in \appref{t6-axioms}.

\textbf{Contradictions and quality signals.} 
Parsing the logs yields two signals: \emph{contradictions}, which flag update-rule violations, and \emph{quality signals}, which flag rhetorical patterns impacting coherence without violating the logic. 
A summary of these is in \tabref{t6-contra}. 
Hard contradictions are direct violations of the update or back-propagation rules; 
soft contradictions are weaker inconsistencies that may reflect legitimate epistemic moves under some readings, and are flagged rather than blocking. 
Residual signals (unresolved doubt or unknowability at the sink) are properties of the terminal state and are elevated to hard when the residual proposition lies on the direct implication chain to the answer. %

\begin{table}[t]
\centering
\small
\begin{tabular}{lll}
\toprule
Category & Trigger & Severity \\
\midrule
$K$-$K$ contradiction & $K\,\pi{=}v_1$; $K\,\pi{=}v_2$; no revision segment & $^\dagger$Hard  \\
Derived contradiction & Back-prop yields $v' \neq v$ for committed $\pi$ & $^\dagger$Hard \\
Modal mismatch $(U_c)$ & $K\,\pi$ or $B\,\pi$ with $\alpha = U_c$ & $^\dagger$Hard \\
Modal mismatch $(U_k/K)$ & $K\,\pi$ with $\alpha = U_k$ & $^\dagger$Hard  \\
Reasoning-avoidance & No constraint targets sink $a$ & $^\dagger$Hard (trace) \\
\midrule
$B$-$B$ conflict & $B\,\pi{=}v_1$; $B\,\pi{=}v_2$; no revision segment & Soft \\
Bare-reassertion conflict & Bare $\pi{=}v_1$; bare $\pi{=}v_2$; no revision segment & Soft \\
Unjustified modal shift & $B{\to}K$ opp., $K{\to}B$ same/opp.; no derivation & Soft \\
\midrule
Unresolved doubt & $\delta(\pi) = 1$ at sink $a$ & Soft* \\
Unresolved unknowability & $\tau(\pi) = U_k$ at sink $a$ & Soft* \\
\midrule
Cross-path disagreement & Distinct terminal values on $\pi$ across paths & Soft (global) \\
\midrule
Redundant re-assertion & $K\,\pi \to K\,\pi$, no intervening content & Quality \\
Self-questioned $K$/$B$ & $K\,\pi \to K\,\mathord{?}\,\pi$ (analog for $B$) & Quality \\
Unverifiable derivation & Kleene $U_k$ derivation, committed $T/F$ target & Quality \\
Ambiguous negated connective & $\mathrm{NOT}$ folded into a connective & Quality \\
\bottomrule
\end{tabular}
\caption{Type-6 contradiction and quality-signal taxonomy. 
`Soft*' severity is elevated to `hard' when $\pi$ is in the direct implication chain to $a$. 
$^\dagger$ marks if the signal blocks the path from being considered coherent for aggregation.
}\label{tab:t6-contra}
\end{table}

\subsection{From Type-6 to Type-6-based Verification}\label{sec:t6-layers}

Type-6 is only a logic. 
Our CoT verifier works by (1) parsing the trace and applying Type-6 to its statements, and then (2) evaluating these statements through symbolic verification. 
This induces a natural hierarchical relationship between Type-6 and this verifier. 
For (1), applying Type-6 means that the labelling is \textit{implicit} in the CoT: a competent interpreter simply surfaces it. 
In (2), this evaluation necessarily induces a (multi di)graph: propositions become nodes, statements become edges, and structural edges (\textsf{gap}, \textsf{jump}, \textsf{loopback}, \textsf{meander}) follow from the sequential structure of the trace. 
Multiple edge-simple paths from source to sink are candidate reasoning routes. 
The rules by which symbolic verification occurs are Type-6 semantics over the sequential states $\Sigma$ from the graph. 
Comparing projected terminal states show two divergence modes: \emph{cross-path disagreement}, 
where two internally consistent paths yield opposite values on $\pi$ (the trace is structurally underdetermined), 
and \emph{under-resolution}, where some paths resolve $\pi$ and others leave it as $U_k$. 
The \emph{direct implication chain} to $a$ is the proposition set reaching $a$ through the transitive closure of constraints along a path. 
Residual signals on chain propositions are elevated to hard: an unresolved conclusion-supporting premise is not the same as an unresolved aside. 
This chain is computed per path, so a proposition may be on or off-chain based on the path.

\section{Methods and Experimental Setup}\label{sec:eval}

\subsection{Type-6-based Verification}\label{sec:verifier}

We now discuss our concrete verifier implementation. 
For flexibility in application, our main constraint is that a verifier should not be an author (i.e., fill in missing warrants, resolve underdetermined propositions beyond back-prop, or overrule the agent's committed values when they conflict with derived ones). 
The verifier is then an algorithm which takes in a raw CoT, annotates it in Type-6, and then produces reports from a sequential walk on a multi-digraph built from the annotation. 
It runs in quasilinear time when accounting for CoT sparsity (\appref{complexity}), and operates in five stages. 
For a worked example of Type-6, verification and a comparison with baseline outputs, see \appref{worked-example}. 

\textbf{Stages 1-3: Type-6 annotation.} 
The verifier (1) prompts an LLM to split and normalise statements, and assign them propositional variables: 
`I saw a dog but not a cat' becomes \texttt{p1:} \texttt{I} \texttt{saw} \texttt{a} \texttt{dog} and \texttt{AND} \texttt{NOT} \texttt{p2:} \texttt{I} \texttt{did} \texttt{not} \texttt{see} \texttt{a} \texttt{cat}. 
Then (2) function call maps recurring propositions to the same symbol, 
and (3) an LLM assigns per-statement values from $\{T, U_c, U_k, F\}$ and Type-6 operators to each line-statement using that map. 
It is also requested to revise any propositions missed. 
The LLM used was GPT-5 \citep{gpt5}.

\textbf{Stages 4-5: verification.} 
The verifier then builds a graph of the CoT applying the edge-kind rules of \secref{t6-layers}: each operator becomes an edge, each proposition a node, and consecutive statements yield \textsf{gap}, \textsf{jump}, \textsf{loopback}, or \textsf{meander} edges. 
Then, it enumerates all paths by computing all edge-simple paths from source (the user prompt) to sink (the answer), subject to a loopback budget. 
Node revisits are permitted only through \textsf{jump}, \textsf{loopback}, or binary-connective edges. 
From the graph and the sequential verifier state, the verifier produces three reports: 
\emph{coherence}, aggregating structural statistics across paths as distributions (min, max, mean per criteria); 
\emph{sequential verification}, logging contradictions and quality signals in order 
and \emph{per-path}, projecting sequential events onto individual paths and 
tabulating cross-path disagreements and under-resolutions. 

\subsection{Corpus creation and verifier review}\label{sec:data-models-baselines}

Our corpus is comprised of 996 CoT traces elicited by prompting two Qwen3 models (8B and 17B; \citealt{yang2025qwen3technicalreport}) on questions randomly drawn from three sources and two core domains (multiple choice and argumentation). 
For multiple-choice, we sampled 276 entries from CoT-Logic, a chain-of-thought benchmark;\footnote{\url{https://huggingface.co/datasets/isaiahbjork/cot-logic-reasoning}} 
and 370 from MMLU \citep{hendrycks2021measuring}. 
For argumentation we collected 311 arguments from DebateLab-KIT's arguments-and-debates
collection.\footnote{\url{https://huggingface.co/datasets/DebateLabKIT/arguments-and-debates}} In the argumentation subset, the models were instructed to respond to the prompt with a counterargument. 
In addition, we hand-crafted 39 question-answer pairs covering cases underrepresented in the sampled sources, mixed between argumentation and multiple choice. 
The median resulting trace length (as line-statements) is 66 for CoT-Logic, 54 for DebateLab, 44 for MMLU, and 52 for the handcrafted subset, for an average of 55.5. 
Multiple-choice traces probe deductive and eliminative reasoning under
formal constraints, while argumentation traces exercise informal
reasoning where enthymemes and rhetorical moves dominate. Both span the domains that Type-6 is designed for. 
Finally, we stripped the CoTs of model-specific tokens (\texttt{<think>}\ldots\texttt{</think>} and analogues).

To ensure reliability of Stages 1-3, three annotators trained on Type-6 logic reviewed the annotated traces in full over ten months. 
They evaluated proposition and operator-and-proposition assignment correctness. 
If the latter was incorrect, they were asked to provide a correction. 
Human annotator via Fleiss' $\kappa$ showed a raw pairwise agreement of $\kappa$=0.969 and 0.929 for the first and second criteria. 
The LLM did not annotate proposition correctness since it is not part of Type-6, but its accuracy on operator-and-proposition assignment was $97.2\%$, as per the corrections supplied by at least two annotators, split between $98.5\%$ (unverified) and $95.9\%$ (verified). 
Further details are in \secref{ethics-statement}.

\subsection{Measurement}

We compare against five methods, each covering at least one of the applications of Type-6 and its verifier. 
More details are in \appref{extended-methods}. 
All baselines only work with the natural-language trace, so the comparison is at the level of the raw CoT. 
This is in line with our verifier's intended use as \textit{post-hoc} evaluator. 
\textbf{ROSCOE} is a reference-free set of metrics. 
It measures faithfulness, informativeness, repetition, reasoning alignment, chain
self-consistency, and step contradiction. 
We aggregated it to a $[0,5]$ score with a hard-fail flag on the self-consistency and reasoning-alignment criteria. We use DeBERTa \citep{he2021deberta} as the NLI model. 
In \textbf{LINC} an LLM translates the trace into FOL, and formal verifier Z3 \citep{10.1007/978-3-540-78800-3_24} checks for premise consistency and entailment. 
The LLMs used are GPT-5, Gemma-4 \citep{gemma4}, GLM-4.7 \citep{glm47}, and Qwen-3.5. 
Verdicts are entailment, non-entailment, contradiction (treated as hard-fail), or unknown. We compute translation fidelity on every entry. 
\textbf{PRM}s assigns per-step correctness probabilities from \texttt{+}/\texttt{-} token logits, and hard-fails when mean or minimum step-probability falls below thresholds. A low-variance mid-range proxy flags likely out-of-distribution operation.
The models used are the ones provided by \citep{wang2024mathshepherd,dong2024tmlr-rlhf}, and \citep{zhang-etal-2025-lessons}. 
Finally, we also test \textbf{LLMs-as-judges} with a five-criteria rubric (contradiction, unsupported conclusion, modal mismatch, unresolved doubt, and an overall $[0,5]$ coherence rating). 
Hard-fails happen when any binary axis fires. 
The LLMs-as-judges used are GPT-5.6, GLM-4.7, Claude Opus-4.8 \citep{opus48}, and Llama-3.1 \citep{llama31}. 
We additionally consider a \textbf{trivial} baseline with four model-free signals: statement count, connective density, adjacent-statement repetition rate, and type-token ratio. 
Hard-fails on too-short or connective-free traces.

\section{Results}\label{sec:results}

\subsection{General Results}

We report two criteria as a normalisation factor for all six methods: 
(1) hard-fail agreement, where each method emits a binary flag per trace (accept/reject as incoherent) and measured with pairwise Cohen's $\kappa$; 
(2) scalar correlation, where each method emits a $[0,5]$ trace-level coherence score. 
For the latter our verifier used the coherence score from the report--that is, the direct verifier output--which is the most sensible reduction for this task. 
Pairwise hard-fail agreement and scalar correlations are in \tabref{hard-fail-agreement-correlation}. 
Breakdowns by generation model and task type for both metrics are in \appref{extended-results}. 
From the table it can be seen that pairwise hard-fail agreement is vanishingly small ($\kappa$ $\in$ $[-0.005, 0.15]$) among nearly all method pairs. 
The highest values (LLM-judge-Type-6, $\kappa = 0.15$; LLM-judge-PRM, $\kappa = 0.07$) indicate weak agreement. 
\textbf{There are no methods whose hard-fail decisions can be treated as
interchangeable}, but Type-6 is the most agreed-with. 
This holds when we take best-of or average scores. 
ROSCOE and Trivial rows show $\kappa \approx 0$, suggesting near-uniform emission of a single label. 
These verifiers \textbf{are labelling different subsets of the traces as failing}, and their disagreement is not due to the underlying model. 
The scalar correlations support this observation. 
Some method pairs weakly correlate: LLM-judge-PRM ($\rho = 0.36$ best-of; $0.13$ avg.),
LLM-judge-Trivial (r. $0.34$, $<0$), 
and PRM-Type-6 (r. $0.31$, $0.21$).
ROSCOE anti-correlates with all ($\rho$ $\in$ $[-0.40, -0.03]$), 
which, combined with its 100\% hard-fail rate, indicates it is doing something systematically different from the other methods. 
Overall, $\rho$, while non-trivial in a few pairs, \textbf{is much lower than expected} were these methods be measuring the same underlying quantity. 
No pair exceeds $\rho = 0.4$, and most sit below $0.2$.

\begin{table}[t]
\centering
\small
\setlength{\tabcolsep}{1.5pt}
\begin{tabular}{lrrrrrr|rrrrrr}
\toprule
 & LINC & LLM-J & PRM & ROSC. & Trivial & Type-6 & LINC & LLM-J & PRM & ROSC. & Trivial & Type-6 \\
LINC      & ---    & \cellcolor{BrickRed!20}$0.034$  & \cellcolor{BrickRed!20}$0.004$ & \cellcolor{BrickRed!20}$0.000$ & \cellcolor{BrickRed!20}-$0.005$& \cellcolor{BrickRed!20}$0.012$ & \cellcolor{BrickRed!20} $0.053$ &\cellcolor{BrickRed!20} $0.050$ &\cellcolor{BrickRed!20}-$0.043$ &\cellcolor{BrickRed!20} $0.024$ &\cellcolor{BrickRed!20} $0.005$ \\
LLM-J & \cellcolor{RoyalPurple!20}$0.004$ & ---     & \cellcolor{BrickRed!20}$0.067$ & \cellcolor{BrickRed!20}$0.000$ & \cellcolor{BrickRed!20}$0.002$ & \cellcolor{BrickRed!20}$0.151$ & \cellcolor{RoyalPurple!20}-$0.000$ & --- &\cellcolor{BrickRed!20} 0.363  &\cellcolor{BrickRed!20}  -$0.030$ &\cellcolor{BrickRed!20} $0.261$ &\cellcolor{BrickRed!20} $0.248$ \\
PRM       & \cellcolor{RoyalPurple!20}$0.001$ & \cellcolor{RoyalPurple!20}$0.054$ & ---     & \cellcolor{BrickRed!20}$0.000$ & \cellcolor{BrickRed!20}$0.000$ & \cellcolor{BrickRed!20}$0.074$ & \cellcolor{RoyalPurple!20}$-0.013$ & \cellcolor{RoyalPurple!20}$0.125$ & --- &\cellcolor{BrickRed!20} -$0.404$ &\cellcolor{BrickRed!20} -$0.093$ &\cellcolor{BrickRed!20} $0.312$ \\
ROSC.    & \cellcolor{RoyalPurple!20}$0.000$ & \cellcolor{RoyalPurple!20}$0.000$ & \cellcolor{RoyalPurple!20}$0.000$ & ---     & \cellcolor{BrickRed!20}$0.000$ & \cellcolor{BrickRed!20}$0.000$ & \cellcolor{RoyalPurple!20}-$0.054$ & \cellcolor{RoyalPurple!20}-$0.214$& \cellcolor{RoyalPurple!20}-$0.258$ & --- &\cellcolor{BrickRed!20} $0.068$  & \cellcolor{BrickRed!20}-$0.255$ \\
Trivial  &  \cellcolor{RoyalPurple!20}-$0.004$ & \cellcolor{RoyalPurple!20}$0.001$ & \cellcolor{RoyalPurple!20}$0.000$ & \cellcolor{RoyalPurple!20}$0.000$ & ---     & \cellcolor{BrickRed!20}$0.001$ & \cellcolor{RoyalPurple!20}$-0.024$ & \cellcolor{RoyalPurple!20}$0.343$ & \cellcolor{RoyalPurple!20}-$0.169$ & \cellcolor{RoyalPurple!20}$0.068$ & --- &\cellcolor{BrickRed!20} -$0.018$ \\
Type-6   &  \cellcolor{RoyalPurple!20}$0.005$  & \cellcolor{RoyalPurple!20}$0.092$ & \cellcolor{RoyalPurple!20}$0.057$ & \cellcolor{RoyalPurple!20}$0.000$ & \cellcolor{RoyalPurple!20}$0.001$ & --- & \cellcolor{RoyalPurple!20}$0.035$ & \cellcolor{RoyalPurple!20}$0.170$ & \cellcolor{RoyalPurple!20}$0.212$ &  \cellcolor{RoyalPurple!20}-$0.255$  & \cellcolor{RoyalPurple!20}-$0.018$ & --- \\
\bottomrule
\end{tabular}
\caption{Pairwise Cohen's $\kappa$ for hard-fail agreement (left) and Spearman $\rho$ between trace-level scalars (right), as average over per-method models (blue) and best-of (red). 
For Spearman $\rho$, all outputs project to $[0,5]$; 
Type-6' scalar is \texttt{coherence}\_\texttt{score}\_\texttt{graded}. 
LINC's $\rho$ is only for demonstration: our projection took verdict to either error, unknown, contradiction, non-entailment, or entailment, which is not fully comparable with the others. 
}
\label{tab:hard-fail-agreement-correlation}
\end{table}

\subsection{Ablation: Per-Method Measurements}\label{sec:per-baseline}

Our results require a per-baseline breakdown to understand \emph{how} each succeeds or fails. 
Further details are in \appref{per-method-tables}.
By construction, Trivial hard-fails almost no traces ($0.2\%$), but correlates non-trivially with LLM-judges ($\rho = 0.34$ with best-model; \tabref{hard-fail-agreement-correlation}), suggesting that \textbf{some of what LLM-judges term coherence is captured by trace surface features like length
and lexical diversity.} 
This does not necessarily mean a false-positive, however. 
Conversely, ROSCOE hard-fails on $100\%$ of the corpus and its scalar anti-correlates weakly-to-moderately with every other method ($\rho \in [-0.40, -0.03]$), 
despite its internal metrics spanning informative ranges. 
LINC's performance heavily depends on its translation model. 
Only GPT-5.6 produces informative verdicts ($14.7\%$ entailment, $65.3\%$ non-entailment, $4.0\%$ contradiction hard-fail), but does so from a very partial formalisation (coverage median $0.006$). 
Qwen-3.5-9B and GLM-4.7 emit $85.6\%$ and $72.8\%$ unknown, respectively, and error out otherwise. 
All PRMs exhibit collapse: Llama-3.1 hard-fails $68.4\%$; Math-Shepherd $96.2\%$; Qwen-2.5 $100\%$. 
They were trained on mathematical reasoning, but our corpus, like most CoT, is mostly natural-language argumentation. 
Zero low-confidence emissions from Llama and Qwen indicate that these PRMs are confidently mislabelling, rather than expressing uncertainty. 
LLM-judges grouped in two: capable judges (Claude Opus, GPT-5.6, Gemma-4B) hard-fail $61$-$68\%$ of traces with moderate coherence scores ($3.42$-$4.14$) and agree modestly with each other (within-judge $\kappa$ up to $0.38$). 
The rest hard-fail $98$-$100\%$ of traces, with pairwise $\kappa$ below $0.01$ against the capable judges. 
\textbf{LLM-as-judge coherence measurements, on this corpus, are dominated by the judge model rather than by the trace.} 
Through its findings, Type-6 shines as a structural verifier:
derived contradiction is the most common hard-fail ($409$ traces, mean $5.28$ events per triggered trace), followed by reasoning-avoidance ($548$) and unresolved doubt at sink ($495$). 
Direct-implication-chain statistics show median on-chain fraction of $0.025$ (IQR $[0.016, 0.074]$): \textbf{only $\sim$3\% of proposition-level statements in a typical trace directly bear on the derivation of the sink}; 
the remainder is context, exposition, or restatement. 
Type-6's alternative graded reductions correlate strongly ($\rho \geq 0.70$; \tabref{type6-reductions}), indicating this degree of freedom does not qualitatively change trace ordering.

\section{Discussion}\label{sec:discussion}

Our main results and ablation show that CoT verification is brittle beyond definitions or corpus-level agreement, and depends strongly on the model used. 
Some results can be explained easily: Trivial partially predicts LLM-judge scores through surface features, because these features matter for reasoning. 
ROSCOE's degenerate hard-fail score captures a real property (its metric distributions are non-degenerate), but unrelated to the other methods' measurements. 
This is likely due to out-of-distribution effects, which is what we argue also caused PRM's collapse. 
However, informal logic and argumentation \textit{is} the formalism behind CoT. 
From this angle, although the failures from PRM and LINC ($74.9\%$ of our corpus was marked as `unknown') are perhaps expected, NLI-based methods should be more successful. 
True enough, although ROSCOE was insufficient, LLM-judges fared better. 
Their main failure mode was due to model quality and low--but not zero--performance. 
Thus, \textbf{CoT verification methods are not measuring the same thing} in terms of hard-fail decisions or of scalar scores, as they have large, model-and-data-split invariant disagreements. 

This is why we framed Type-6 around structure. Type-6 is not as a better point on the same `coherence' axis as the baselines, but it operates in a different regime, where the verifier imposes as few assumptions as possible on the trace, while still relying on a formal framework. 
Still, \textbf{the methods are not interchangeable}, but are complementary. 
For example, it is clear that LLM-judges methods are reasonable defaults for rough, approximate signals, but ultimately statistical and untrustworthy. PRM and LINC are effective within their domains of application. 
Type-6 overarches--but does not substitute--them in the sense that provides auditability at the statement-level; i.e., it answers `\emph{where} did the reasoning go astray?' or `is the reasoning good enough?', rather than `did it fail?'.

\section{Conclusion}\label{sec:conclusion}

Type-6 is a discrete, trace-driven update-semantic logic that verifies coherence at the level of the artefact: it treats the agent as the authoritative narrator, records commitments and contradictions with
statement-level provenance, and encodes contemporary CoT idiosincrasies like meandering and loopbacks as part of its rules. 
We built a verifier based on Type-6 and compared it against five baselines (Trivial, ROSCOE, LINC, PRM, LLM-as-judge) on a $996$-trace corpus spanning multiple-choice and argumentation. 
Among these, no pair produces interchangeable failure decisions ($\kappa < 0.4$ across all pairs; $\rho < 0.4$ across all scalar pairs). 
Type-6 is the most-agreed-with method: it correlates with PRMs ($\rho = 0.31$) and LLM-judges ($\rho = 0.25$) more strongly than those methods correlate with each other, while producing this signal from a stable structural framework rather than from an LLM's judgement of the trace. 
It also showed the ability to surface structural pathologies that black-box scorers miss entirely, such as reasoning avoidance (present in 548 of 996 traces) and on-chain unresolved commitments--where only $\sim$3\% of a trace's propositions actually drive its conclusion. 
Although developed for single-agent contemporary CoT, Type-6's update semantics and per-path decomposition extend without modification to multi-agent and structured reasoning systems such as tree-of-thoughts \citep{yao2023tree}. 
Whether Type-6 offers the same performance in that setup than what we observed here; 
as well as a completeness theorem against a paraconsistent semantic target (e.g., \citealt{arieli1998reasoning}), are left for future work.

\section*{Ethics Statement}\label{sec:ethics-statement}

Type-6 verifies the internal coherence of a reasoning trace, not the truth of its conclusions or the intent of its author, and should not be deployed as a factuality checker or a certificate of correctness. 
Conflating a high coherence score with trustworthiness would misuse the tool in exactly the way this paper argues against. 
Human annotators were paid \$25-47 USD/hour depending on seniority. 
They worked on a schedule they controlled over the review period. 
The traces they reviewed were drawn from public benchmarks and hand-crafted items and contain no personal, private, or identifying data. 
No annotator was asked to review harmful or distressing content; the reasoning traces concern logic problems, academic multiple-choice questions, and philosophical argumentation. 
Further annotation details are in \appref{extended-methods}. 
DebateLab-KIT and MMLU are readily available under ODC-By 1.0 and MIT licenses. 

\section*{Reproducibility Statement}\label{sec:reproducibility-statement}

The verifier is deterministic: given a fixed annotated trace, it produces identical reports on every run, and its behaviour is auditable at statement-level. 
We provide formal rules, metatheory, and connective semantics in \appref{t6-axioms}. 
All models, versions, parameters, hardware, and the annotation protocol are in \appref{extended-methods}. 
Our logic specification, parser, visualization code, annotation rubric, and annotated dataset are publicly available.\footnote{Artefacts are at \url{https://github.com/adewynter/type6}}

\section*{AI Use Statement}\label{app:ai-disclosure}

Since LLMs are the core object of evaluation, traces and part of the verifier system rely on them. 
They are also part of the standard SOTA approaches for the field, and so they were used as evaluators in the LINC, PRM, and LLM-as-a-judge baselines. 
Details on this are documented in \appref{extended-methods}. 
An LLM (Claude Opus-4.8) was used to aid in the development of the verifier codebase, and to polish writing. 
Unit tests were hand designed to ensure correctness of the code, along with manual verification. 
Writing was reviewed by the authors to ensure conciseness, clarity, and appropriate style. 
A separate LLM (GPT-5.6 Sol) was used to aid in the development of proofs. 
The proofs were verified by hand by the authors.

\DeclareRobustCommand{\DE}[3]{#2}
\DeclareRobustCommand{\VAN}[3]{#3}

\bibliography{biblio}
\bibliographystyle{iclr2027_conference}

\appendix
\section{Limitations}\label{app:limitations}

Our corpus scope covers multiple-choice and argumentation, with some minor mathematical reasoning questions (word problems). 
While the patterns found could be arguably extensible due to Type-6's weak assumption set (e.g., no anthropomorphism), the baseline result interpretation could vary. 
For example, PRM is geared towards arithmetic, not informal, reasoning. 
On that same vein, the ROSCOE and PRM hard-fail thresholds must be tuned in advance, which means a more complete evaluation would sweep parameters. Do remark, however, that such computational effort defeats the purpose of general-purpose CoT verification. 

This verification in Type-6 depends on the LLM annotation pipeline (Stages 1-3). 
It could be argued that annotation errors propagate into the verifier's reports. 
However, the high performance of GPT-5 (above 95\% accuracy) on their extraction suggests that it is probable this error propagation is not of a concern. 
What we would argue is a limitation is the dependence on closed models for annotation. For this we have released the corpus, in the hopes that others create and calibrate more extensible solutions. In our work we made an effort to incorporate open and closed models alike for reproducibility purposes. 

Finally, our Type-6 characterisation lacks a completeness theorem, even though we established rule-level consistency and paraconsistency by construction. We leave this for future work.

\section{Two Worked Examples}\label{app:worked-example}

\tabref{worked-example-outputs} shows a sample Type-6 annotated trace.
The graph produced by the verifier is in \figref{worked-example}.
The trace commits to two propositions ($p_{12}, p_{13}$). 
Both carry modal force ($K p_{12}$, $B p_{13}$).
Because the commitments agree with the annotator's labels ($\tau = F$ in both cases, $\alpha \in \{T, F\}$), no modal mismatch and no $K$-$K$ contradiction fires; 
Type-6 records the commitments per the \emph{agent as authoritative narrator} policy and moves on. 
What Type-6 does flag is structural: the trace opens an $\mathrm{IF}$ antecedent at line 2 ($\mathrm{IF}\ p_9\ \mathrm{OR}\ p_{10}\ \mathrm{AND}\ p_{11}$), and interrupted by $R$ without ever being closed by a matching $\mathrm{THEN}$ (\emph{malformed implication}). 
Hence $p_9, p_{10}$ are $U_k$ and remain unresolved at the sink. 
They lie off the direct implication chain, so they are surfaced as (off-chain) \emph{unresolved-unknowability} residuals rather than elevated to hard. 
The other methods draw other conclusions: 
LLM-as-judge  flags an unsupported conclusion and modal mismatches on lines accepted by Type-6 as internally consistent; 
ROSCOE and PRM hard-fail by threshold (r. NLI self-consistency, $<0.5$; per-step score $<0.2$. 
LINC errors on translation without producing a verdict. 
Only Type-6 passes ($3.5$), penalising the malformed implication  but recognising the derivation as structurally sound.

In \figref{patterns} we show a more theoretical worked example, related to the remaining appendices. 
It has four canonical patterns from the edge-semantics rules, preserving the same colouring from the applied example from \figref{worked-example}. 
Solid black: \textsf{logical}.
Dashed gray: \textsf{gap}. 
Dotted purple: \textsf{jump}. 
Dashed orange: \textsf{loopback}. 
Dashed teal: \textsf{meander}. 
Blank circles are structural; coloured are propositions (green source, red sink, blue intermediate). 
In the worst case (\appref{comp-worst}) enumeration is combinatorial in parallel and revisit edges; 
the cap $K$ and the sampling fallback bound this. 
In the mostly-linear regime (\appref{mostly-linear}), describing realistic CoT graphs, the pipeline is polynomial in trace length and, under typical sparsity, effectively linear. 
The distributional reporting of coherence, verification, and per-path outputs (\secref{eval}), is a design commitment: coherence is not a scalar, and premature reduction discards the structure the graph exposes. 

\begin{table}[t]
\centering
\small
\begin{tabular}{lll}
$T$  & \texttt{q}       & Explain how it relates to other thought experiments in philosophy of mind. \\
$U_k$ & \texttt{IF p9}   & If a nation of a billion people can implement (...) a mind, \\
$U_k$ & \texttt{OR p10}  & or if the same profile can be implemented in silicon, \\
$T$  & \texttt{AND p11} & and if the phenomenal properties supervene only on this profile, \\
$U_k$ & \texttt{N}       & Hmm. \\
$F$  & \texttt{K p12}   & Wait, I remember that functionalism entails multiple realizability trivially. \\
$F$  & \texttt{B p13}   & Maybe the Chinese Nation is a specific case of that entailment. \\
$U_k$ & \texttt{R}       & Let me try to structure this. \\
$U_k$ & \texttt{N}       & First, restate the argument. \\
$T$  & \texttt{p14}     & The argument shows that a system with (...) is a mind, \\
$T$  & \texttt{AND p15} & and yet nothing about the Chinese Nation intuitively resembles one. \\
$T$  & \texttt{THEN a}  & So, functional organization alone is not sufficient for (...) consciousness. \\ \midrule
\textbf{Method} & \textbf{Hard fail} & \textbf{Salient signal} \\
\midrule
\textit{Trivial} & yes & length=12, conn density=0.08 \\
\textit{ROSCOE} & yes & self-consistency 0.06 < 0.5; mean RA 0.15 < 0.4 \\
\textit{LINC} & no & errored: no conclusion found. \\
\textit{PRM} & yes & min step score 0.13 < 0.2 \\
\textit{LLM-judge} & yes & unsupported conclusion; modal mismatch; unresolved doubt \\
\textit{Type-6} & no & unresolved doubt; malformed implication \\
\bottomrule
\end{tabular}
\caption{Method-by-method output on a worked example asking how \cite{block1978troubles}'s Chinese Nation thought experiment relates to other arguments in philosophy. 
At the example's leftmost column we note the truth values of each line. 
These are not used by any of the baselines except Type-6' verifier. 
LLM-as-a-judge (GPT-5.6) provides a detailed rationale matching Type-6. 
Remark how each method save LINC (Opus-4.8) and Type-6 hard-fails due to low scoring (e.g. PRM; Math-shepherd). 
LINC does not hard fail because it fails to find a conclusion and hence it cannot draw a conclusion. 
Type-6 passes because it tolerates hedged reasoning ($p_{13}$) and the structural issues are off-chain.}
\label{tab:worked-example-outputs}
\end{table}

\begin{figure}
    \centering
    \includegraphics[width=\linewidth]{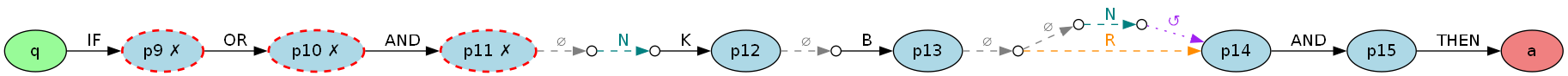}
    \caption{Sample multidigraph generated by the Type-6 verifier on \tabref{worked-example-outputs}. 
The code provided renders source and sink as green and red, respectively, and propositions in blue. 
Failures are bordered with dashed red and are marked with $\mathcal{X}$. 
Orange borders mark a soft contradiction. Jumps are denoted by $\varnothing$. 
}
    \label{fig:worked-example}
\end{figure}

\begin{figure}[t]
\centering
\begin{tikzpicture}[
  >=stealth, node distance=8mm,
  prop/.style={circle, draw, fill=blue!15, minimum size=6mm, inner sep=0pt, font=\small},
  src/.style ={circle, draw, fill=green!25, minimum size=6mm, inner sep=0pt, font=\small},
  snk/.style ={circle, draw, fill=red!25, minimum size=6mm, inner sep=0pt, font=\small},
  blank/.style={circle, draw, fill=white, minimum size=2mm, inner sep=0pt},
  logical/.style ={->, thick},
  gap/.style     ={->, dashed, gray},
  jump/.style    ={->, dotted, purple, thick},
  loop/.style    ={->, dashed, orange!80!black},
  mean/.style    ={->, dashed, teal},
  edgelabel/.style={font=\scriptsize, midway}
]
\node[prop] (a0) {$p_{13}$};
\node[blank, right=of a0] (a1) {};
\node[blank, above right=6mm and 10mm of a1] (a2) {};
\node[prop, right=14mm of a1] (a3) {$p_1$};
\draw[gap]    (a0) -- node[edgelabel, above]{gap} (a1);
\draw[loop]   (a1) -- node[edgelabel, below]{R}   (a3);
\draw[jump]   (a1) -- node[edgelabel, above left]{$\circlearrowleft$} (a2);
\draw[logical](a2) -- node[edgelabel, above right]{K} (a3);
\node[below=8mm of a1, font=\small] {(a) single $R$ with parallel op-chain};
\node[prop, below=22mm of a0] (b0) {$p_{13}$};
\node[blank, right=of b0] (b1) {};
\node[blank, right=of b1] (b2) {};
\node[prop, right=of b2] (b3) {$p_1$};
\draw[gap] (b0) -- node[edgelabel, above]{gap} (b1);
\draw[loop](b1) -- node[edgelabel, above]{R}   (b2);
\draw[loop](b2) -- node[edgelabel, above]{R}   (b3);
\node[below=6mm of b2, font=\small] {(b) chained $R$: one gap, serial composition};
\node[prop, below=22mm of b0] (c0) {$q$};
\node[blank, right=of c0] (c1) {};
\node[blank, right=of c1] (c2) {};
\node[prop, right=of c2] (c3) {$p_1$};
\draw[gap]    (c0) -- node[edgelabel, above]{gap} (c1);
\draw[mean]   (c1) -- node[edgelabel, above]{N}   (c2);
\draw[logical](c2) -- node[edgelabel, above]{K}   (c3);
\node[below=6mm of c2, font=\small] {(c) $N$ pivot: gap in, no gap out};
\node[prop, below=22mm of c0] (d0) {$p_1$};
\node[blank, right=of d0] (d1) {};
\node[prop, right=of d1] (d2) {$p_{14}$};
\node[blank, right=of d2] (d3) {};
\node[prop, right=of d3] (d1r) {$p_1$};
\draw[gap]    (d0) -- node[edgelabel, above]{gap} (d1);
\draw[logical](d1) -- node[edgelabel, above]{K}   (d2);
\draw[jump]   (d2) -- node[edgelabel, above]{$\circlearrowleft$} (d3);
\draw[logical](d3) -- node[edgelabel, above]{K}   (d1r);
\node[below=6mm of d2, font=\small] {(d) gap introduces new material; jump returns};
\end{tikzpicture}
\caption{Canonical edge-semantics patterns produced by $\textsc{BuildGraph}$ (\algref{build}).}
\label{fig:patterns}
\end{figure}

\section{Extended Methods}\label{app:extended-methods}

\subsection{Models, Hardware, and Parameters}\label{app:extended-models}

The models used in this work are GPT-5 (reasoning-large), GPT-5.6 (reasoning-sol), GPT-4.1 (gpt-41-longco-2025-04-14) Claude Opus-4.8, Gemma 4 e4B, GLM-4.7-Flash, and Llama 3.1. 
We used Qwen3 (8B, 17B) for trace generation and Qwen3.5 9B for LINC and LLM-as-a-judge. 
For ROSCOE we used DeBERTA-large. 
The GPT-class models as well as Claude were called through the Azure OpenAI API. 
The rest were called in a consumer-grade computer with 24 CPU cores and 60 GPU cores. 

For trace generation and annotation, the Qwen3 models were called with maximum tokens set at 12,000 and temperature $0.8$. 
Outside of this step, all models had their temperature set to zero whenever possible. 
Proposition extraction, truth-value annotation, and operator assignment were ran primarily on GPT-5 (maximum tokens: 15,000), and GPT-4.1 (maximum tokens=8000) was configured as a fallback for malformed structured outputs. 
It was invoked on two traces, and so these were discarded due to low annotation quality. 
The verify-and-correct pass ran also on GPT-5 with the same parameters. 
Prompts for all passes are in the repository. 

LLMs-as-judges used 1,024 maximum return tokens. 
LINC models used 4,092 tokens; the version of z3 used is 0.2.0. 
For ROSCOE, the self-consistency threshold was default at 0.5 and the reasoning alignment threshold at 0.4. Calls were with a batch size of 32. 
PRM models were set at a 2,048 maximum return tokens. 
All token parameters were tuned in a 5\% subset of the corpus before full usage. High failure rates (e.g., Llama on LINC) meant the model was discarded.

\section{Type-6: Axiomatic Presentation and Metatheory}\label{app:t6-axioms}

This appendix presents Type-6 as an axiom system with inference rules,
gives a worked example of nested doubt scoping, and proves its basic
metatheoretic properties. Throughout, $\Sigma$ ranges over states,
$\mathcal{H}$ over chain states, $\mathcal{K}$ over persistent
constraint sets, $\pi$ over propositions, and $v$ over $\{T, F\}$ unless
noted (all as defined in \secref{type6}).

\subsection{Judgements}\label{app:t6-judgements}

Let $c \in \mathcal{K}$ be $(\pi, \epsilon_t) \equiv \phi(p_1, \ldots, p_k)$ (definitional identity) or $\phi(p_1, \ldots, p_k) \to (\pi, \epsilon_t)$ (implication), where $\phi$ is a Boolean combination and $\epsilon_t \in \{+, -\}$ the target polarity. 
Then we say that the constraint set \textbf{forces} a value $v^*$, $\mathcal{K} \models_\Sigma \pi = v^\ast$, iff evaluating $\phi$ under $\Sigma$ via the tables of \appref{t6-connectives} yields a resolved $w \in \{T, U_k, F\}$  (not abstention $\ast$), with $v^\ast = w$ if $\epsilon_t = +$ and $v^\ast = \text{flip}(w)$ if $\epsilon_t = -$. 

Here, $\text{flip}(T) = F$, $\text{flip}(F) = T$, $\text{flip}(U_c) = U_c$, and $\text{flip}(U_k) = U_k$. 
For implications, forcing fires only when $\phi$ evaluates to $T$; when $\phi = F$ the implication is vacuously satisfied and forcing does not fire.

Hence, Type-6 has three judgement forms:
\begin{itemize}
\item $\Sigma \vdash \pi : (\tau, \sigma, \delta)$ -- in state $\Sigma$, proposition $\pi$ carries the indicated triple.
\item $(\Sigma, \mathcal{H}, \mathcal{K}) \xrightarrow{s} (\Sigma', \mathcal{H}', \mathcal{K}')$ -- statement $s$ transforms the configuration.
\item $\mathcal{K} \models_\Sigma \pi = v^\ast$: the constraint set forces value $v^\ast$ on $\pi$ under $\Sigma$.
\end{itemize}

\subsection{Inference Rules}\label{app:t6-rules}

We assume a global annotation function $\alpha$ assigning each statement-proposition pair a value in $\mathcal{V}$. 
Rules operate on $(\Sigma, \mathcal{H}, \mathcal{K})$; untouched components are elided from conclusions. 
Chain state $\mathcal{H} = (\chi, \phi)$ carries an optional open chain $\chi = ((p_1, \beta_1, \epsilon_1), \ldots)$ of operand triples (proposition, joining connective $\beta_i \in \{\mathrm{seed}, \mathrm{AND}, \mathrm{OR}\}$, and polarity $\epsilon_i \in \{+, -\}$) 
and a pending-$\mathrm{IF}$ flag $\phi \in \{\top, \bot\}$. 
We write $\chi \cdot (p, \beta, \epsilon)$ for extension and $\varnothing$ for the empty chain.

We now define two crucial operations in Type-6 and its verification. 

\begin{definition}[close]\label{def:t6-close}
Given $(\chi, \phi)$ and a target $\pi$ with polarity $\epsilon_t$, let $\Psi(\chi)$ be the Boolean combination obtained by evaluating operands left-to-right under AND-binds-tighter-than-OR, with $p_i$ as $p_i$ if $\epsilon_i = +$ else $\neg p_i$, joined by $\beta_i$. 
Then $\mathrm{close}(\phi, \chi, \pi, \epsilon_t)$ returns $(\pi, \epsilon_t) \equiv \Psi(\chi)$ if $\phi = \bot$, or $\Psi(\chi) \to (\pi, \epsilon_t)$ if $\phi = \top$.
\end{definition}

\begin{definition}[terminate]
\label{def:t6-terminate}
$\mathrm{terminate}(\phi, \chi, \mathcal{K})$ emits a \emph{malformed-implication} event and returns $\mathcal{K}$ if $\phi =\top$ (an unmatched $\mathrm{IF}$ is discarded); otherwise returns $\mathcal{K}$ unchanged.
\end{definition}

\textbf{Totality of state}. 
Note that $\Sigma_0$ assigns $(U_k, \bot, 0)$ to every proposition in the trace, so every antecedent $\Sigma(\pi) = (\tau_0, \sigma_0, \delta_0)$ below matches a well-defined triple; no rule needs a fresh-proposition case.

\subsubsection*{Modal rules}

\begin{figure}[H]
\centering
\small
\[
\begin{array}{c}
\infer[\text{(K-Commit)}]
  {(\Sigma, \mathcal{H}, \mathcal{K}) \xrightarrow{K\,\pi}
   (\Sigma[\pi \mapsto (\alpha, K, 0)], \mathcal{H}, \mathcal{K})}
  {\alpha \in \{T, F\}}
\\[2ex]
\infer[\text{(B-Commit)}]
  {(\Sigma, \mathcal{H}, \mathcal{K}) \xrightarrow{B\,\pi}
   (\Sigma[\pi \mapsto (\alpha, B, \delta_0)], \mathcal{H}, \mathcal{K})}
  {\Sigma(\pi) = (\tau_0, \sigma_0, \delta_0)}
\end{array}
\]
\caption{Modal rules. K-Commit clears $\delta$; B-Commit preserves it.
Both leave $\mathcal{H}, \mathcal{K}$ unchanged.}
\label{fig:t6-rules-modal}
\end{figure}

\subsubsection*{Assertion and structural rules}

\begin{figure}[H]
\centering
\small
\[
\begin{array}{c}
\infer[\text{(Bare-Assert)}]
  {(\Sigma, \mathcal{H}, \mathcal{K}) \xrightarrow{\pi}
   (\Sigma[\pi \mapsto (\alpha, \sigma_0, \delta_0)], \mathcal{H}, \mathcal{K})}
  {\Sigma(\pi) = (\tau_0, \sigma_0, \delta_0)}
\\[2ex]
\infer[\text{(Doubt)}]
  {(\Sigma, \mathcal{H}, \mathcal{K}) \xrightarrow{\mathord{?}\,\pi}
   (\Sigma[\pi \mapsto (\tau_0, \sigma_0, 1)], \mathcal{H}, \mathcal{K})}
  {\Sigma(\pi) = (\tau_0, \sigma_0, \delta_0)}
\\[2ex]
\infer[\text{(Loopback)}]
  {(\Sigma, (\chi, \phi), \mathcal{K}) \xrightarrow{R}
   (\Sigma, (\varnothing, \bot), \mathrm{terminate}(\phi, \chi, \mathcal{K}))}{}
\\[2ex]
\infer[\text{(Pivot)}]
  {(\Sigma, \mathcal{H}, \mathcal{K}) \xrightarrow{N}
   (\Sigma, \mathcal{H}, \mathcal{K})}{}
\end{array}
\]
\caption{Assertion and structural rules. Bare-Assert fires when a
statement asserts $\pi$ with no modal or connective operator in scope;
Doubt when $\mathord{?}$ is outermost. Pivot ($N$) preserves the
configuration. Loopback ($R$) preserves $\Sigma$ but resets the chain
and terminates any open $\mathrm{IF}$.}
\label{fig:t6-rules-assertion}
\end{figure}

Remark that \textbf{loopback preserves} $\Sigma$. Its role in revision is captured by the presence of an $R$-edge in the sub-trace, not by any state change; whether a subsequent commitment is a licensed revision is determined by the revision-segment predicate over the trace.

\subsubsection*{Chain-accumulator rules}

\begin{figure}[H]
\centering
\small
\[
\begin{array}{c}
\infer[\text{(Bare-Seed)}]
  {(\Sigma, (\varnothing, \bot), \mathcal{K}) \xrightarrow{\pi}
   (\Sigma', ((\pi, \mathrm{seed}, +), \bot), \mathcal{K})}
  {(\Sigma, \mathcal{H}, \mathcal{K}) \xrightarrow{\pi}
   (\Sigma', \mathcal{H}, \mathcal{K})}
\\[2ex]
\infer[\text{(And-Extend)}]
  {(\Sigma, (\chi, \phi), \mathcal{K}) \xrightarrow{\mathrm{AND}\,\pi}
   (\Sigma', (\chi \cdot (\pi, \mathrm{AND}, +), \phi), \mathcal{K})}
  {(\Sigma, \mathcal{H}, \mathcal{K}) \xrightarrow{\pi}
   (\Sigma', \mathcal{H}, \mathcal{K})}
\\[2ex]
\infer[\text{(Or-Extend)}]
  {(\Sigma, (\chi, \phi), \mathcal{K}) \xrightarrow{\mathrm{OR}\,\pi}
   (\Sigma', (\chi \cdot (\pi, \mathrm{OR}, +), \phi), \mathcal{K})}
  {(\Sigma, \mathcal{H}, \mathcal{K}) \xrightarrow{\pi}
   (\Sigma', \mathcal{H}, \mathcal{K})}
\\[2ex]
\infer[\text{(If-Open)}]
  {(\Sigma, (\chi, \phi), \mathcal{K}) \xrightarrow{\mathrm{IF}\,\pi}
   (\Sigma', ((\pi, \mathrm{seed}, +), \top), \mathcal{K}')}
  {(\Sigma, \mathcal{H}, \mathcal{K}) \xrightarrow{\pi}
   (\Sigma', \mathcal{H}, \mathcal{K})
   & \mathcal{K}' = \mathrm{terminate}(\phi, \chi, \mathcal{K})}
\\[2ex]
\infer[\text{(Then-Close)}]
  {(\Sigma, (\chi, \phi), \mathcal{K}) \xrightarrow{\mathrm{THEN}\,\pi}
   (\Sigma', (\varnothing, \bot), \mathcal{K} \cup \{\Phi\})}
  {(\Sigma, \mathcal{H}, \mathcal{K}) \xrightarrow{\pi}
   (\Sigma', \mathcal{H}, \mathcal{K})
   & \Phi = \mathrm{close}(\phi, \chi, \pi, +)}
\end{array}
\]
\caption{Chain-accumulator rules. Each carries a subgoal firing Bare-Assert on the operand or target (\figref{t6-rules-assertion}) and separately updates $\mathcal{H}$. 
If-Open terminates any prior open $\mathrm{IF}$ as malformed; 
Then-Close constructs $\Phi$ via $\mathrm{close}$ (\defref{t6-close}) and resets the chain.}\label{fig:t6-rules-chain}
\end{figure}

\subsubsection*{Modifier rules}

\begin{figure}[H]
\centering
\small
\[
\begin{array}{c}
\infer[\text{(NOT-Modal)}]
  {(\Sigma, \mathcal{H}, \mathcal{K}) \xrightarrow{\mathrm{NOT}\,\omega\,\pi}
   (\Sigma[\pi \mapsto (\bar\alpha, \sigma_1, \delta_1)], \mathcal{H}, \mathcal{K})}
  {\omega \in \{K, B\}
   & (\Sigma, \mathcal{H}, \mathcal{K}) \xrightarrow{\omega\,\pi}
     (\Sigma[\pi \mapsto (\alpha, \sigma_1, \delta_1)], \mathcal{H}, \mathcal{K})
   & \bar\alpha = \mathrm{flip}(\alpha)}
\\[2ex]
\infer[\text{(NOT-Operand)}]
  {(\Sigma, (\chi, \phi), \mathcal{K}) \xrightarrow{\beta\,\mathrm{NOT}\,\pi}
   (\Sigma', (\chi \cdot (\pi, \beta, -), \phi), \mathcal{K})}
  {\beta \in \{\mathrm{AND}, \mathrm{OR}\}
   & (\Sigma, \mathcal{H}, \mathcal{K}) \xrightarrow{\pi}
     (\Sigma', \mathcal{H}, \mathcal{K})}
\\[2ex]
\infer[\text{(NOT-Target)}]
  {(\Sigma, (\chi, \phi), \mathcal{K})
     \xrightarrow{\mathrm{THEN}\,\mathrm{NOT}\,\pi}
   (\Sigma', (\varnothing, \bot), \mathcal{K} \cup \{\Phi^-\})}
  {(\Sigma, \mathcal{H}, \mathcal{K}) \xrightarrow{\pi}
     (\Sigma', \mathcal{H}, \mathcal{K})
   & \Phi^- = \mathrm{close}(\phi, \chi, \pi, -)}
\end{array}
\]
\caption{Modifier rules. NOT-Modal composes with an inner K/B transition and flips the annotation ($\bar\alpha = \mathrm{flip}(\alpha)$). 
NOT-Operand records negated operand polarity;  
NOT-Target closes with a negated target. 
NOT-Operand and NOT-Target additionally emit an \emph{ambiguous-negated-connective} quality signal
(\secref{type6}).}
\label{fig:t6-rules-not}
\end{figure}

\subsubsection*{Back-propagation rules}

Back-propagation runs on $\mathcal{K}$ after any transition. 
Each $c \in \mathcal{K}$ is evaluated against $\Sigma$ under forcing; 
at most one rule applies per constraint per step.

\begin{figure}[H]
\centering
\small
\[
\begin{array}{c}
\infer[\text{(Backprop-Pin)}]
  {(\Sigma, \mathcal{H}, \mathcal{K}) \xrightarrow{\mathrm{backprop}}
   (\Sigma[\pi \mapsto (v^\ast, \sigma_0, \delta_0)], \mathcal{H}, \mathcal{K})}
  {\mathcal{K} \models_\Sigma \pi = v^\ast
   & \Sigma(\pi) = (U_k, \sigma_0, \delta_0)
   & v^\ast \in \{T, F\}}
\\[2ex]
\infer[\text{(Backprop-Conflict)}]
  {(\Sigma, \mathcal{H}, \mathcal{K}) \xrightarrow{\mathrm{backprop}}
   (\Sigma, \mathcal{H}, \mathcal{K})}
  {\mathcal{K} \models_\Sigma \pi = v^\ast
   & \Sigma(\pi) = (v, \sigma_0, \delta_0)
   & \{v, v^\ast\} = \{T, F\}}
\\[2ex]
\infer[\text{(Backprop-Unverifiable)}]
  {(\Sigma, \mathcal{H}, \mathcal{K}) \xrightarrow{\mathrm{backprop}}
   (\Sigma, \mathcal{H}, \mathcal{K})}
  {\mathcal{K} \models_\Sigma \pi = U_k
   & \Sigma(\pi) = (v, \sigma_0, \delta_0)
   & v \in \{T, F\}}
\end{array}
\]
\caption{Back-propagation rules. 
Pin promotes an $U_k$ target when its constraint forces a resolved value; 
Conflict retains the committed value under conflict; 
Unverifiable applies when the target is committed and the derivation is inconclusive.}
\label{fig:t6-rules-backprop}
\end{figure}

\textbf{Deduplication policy.} 
Note that Backprop-Conflict and Backprop-Unverifiable preserve $(\Sigma,
\mathcal{H}, \mathcal{K})$ and additionally emit an event (a \emph{derived-contradiction} and an \emph{unverifiable-derivation}, respectively) governed by the deduplication policy: 
each $c \in \mathcal{K}$ carries a \emph{last-logged-status} field (initialized $\bot$). Conflict and Unverifiable emit only on \emph{transition} of $c$'s status under $\Sigma$: if $c$'s outcome under $\Sigma_i$ equals its outcome under $\Sigma_{i-1}$, no event is emitted at step $i$.

\subsection{Nested Operator Scoping}\label{app:t6-nested}

Operator stacks are read left-to-right, each operator scoping over the stack to its right. 
For example, $\mathord{?}\,K\,\pi$ (`do I know $\pi$?'). 
Applying $K\,\pi$ (innermost) first, then $\mathord{?}$:
\begin{align*}
\Sigma
  &\xrightarrow{K\,\pi} \Sigma[\pi \mapsto (\alpha, K, 0)]
  \xrightarrow{\mathord{?}} \Sigma[\pi \mapsto (\alpha, K, 1)].
\end{align*}
The final state is $(\alpha, K, 1)$: 
knowledge committed on $\pi$, but the commitment is under doubt. 
On the other hand, $K\,\mathord{?}\,\pi$ (`I know I doubt $\pi$') applies $\mathord{?}\,\pi$ first, then $K$--whose target is the doubt-state, not $\pi$'s value. 
So $K$ does not update $\sigma(\pi)$; only the inner $\mathord{?}$ sets the flag:
\begin{align*}
\Sigma
  &\xrightarrow{\mathord{?}\,\pi} \Sigma[\pi \mapsto (\tau_0, \sigma_0, 1)]
  \xrightarrow{K} \Sigma[\pi \mapsto (\tau_0, \sigma_0, 1)].
\end{align*}
Hence the final state is $(\tau_0, \sigma_0, 1)$: the knowledge commitment attaches to the doubt, registered via $\delta = 1$; and the value on $\pi$ is untouched.

\subsection{Metatheoretic Properties}\label{app:t6-meta}

\begin{proposition}[Determinism]
\label{prop:t6-det}
For any configuration and statement $s$, at most one rule of \figstworef{t6-rules-modal}{t6-rules-not} applies, and the resulting configuration is unique. 
Back-propagation (\figref{t6-rules-backprop}) then fires deterministically: 
at most one of Pin, Conflict, Unverifiable applies per constraint per step.
\end{proposition}

\begin{proof}
The rules partition by the outermost operator of $s$: K-Commit iff $s =K\,\pi$; B-Commit iff $s = B\,\pi$; Bare-Assert iff $s$ is fully bare; 
Doubt iff $s = \mathord{?}\,\pi$ with no modal to its right; 
Pivot iff $s = N$; 
Loopback iff $s = R$; 
the NOT-rules iff the outermost operator is $\mathrm{NOT}$ in the corresponding position; 
the chain rules on connective-led statements. 
These are mutually exclusive by the grammar of $\mathcal{L}_{\text{Type-6}}$. 
Each yields an explicit functional update of $\Sigma, \mathcal{H}, \mathcal{K}$; 
NOT-rules recurse on the inner operator, unique by induction on operator depth. 
For back-propagation, the three rules partition on disjoint committed-value conditions:
$\tau(\pi) = U_k$ (Pin), $\tau(\pi) \in \{T,F\}$ with $v^\ast \in \{T,F\}, v^\ast \neq \tau(\pi)$ (Conflict), $\tau(\pi) \in \{T,F\}$ with $v^\ast = U_k$ (Unverifiable). 
Hence at most one applies per constraint per step.
\end{proof}

\begin{proposition}[Monotonicity of resolution]
\label{prop:t6-mono}
For any trace and $\pi$, the resolvedness of $\tau(\pi)$ is nondecreasing in $\sqsubseteq$ across $\Sigma_0, \Sigma_1, \dots$ \emph{modulo licensed revision}: 
absent a K-revision or bare-revision at step $i+1$, $\tau_i(\pi) \sqsubseteq \tau_{i+1}(\pi)$.
\end{proposition}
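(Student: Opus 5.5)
The plan is a case analysis on the single transition $\Sigma_i \to \Sigma_{i+1}$, using \propref{t6-det} so that each step is exactly one statement rule from \figstworef{t6-rules-modal}{t6-rules-not}, followed by back-propagation under \figref{t6-rules-backprop}. Monotonicity over the whole sequence $\Sigma_0, \Sigma_1, \dots$ then follows by induction on $i$, because $\sqsubseteq$ is reflexive and transitive on $\mathcal{V}$. The base case needs no work: $\Sigma_0$ assigns $U_k$ to every proposition, and by totality of state every rule's antecedent matches.

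First I will split the rules into those that leave $\tau(\pi)$ unchanged and those that write to it. The unchanged group is Doubt, which touches only $\delta$; Pivot and Loopback, which leave $\Sigma$ fixed and only reset $\mathcal{H}$ or extend $\mathcal{K}$; Backprop-Conflict and Backprop-Unverifiable, which explicitly retain $\Sigma$ under the agent-wins policy; and any rule whose target is some $\pi' \neq \pi$. In each of these $\tau_{i+1}(\pi) = \tau_i(\pi)$, so reflexivity settles them. Backprop-Pin applies only when $\tau_i(\pi) = U_k$ and the forced $v^\ast \in \{T,F\}$, so it gives $U_k \sqsubseteq v^\ast$ directly.

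The remaining rules write the annotation $\alpha$, or $\mathrm{flip}(\alpha)$, into $\tau$. These are K-Commit, B-Commit, Bare-Assert, NOT-Modal, and the chain rules, which reach Bare-Assert through their subgoal. I will treat them uniformly: such a write is monotone exactly when $\tau_i(\pi) \sqsubseteq \alpha$. If $\tau_i(\pi) \in \{U_c, U_k\}$ and $\alpha \in \{T,F\}$, the write moves upward. If $\alpha = \tau_i(\pi)$, it is a same-value reassertion and reflexivity applies. The only non-monotone writes are therefore those that replace a resolved value with its opposite, or replace a value with one strictly lower in $\sqsubseteq$. The plan is to show that each of these falls under the excluded hypothesis: an opposite-valued $K$ write (including via NOT-Modal) is a K-revision, and an opposite-valued bare or chain write is a bare-revision.

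I expect the main obstacle to be the residual downward writes that the statement's wording does not obviously exclude. One is B-Commit with $\alpha = U_k$ over a resolved $T$ or $F$, which the logic explicitly allows. The others are B-Commit, or a bare assertion, with $\alpha = U_c$, and an opposite-valued $B$ write such as a $B$-$B$ conflict. I would first try to absorb these into the hypothesis by reading `K-revision or bare-revision' as `any commitment or assertion that overwrites $\tau(\pi)$ with a value not above it'. That reading matches the licensing discussion in \secref{t6-updates-failures}. The $U_c$ cases I would additionally tie to the modal-mismatch events of \tabref{t6-contra}. If this reading is not admissible, the fallback is to state these cases as explicit exceptions, strengthening the hypothesis to `absent a revising commitment or a $U_c$/$U_k$ downgrade'. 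In that case I would note that the proposition, as literally worded, needs this refinement.
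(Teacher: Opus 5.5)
Your case split is the paper's own: Doubt, Pivot, Loopback and writes on $\pi' \neq \pi$ preserve $\tau(\pi)$; the chain rules reduce to their Bare-Assert subgoal; NOT-Modal reduces to the inner modal rule; Backprop-Pin is the strict promotion $U_k \sqsubseteq v^\ast$; and Conflict and Unverifiable retain $\Sigma$.

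The one difference is that the paper's proof never examines a K-Commit, B-Commit or Bare-Assert on $\pi$ itself. It tacitly charges every such write to the excluded hypothesis. Your diagnosis that the literal hypothesis does not cover all of these writes is right. A $B$ write cannot be read as a K- or bare-revision on any reading of those words. So skip the ``broad reading'' and state the strengthened hypothesis directly: no write at step $i+1$ sets $\tau(\pi)$ to an $\alpha$ with $\tau_i(\pi) \not\sqsubseteq \alpha$. That is exactly what the case analysis proves.

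Your list of exceptions is incomplete, although your general criterion already covers the missing cases.
\begin{itemize}
\item It omits bare and chain writes of $U_k$ over a resolved value. An $\mathrm{IF}$ antecedent annotated $U_k$, like $p_9$ in the worked example, does this whenever that proposition was resolved earlier.
\item It omits unlicensed opposite-valued writes: K-K contradictions, bare-reassertion conflicts, and unjustified $B{\to}K$ shifts. K-Commit and Bare-Assert carry no revision-segment side condition, so they still overwrite $\tau(\pi)$.
\end{itemize}
Consequently, your claim that an opposite-valued $K$ write ``is a K-revision'' holds only if K-revision includes the unlicensed case. The phrase ``modulo licensed revision'' suggests it does not.

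Finally, tying the $U_c$ cases to modal-mismatch events does not restore monotonicity. B-Commit still writes $U_c$ after the event is logged, and a bare $U_c$ assertion is not a modal mismatch at all.
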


\begin{proof}
By cases. Modal/bare commits on $\pi' \neq \pi$, Doubt, Pivot, Loopback, and N preserve $\tau(\pi)$. Chain rules carry a Bare-Assert subgoal, so reduce to that case. 
NOT-Modal composes with an inner modal rule; 
polarity flip does not change $\tau$'s position in $\sqsubseteq$ (both $T, F$ maximal). 
Backprop-Pin promotes $U_k \sqsubseteq v^\ast$ (a strict promotion); 
Backprop-Conflict and Backprop-Unverifiable retain the committed value.
\end{proof}

\begin{proposition}[Local consistency]
\label{prop:t6-consist}
No single rule application of \figstworef{t6-rules-modal}{t6-rules-backprop} introduces a
contradiction in the sense of \secref{type6}, unless the contradiction is explicitly detected and logged by the verifier.
\end{proposition}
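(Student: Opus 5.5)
The argument is a case analysis over the rules of \figstworef{t6-rules-modal}{t6-rules-backprop}, using the partition established in the proof of \propref{t6-det}. First we fix what ``introduces a contradiction'' means. By \secref{type6} and \tabref{t6-contra}, a contradiction is one of the following: a pair of opposite-valued commitments on the same $\pi$ with no revision segment between them ($K$-$K$, $B$-$B$, bare-reassertion, or unjustified modal shift); a derived value conflicting with a committed one; or a modal mismatch. The claim we prove is that whenever a rule application creates such a configuration, the verifier emits the corresponding event at that step. Equivalently, there is no \emph{silent} transition into a contradictory configuration.

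\textbf{Structural and chain rules.} Doubt, Pivot and Loopback leave $\tau$ and $\sigma$ untouched, so they cannot create a value conflict. Loopback alters $\mathcal{K}$ only through $\mathrm{terminate}$, and that always logs a malformed-implication event. The chain rules (Bare-Seed, And-Extend, Or-Extend, If-Open, Then-Close, and the NOT-Operand and NOT-Target variants) change $\mathcal{H}$ and may add a constraint $\Phi$ to $\mathcal{K}$. Adding a constraint changes no value in $\Sigma$; any conflict it induces shows up only under back-propagation, which we treat separately. Each of these rules has a Bare-Assert subgoal, so its effect on $\Sigma$ reduces to the Bare-Assert case. The NOT variants additionally emit the ambiguous-negated-connective signal.

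\textbf{Commit rules.} K-Commit, B-Commit, Bare-Assert and NOT-Modal overwrite $\tau(\pi)$ with $\alpha$ or $\mathrm{flip}(\alpha)$. We split on the prior triple $(\tau_0,\sigma_0,\delta_0)$. If $\tau_0=U_k$, or if $\tau_0$ equals the new value, no opposite-value pair is created; at most a redundant re-assertion occurs, which is only a quality signal. If $\tau_0\in\{T,F\}$ and the new value is the flip, then either the intervening sub-trace is a revision segment, in which case the update is a licensed revision and emits a revision event, or it is not, in which case the verifier emits the $K$-$K$, $B$-$B$, bare-reassertion or modal-shift event, chosen by the pair $(\sigma_0,\sigma)$. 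A $K$ with $\alpha\notin\{T,F\}$, or a $B$ with $\alpha=U_c$, is exactly the modal-mismatch trigger and is logged as such.

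\textbf{Back-propagation.} Pin fires only on a $U_k$ target, so it creates no conflict; by \propref{t6-mono} it is a strict promotion. Conflict is precisely the derived-contradiction case and Unverifiable is precisely the unverifiable-derivation case, and both keep the committed value. We expect this step to be the main obstacle, because of deduplication: an event is emitted only on a status transition. We handle it by induction on steps. The invariant is that the last-logged-status of each $c\in\mathcal{K}$ equals $c$'s outcome under the current $\Sigma$. Given the invariant, any contradiction $c$ exhibits either was logged at its first transition into the conflicting status or is new and is logged now, so no contradiction goes unlogged. A second subtlety is a single statement that both commits and triggers back-propagation; this is handled by sequencing the commit-rule analysis before the back-propagation pass.
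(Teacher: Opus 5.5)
Your overall plan matches the paper's: a rule-by-rule case analysis showing that every configuration matching a predicate of \tabref{t6-contra} comes with an emitted event. Your structural, chain, and back-propagation cases agree with the paper's. Your step-to-step induction for the deduplication policy is more careful than the paper's one-line appeal to it. You are also right that Doubt leaves only $\tau$ and $\sigma$ untouched, since it sets $\delta$.

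\textbf{The gap is in the commit-rule case.} You split on the current truth value $\tau_0$. The predicates of \tabref{t6-contra} are stated over the commitment history and over modal strengths, not over $\tau_0$. This is why the paper's proof cases on ``a prior differing $K$-value'', ``a prior differing $B$-value'', ``a prior differing bare value'', and on modal shifts. Two of your steps fail as written.

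First, the claim that if $\tau_0$ equals the new value then ``at most a redundant re-assertion occurs, which is only a quality signal'' is false. $K\,\pi{=}v$ followed by $B\,\pi{=}v$ with no intervening derivation is an unjustified modal shift (``$K{\to}B$ same/opp.'' in \tabref{t6-contra}; the ``unjustified downgrade'' row of \tabref{type6-event-incidence}). That is a soft contradiction. Your opening definition lost it by filing every modal shift under opposite-valued pairs.

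Second, $\tau_0 = U_k$ does not exclude an opposite-valued pair. B-Commit and Bare-Assert both accept $\alpha = U_k$, so consider $K\,\pi{=}T$; then $B\,\pi$ with $\alpha = U_k$; then $K\,\pi{=}F$. At the third statement $\tau_0 = U_k$, yet this is a $K$-$K$ contradiction: no constraint targets $\pi$ in between, so there is no revision segment.

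The repair is to key the split on the last commitment of each strength, and on the pair $(\sigma_0,\sigma)$ regardless of whether values agree. Then observe that each resulting predicate is one the verifier logs.

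Relatedly, your mapping ``chosen by $(\sigma_0,\sigma)$'' is not total. If $\tau_0$ was set by Backprop-Pin on a previously uncommitted $\pi$, then $\sigma_0 = \bot$ and no commit-level category applies. The conflict is instead caught as a derived contradiction in the same statement's back-propagation pass. Your sequencing remark needs to cover this case explicitly.
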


\begin{proof}
By case analysis:

\emph{K-Commit}: the side condition $\alpha \in \{T, F\}$ excludes $U_c, U_k$; 
those cases fire the logged modal-mismatch predicate rather than proceeding silently. 
A prior differing $K$-value with no revision segment fires the K-K contradiction; 
otherwise it is a licensed K-revision and nothing is logged.

\emph{B-Commit}: permitted on $U_c$ (logged as modal mismatch) and $U_k$ (permitted). 
A prior differing $B$-value with no revision segment fires a logged B-B conflict; 
unjustified modal shifts ($B \to K$, $K \to B$ without an intervening derivation) are logged as soft.

\emph{Bare-Assert}: permitted on all four values. 
A prior differing bare value with no revision segment fires the logged bare-reassertion conflict; 
otherwise a licensed bare-revision, nothing logged.

\emph{Doubt, Pivot}: do not modify $\Sigma, \mathcal{H}, \mathcal{K}$ and cannot introduce K-K, B-B, or modal-mismatch contradictions. 

\emph{Loopback}: does not modify $\Sigma$; 
resets $\mathcal{H}$ and may fire the malformed-implication event via $\mathrm{terminate}$, logged as a diagnostic, not a contradiction.

\emph{Chain rules}: f
ire their Bare-Assert subgoal (consistency reduces to that case); 
If-Open with a nonempty prior $\mathrm{IF}$ fires the malformed-implication event, logged, without silently corrupting $\mathcal{K}$.

\emph{NOT-Modal}: reduces to the inner rule (consistency by induction). 
NOT-Operand and NOT-Target additionally fire the ambiguous-negated-connective quality signal (non-blocking).

\emph{Backprop}: 
Pin promotes $U_k$ under a forcing constraint, no contradiction; any later inconsistency with a subsequent commit is detected at that later step. 
Conflict logs a derived-contradiction and retains the committed value; 
Unverifiable logs its event without modifying $\Sigma$. 
All three track status transitions per the deduplication policy.

In every case the rule leaves $\Sigma$ consistent, or the inconsistency is detected and logged before the update completes. This concludes the proof.
\end{proof}

\begin{remark}[Completeness]
\label{rem:t6-complete}
Type-6 is deliberately \emph{not} complete with respect to a classical propositional or modal semantics: 
the four-valued truth space and the tolerance of enthymemes preclude a completeness theorem of the DEL variety. 
Establishing completeness would require fixing a target semantics (e.g., a paraconsistent bilattice in the spirit of \citealt{belnap1977useful}). 
We leave that for future work.
\end{remark}

\begin{remark}[Paraconsistency]
\label{rem:t6-para}
A K-K contradiction on $\pi$ does not license arbitrary further conclusions: 
contradictions are logged and localized to $\pi$, and inference on other propositions is unaffected. 
\textbf{Type-6 is thus paraconsistent by construction}. 
This is appropriate for CoT traces, which are frequently locally inconsistent without being globally uninformative.
\end{remark}

\subsection{Semantics of Logical Connectives}\label{app:t6-connectives}

In this section we discuss the truth tables used by back-propagation over $\mathrm{THEN}$-led constraints. 
Operands draw from $\mathcal{V} = \{T, U_c, U_k, F\}$; $U_c$ propagates as a computational \emph{abstention} ($\ast$): 
any $U_c$ operand makes the enclosing derivation abstain rather than propagate. 
Negation on the underdetermined values is the identity, $\neg U_c = U_c$, $\neg U_k = U_k$, consistent with the no-op polarity flip in NOT-Operand and NOT-Target.

\begin{center}
\small
\begin{tabular}{c|cccc}
$\land$ & $T$   & $U_k$ & $U_c$ & $F$ \\
\hline
$T$     & $T$   & $U_k$ & $\ast$ & $F$ \\
$U_k$   & $U_k$ & $U_k$ & $\ast$ & $F$ \\
$U_c$   & $\ast$ & $\ast$ & $\ast$ & $F$ \\
$F$     & $F$   & $F$   & $F$   & $F$ \\
\end{tabular}
\qquad
\begin{tabular}{c|cccc}
$\lor$  & $T$ & $U_k$ & $U_c$ & $F$   \\
\hline
$T$     & $T$ & $T$   & $T$   & $T$   \\
$U_k$   & $T$ & $U_k$ & $\ast$ & $U_k$ \\
$U_c$   & $T$ & $\ast$ & $\ast$ & $\ast$ \\
$F$     & $T$ & $U_k$ & $\ast$ & $F$   \\
\end{tabular}
\qquad
\begin{tabular}{c|cccc}
$\to$   & $T$ & $U_k$ & $U_c$ & $F$   \\
\hline
$T$     & $T$ & $U_k$ & $\ast$ & $F$   \\
$U_k$   & $T$ & $U_k$ & $\ast$ & $U_k$ \\
$U_c$   & $T$ & $\ast$ & $\ast$ & $\ast$ \\
$F$     & $T$ & $T$   & $T$   & $T$   \\
\end{tabular}
\end{center}

Entries $\ast$ mark configurations where a $U_c$ operand renders the constraint uninterpretable at the value level; 
back-propagation abstains and logs no value-level event. 
Note $F$ dominates $U_c$ in $\land$, $T$ dominates $U_c$ in $\lor$, and $F$ dominates $U_c$ in $\to$ (vacuous implication). 
The strong-Kleene short-circuit conditions extended with $U_c$ as an abstaining fourth value. 
The $\to$ table is derivable from $\lor$ via $\lnot p \lor q$ under negation-as-identity on $U_c/U_k$. 
These tables suffice for $\mathrm{THEN}$-led constraints; $k$-ary generalizations compose associatively.

\section{Complexity Analysis}\label{app:complexity}

This appendix bounds the parsing, construction, verification, and path-analysis phases of the verifier. 
We prove general worst-case bounds (\thmref{total-worst}) and tighter bounds under the structural regularities of realistic CoT traces (the `mostly linear' regime;\thmref{total-linear}). 
It also introduces the pseudocode version of our verifier. 

\subsection{Preliminaries and Notation}\label{app:comp-prelim}

Following \secref{type6}, a CoT trace is a finite sequence $R = (s_1, \dots, s_n)$ with each
\[
s_i = \omega_{i,1}\,\omega_{i,2}\,\cdots\,\omega_{i,k_i}\;\pi_i \;\colon\; \varsigma_i,
\]
where $\omega_{i,j} \in \Omega = \Omega_u \cup \Omega_b \cup\{\mathrm{NOT}\}$, $\pi_i$ is an optional proposition in $\mathcal{P}$, and $\varsigma_i$ is an uninterpreted sentence. 
Operators partition into unary $\Omega_u = \{B, K, N, \mathord{?}, R\}$ and binary $\Omega_b =
\{\mathrm{THEN}, \mathrm{AND}, \mathrm{OR}, \mathrm{IF}\}$, with $\mathrm{NOT}$ a modifier. 

Our verifier takes in $n = |R|$ number of statements; $t$ maximum tokens in any single statement with $T = \sum_i (k_i + [\pi_i \neq \varnothing])$ total token count; and $p = |\mathcal{P}_R|$ distinct propositions.

Construction of the verifier yields a labelled multi-digraph $G = (V, E, \lambda)$ with $V = V_p \sqcup V_b$ (proposition nodes bijective with propositions, blank nodes separating consecutive operators). 
Every operator is an edge; parallel edges are preserved. 
Each $e$ carries a label $\lambda(e) \in \Omega$ and a kind $\kappa(e) \in \{\textsf{logical}, \textsf{gap}, \textsf{jump}, \textsf{loopback}, \textsf{meander}, \textsf{unknown}\}$.

This graph is parametrised by $|V|, |E|$ (node and edge counts). 
We denote $\Delta = \max_v \deg^+(v)$ the maximum out-degree and $\mu = \max_{u,v} |\{k : (u,v,k) \in E\}|$ its maximum edge multiplicity. 
We also assign $L$ loopback budget (\texttt{max\_loopbacks}, default 1), $J$ number of revisit edges (\textsf{jump}, \textsf{loopback}, and binary-connective edges combined). 
Crucially, we also cap the enumeration at $K$ (\texttt{max\_paths}), and the sample count at $S$; $\ell$: sampling walk-length cap. 

\textbf{Edge operations.} To build and traverse the graph, there are six edge insertion rules. 
Let $s_i$ be the current statement, and $u$ the node reached after $s_{i-1}$.
The operations are then: 

\begin{itemize}[leftmargin=*]
\item\textbf{Logical.} An explicit known operator produces a \textsf{logical}
edge (adjacent $\mathrm{NOT}$ folded in).
\item\textbf{Gap.} Two consecutive statements with no binary connector, whose target introduces \emph{new} material, are joined by a \textsf{gap}. 
Gaps are neutral.
\item\textbf{Jump.} Same condition as gap, but the target \emph{revisits} a seen proposition.
\item\textbf{Loopback.} Emitted by $R$; target is the proposition of the next statement carrying one. 
Chained $R$s compose serially.
\item\textbf{Meander.} Emitted by $N$; marks a pivot, enforcing a gap into $N$ but not out of it when the following statement carries operators. 
\item\textbf{Unknown.} Any unrecognized token becomes a labelled \textsf{unknown} edge, with a warning.
\end{itemize}

The \textsf{gap}/\textsf{jump} distinction is central to the mostly-linear analysis: both arise without a connective, but gap introduces (forward, fresh material) while jump returns (backward, prior material).

\subsection{General Worst-Case Bounds}\label{app:comp-worst}

In this section we prove the following: 

\begin{theorem}[Worst-case total]
\label{thm:total-worst}
The full pipeline runs in
\[
T_{\mathrm{total}} = O(nt) + T_{\mathrm{verify}} + T_{\mathrm{paths}} +
  O(N \cdot \bar p)
\]
steps, with $T_{\mathrm{verify}} = O(n^2 t)$, $T_{\mathrm{paths}}$ either
$T_{\mathrm{enum}}$ or $T_{\mathrm{sample}}$, $N$ the number of scored paths, and $\bar p$
the mean path length; space $S_{\mathrm{total}} = O(nt + N \cdot \bar p)$.
\end{theorem}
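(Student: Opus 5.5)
The bound splits along the pipeline's phases: parsing, graph construction, sequential verification, path analysis (enumeration or sampling), and per-path scoring. Summing the phase costs gives the bound. For time, we bound each phase and add. For space, we observe that only the parsed trace and graph, of size $O(nt)$, and the stored scored paths, of size $O(N\bar p)$, persist across phases.

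\textbf{Parsing and construction, $O(nt)$.} Each statement $s_i$ has at most $k_i+1\le t$ tokens. Tokenising it and classifying each token as an operator in $\Omega$, a proposition, or unknown is $O(1)$ per token, given a hash map from proposition symbols to nodes. This gives $O(T)\subseteq O(nt)$. Graph construction then scans the statements once. Each operator emits one edge and at most one blank node, so $|V|,|E| = O(T)$. Choosing among \textsf{gap}, \textsf{jump}, \textsf{loopback} and \textsf{meander} needs only two facts: whether the target was already seen (an $O(1)$ expected membership test in a seen-set) and the operators of the immediately preceding and next statements. Chained $R$s are resolved by a forward pointer to the next proposition-bearing statement, computed in one backward pass. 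Construction is therefore also $O(nt)$ time and space.

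\textbf{Verification, $O(n^2t)$.} The sequential walk applies exactly one rule per statement (Proposition~\ref{prop:t6-det}), and each rule is a constant number of updates to $\Sigma$ and $\mathcal{H}$, plus $O(t)$ for the NOT-recursion over the operator stack. The dominant cost is back-propagation, which runs over all of $\mathcal{K}$ after every transition. Each THEN adds one constraint, so $|\mathcal{K}|\le n$. Each constraint has at most $n$ operands, since operands come from chain extensions, and is evaluated with the tables of Appendix~\ref{app:t6-connectives}. A naive bound is therefore $O(n)$ constraints times $O(n)$ operands per step, which is $O(n^3)$ overall. To reach $O(n^2t)$, I would charge evaluation to operand occurrences instead. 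The total size of all constraints is at most $T=O(nt)$, because each operand occurrence is a token consumed by exactly one closed chain. A full pass over $\mathcal{K}$ therefore costs $O(nt)$, and $n$ such passes cost $O(n^2t)$. The remaining checks fit in the same budget. The deduplication status field makes event emission $O(1)$ per constraint. The revision-segment predicate can be decided by maintaining, per proposition, the index of its last commitment together with prefix counts of jump or loopback edges and of constraints targeting $\pi$, which makes each check $O(1)$ (or $O(n)$, still within budget).

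\textbf{Paths and scoring.} $T_{\mathrm{paths}}$ is left abstract in the statement, so it is simply added; its own bounds, $T_{\mathrm{enum}}$ in terms of $K,\mu,\Delta,J,L$ and $T_{\mathrm{sample}}=O(S\ell)$, belong to separate lemmas. Scoring projects the sequential event log onto each of the $N$ paths. Precompute an index from statement to events, and stream each path in $O(\text{length})$. The scoring cost is then $\sum_{\text{paths}}|P| = O(N\bar p)$ time, and storing those paths gives the $N\bar p$ space term. The main obstacle is the back-propagation amortisation. The key step is to show that the total constraint size is $O(T)$ rather than $O(n^2)$. This needs the fact that Then-Close resets the chain, so no operand is shared between constraints, and that If-Open and Loopback discard open chains via $\mathrm{terminate}$. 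If that argument fails, I would instead fall back to the cruder $|\mathcal{K}|\cdot n$ bound with $|\mathcal{K}|\le n$ and absorb the loss by noting $k_i\le t$.
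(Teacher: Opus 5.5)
Your proof has the same skeleton as the paper's. The paper proves the theorem by summing the phase lemmas: $O(nt)$ construction (Lemma~\ref{lem:build}), $O(n^2t)$ verification (Lemma~\ref{lem:verifier-amortized}), $T_{\mathrm{paths}}$ left abstract, and $O(N\bar p)$ scoring.

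The genuine difference is the verification term. The paper bounds back-propagation at step $i$ by $|\mathcal{K}_i|\cdot\kappa_{\max}$ (Proposition~\ref{prop:backprop}). It then takes $\kappa_{\max}\le t$, justified by $|\chi|\le k$ in Proposition~\ref{prop:vs-update}; in effect it treats a constraint's operand chain as living inside a single statement. You instead charge evaluation to operand occurrences. Each statement adds at most one triple to the open chain, and Then-Close, If-Open and Loopback reset or discard it. So operand occurrences are disjoint across constraints, $\sum_{c\in\mathcal{K}}|c|\le n+|\mathcal{K}|=O(n)\subseteq O(T)$, and each full pass costs $O(T)$. The same disjointness also amortises the $O(|\chi|)$ cost of $\mathrm{close}$ in Then-Close, which you counted as $O(t)$.

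Your route is the more robust one. The trace $p_1;\ \mathrm{AND}\,p_2;\ \dots;\ \mathrm{AND}\,p_m;\ \mathrm{THEN}\,a$ closes one constraint with $m$ operands, although no statement has more than two tokens. So $\kappa_{\max}\le t$ fails in general. If such a chain of length $\Theta(n)$ coexists with $\Theta(n)$ short constraints, the paper's expression $n\cdot|\mathcal{K}|\cdot\kappa_{\max}$ is $\Theta(n^3)$. Your sum-of-sizes charging is what actually delivers $O(n^2t)$, and in fact $O(n^2+nt)$. The paper's per-step form $|\mathcal{K}_i|\cdot\kappa_{\max}$ does have one use: the mostly-linear analysis reuses it under the explicit assumption $\kappa_{\max}=O(1)$.

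Your stated fallback would not work. $k_i\le t$ bounds operators within one statement, not operands accumulated across statements, so it cannot absorb the extra factor of $n$. It is exactly the step your main argument avoids.

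Two minor bookkeeping points:
\begin{itemize}
\item Your space accounting lists only the graph and the scored paths. It should also include the verifier state: $\Sigma$, $\mathcal{K}$ (which is $O(T)$ by your own count), and the event log your scoring step reads. It should also include the transient DFS stack, $O(|V|+L+J)$, or the sampling buffer, $O(\ell_{\max})$. The paper charges $O(n)$ to verification and explicitly absorbs the path-phase auxiliary space into $O(nt)$; you should do the same.
\item Your $T_{\mathrm{sample}}=O(S\ell)$ drops the per-step $\Delta\cdot\ell_{\max}$ factor of Lemma~\ref{lem:sample}. Since the theorem leaves $T_{\mathrm{paths}}$ abstract, this does not affect the result.
\end{itemize}
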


For this we need to prove parsing and construction costs (\appref{comp-build}), and the cost of path enumeration and sampling (\appref{comp-enum-sampling}); as well as back-propagation (\appref{comp-verifier}).

\subsubsection{Parsing and Construction}\label{app:comp-build}

We prove the following about graph construction (\algref{build}, $\textsc{BuildGraph}$) and its parsing subroutine (\algref{parse}; $\textsc{ParseLine}$):

\begin{lemma}[Graph construction]\label{lem:build}
$\textsc{BuildGraph}$ (\algref{build}) runs in $O(nt)$ time (excluding verifier work,
bounded in \appref{comp-verifier}) with $|V|, |E| \in O(nt)$. 

Moreover, when $t = O(1)$--the typical CoT regime--construction is $O(n)$ time
and space, with $|V|, |E| \in O(n)$.
\end{lemma}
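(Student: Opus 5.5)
The argument is a direct accounting over the single left-to-right pass that $\textsc{BuildGraph}$ makes over $R = (s_1,\dots,s_n)$, charging every unit of work to a token of some statement. First I bound the parsing subroutine: $\textsc{ParseLine}$ on $s_i$ scans its $k_i + [\pi_i \neq \varnothing] \le t$ tokens once. Each token is classified as a unary operator, a binary connective, $\mathrm{NOT}$, or a proposition by constant-time membership tests in the fixed finite sets $\Omega_u$, $\Omega_b$, $\{\mathrm{NOT}\}$. Folding an adjacent $\mathrm{NOT}$ into the following operator needs only a one-token lookahead. Hence parsing costs $O(t)$ per statement and $O(nt)$ in total; more sharply it is $O(T) \subseteq O(nt)$, since $T \le nt$.

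Next I bound graph size. By construction, every operator token becomes exactly one edge and consecutive operators are separated by one blank node, so statement $s_i$ contributes at most $k_i$ edges and $O(k_i)$ blank nodes. Across statements, at most one structural edge (\textsf{gap}, \textsf{jump}, \textsf{loopback}, \textsf{meander}, or \textsf{unknown}) is inserted between consecutive statements, giving $O(n)$ more. Proposition nodes are in bijection with $\mathcal{P}_R$, so $p \le n$. Summing, $|V|, |E| \in O(T + n) = O(nt)$.

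Then I argue that each insertion rule takes $O(1)$ amortised time, which is the step I expect to be the main obstacle. Three cases need care. The \textsf{gap}/\textsf{jump} decision requires knowing whether the target proposition has been seen; I would maintain a hash set, or an array indexed by proposition symbol since symbols are assigned in Stage 2, giving expected $O(1)$ lookup. The \textsf{loopback} rule targets ``the proposition of the next statement carrying one'', which seems to require lookahead. I would avoid this by keeping a pending-$R$ list that is resolved when the next proposition-bearing statement is parsed. Chained $R$s compose serially, so each pending $R$ is resolved exactly once, and the total cost is linear in the number of $R$ tokens. The \textsf{meander} rule needs only the flag of whether the following statement carries operators, which is again an $O(1)$ lookahead or deferral. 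Multigraph insertion with preserved parallel edges is $O(1)$ with adjacency lists. Multiplying $O(1)$ per token over $O(nt)$ tokens gives $O(nt)$ time, excluding the verifier hook, which the statement defers to \appref{comp-verifier}.

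Finally, for the regime $t = O(1)$, substituting into the bounds above gives $O(n)$ time and $|V|, |E| \in O(n)$. Space is $O(|V| + |E|)$ for the graph plus $O(p) \subseteq O(n)$ for the seen-set and the pending list, so space is also $O(n)$.
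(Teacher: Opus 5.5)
Your proposal is correct and follows the paper's own argument. $\textsc{ParseLine}$ costs $O(t)$ per statement, each statement adds $O(t)$ nodes and edges with $O(1)$ work per insertion plus $O(1)$ bookkeeping for the seen-set, the meander flag and the pending loopback tail, and summing over $n$ statements gives $O(nt)$; your treatment of that bookkeeping is more explicit than the paper's. Two small corrections, neither of which affects the bound: a single iteration can insert two inter-statement edges (the pending loopback $\text{Rtail}\to\pi$ followed by a \textsf{jump}), so the count is $O(1)$ per statement rather than at most one, and you should also charge the post-loop source/sink resolution, which the paper takes as $O(1)$ and which is at most $O(|V|)$ in any case.
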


\begin{proof}
For this, remark that $\textsc{ParseLine}$ runs in $O(t)$ and produces at most $t$ operators. 
Over $n$ statements, this provides the necessary bound. 

A minor nuance is that the adds (e.g., lines 6, 11, 14, etc.) are also constant time and add $\leq 1$ blank and $\leq 1$ edge. 
Given the operator list bound of $O(t)$, each statement contributes $O(t)$ nodes and edges and $O(t)$ work (with $O(1)$ bookkeeping for $\text{seen}$, $\text{meander}$, $\text{Rtail}$, source/sink). 
Summing gives $O(nt)$ and $|V|, |E| \in O(nt)$; post-loop source/sink resolution is $O(1)$.
\end{proof}

\begin{algorithm}[t]
\caption{$\textsc{ParseLine}(\ell)$}\label{alg:parse}
\begin{algorithmic}[1]
\State split $\ell$ on \texttt{`:'} into $(\text{left}, \varsigma)$
\State tokenize $\text{left}$ into $\tau_1, \dots, \tau_k$
\If{$\tau_k$ ends in \texttt{`?'} and $\tau_k \notin \Omega$}
  \State strip trailing \texttt{`?'}; append $\mathord{?}$ to the op list
\EndIf
\If{$\tau_k \in \Omega$}
  \State $\pi \gets \varnothing$; ops $\gets \tau_{1..k}$
\Else
  \State $\pi \gets \tau_k$; ops $\gets \tau_{1..k-1}$
\EndIf
\State fold each $\mathrm{NOT}$ into its preceding operator
\State \Return $(\text{ops}, \pi, \varsigma)$
\end{algorithmic}
\end{algorithm}

\begin{algorithm}[t]
\caption{$\textsc{BuildGraph}(R)$}\label{alg:build}
\begin{algorithmic}[1]
\State initialize empty $G$; $\text{prev} \gets \varnothing$;
$\text{seen} \gets \emptyset$
\State $\text{meander} \gets \bot$; $\text{Rtail} \gets \varnothing$;
initialize verifier state $\mathrm{vs}$
\For{$i \in 1..n$}
  \State $(\text{ops}, \pi, \varsigma) \gets \textsc{ParseLine}(s_i)$
  \If{$\pi \notin \text{seen}$ and $\pi \neq \varnothing$}
    \State add node $\pi$; $\text{seen} \gets \text{seen} \cup \{\pi\}$;
      $\text{revisit} \gets \bot$
  \ElsIf{$\pi \neq \varnothing$}
    \State $\text{revisit} \gets \top$
  \EndIf
  \If{$\text{Rtail} \neq \varnothing$ and $\pi \neq \varnothing$}
    \State add loopback $\text{Rtail} \to \pi$; $\text{Rtail} \gets
      \varnothing$; $\text{revisit} \gets \top$
  \EndIf
  \If{$s_i$ is a single-$R$ statement and $\text{Rtail} \neq \varnothing$}
    \State add blank $b'$; add loopback $\text{Rtail} \to b'$;
      $\text{Rtail} \gets b'$; \textbf{continue}
  \EndIf
  \If{$\text{ops} = \varnothing$} \Comment{bare proposition}
    \If{$\text{revisit}$} add jump $\text{prev} \to \pi$
    \Else\ add gap $\text{prev} \to \pi$ \EndIf
  \Else \Comment{statement carries operators}
    \State $\text{needsGap} \gets (\text{prev} \neq \varnothing) \land
      (\text{ops}_1 \notin \Omega_b) \land \neg\text{revisit} \land
      \neg\text{meander}$
    \State $c \gets \text{prev}$
    \If{$\text{needsGap}$} add blank $b$; add gap $\text{prev} \to b$;
      $c \gets b$
    \ElsIf{$\text{revisit}$} add blank $b$; add jump $\text{prev} \to b$;
      $c \gets b$ \EndIf
    \For{$j \in 1..|\text{ops}|$}
      \If{$j = |\text{ops}|$ and $\pi = \varnothing$ and $\text{ops}_j = R$}
        \State $\text{Rtail} \gets c$; \textbf{break} \EndIf
      \State $t \gets \pi$ if last op and $\pi \neq \varnothing$, else new blank
      \State add edge $c \xrightarrow{\text{ops}_j} t$ of appropriate kind;
        $c \gets t$
    \EndFor
  \EndIf
  \State \textsc{VerifierStep}($\mathrm{vs}$, ops, $\pi$, $\alpha_i$)
    \Comment{\appref{comp-verifier}}
  \State $\text{prev} \gets \pi$ if $\pi \neq \varnothing$ else $c$
  \State $\text{meander} \gets (\text{ops} \neq \varnothing \land
    \text{ops}_{\text{last}} = N)$
\EndFor
\State resolve source and sink; fail if unresolved
\State \Return $(G, \mathrm{vs})$
\end{algorithmic}
\end{algorithm}

\subsubsection{Path Enumeration and Sampling}\label{app:comp-enum-sampling}

In this section we prove the following:
\begin{lemma}[Enumeration cost]\label{lem:enum}
Let $N_{\mathrm{emit}} = \min(K, N_{\mathrm{total}})$. Then
$\textsc{EnumeratePaths}$ (\algref{enum}) runs in
\[
T_{\mathrm{enum}} = O\!\left(N_{\mathrm{emit}} \cdot (|V| + L + J) \cdot
  \Delta + W_{\mathrm{dead}}\right),
\]
with $W_{\mathrm{dead}}$ the frame work on non-yielding branches, in
$O(|V| + L + J)$ auxiliary space.
\end{lemma}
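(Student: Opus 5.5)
The plan is to charge all work of \textsc{EnumeratePaths} to stack frames of the underlying depth-first search and then split those frames into \emph{yielding} ones (those on the root-to-leaf branch that ends in an emitted path) and \emph{dead} ones (those whose subtree contains no emitted path). I take \algref{enum} to be the usual iterative DFS over edge-simple paths from source to sink. It keeps a stack of frames, each frame holding a node and an iterator over its out-edges. It also maintains a set of used edges, a loopback counter bounded by $L$, and a revisit check so that nodes are re-entered only through \textsf{jump}, \textsf{loopback}, or binary-connective edges. The DFS stops once $K$ paths have been emitted.

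\textbf{Step 1 (depth bound).} I will bound the length of any stack, and hence of any emitted path. A node can be entered through a non-revisit edge at most once per path, so those entries contribute at most $|V|$ frames. Every further entry uses a revisit edge. Because paths are edge-simple, each of the $J$ revisit edges can be used at most once, and the loopback budget separately caps loopback uses at $L$. The depth is therefore $O(|V| + L + J)$. This depth, together with the used-edge set and the counter (only the $O(\text{depth})$ edges currently on the stack need to be stored, for instance in a hash set or as per-edge flags cleared on pop), gives the $O(|V| + L + J)$ auxiliary space bound, not counting the output.

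\textbf{Step 2 (yielding work).} Fix one emitted path. The frames on its stack at the moment of emission number $O(|V|+L+J)$ by Step 1. Each such frame scans at most $\Delta$ out-edges, counting parallel edges with multiplicity up to $\mu$, which is already included in $\deg^+$. Each scan does $O(1)$ checks for membership in the used-edge set, the budget, and revisit legality. So one path is paid for in $O((|V|+L+J)\Delta)$ work, and at most $N_{\mathrm{emit}} = \min(K, N_{\mathrm{total}})$ paths are emitted because of the cap. One point needs care: frames shared by several emitted paths are charged to each of them, which only overcounts, so the bound remains valid. I will also state explicitly that copying a path at emission costs $O(\text{depth})$ and is absorbed into this term.

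\textbf{Step 3 (dead work).} Every frame not charged in Step 2 belongs to a subtree that emits nothing. I define its scanning and push/pop cost to be $W_{\mathrm{dead}}$. This is a definition, not an estimate, so the claimed bound is simply the sum of Steps 2 and 3.

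The step I expect to be the main obstacle is making this partition rigorous: every frame operation has to be attributed to exactly one of the two classes. I would do this by attributing each frame to the first emitted path whose stack contains it if such a path exists, and to $W_{\mathrm{dead}}$ otherwise. The subtle case is the frames explored after the $K$-th emission, which get cut off by early termination. Since the DFS halts immediately after the $K$-th yield, there are none, and this observation closes the argument.
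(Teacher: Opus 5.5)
Your proposal is correct and follows essentially the paper's own argument. The paper likewise bounds depth by fresh-node entries plus revisit and loopback traversals (\lemref{dfs}), charges $O(\Delta)$ per frame along each yielding branch to its terminal yield, lumps everything else into $W_{\mathrm{dead}}$, and gets the space bound from the recursion stack together with $E_v$ and $V_v$. One small correction: in \algref{enum} the recursion does not halt at the $K$-th yield---ancestor frames finish their out-edge loops while each new call returns at the $Y \geq K$ guard---but this adds only $O((|V|+L+J)\Delta)$ work, which is already covered once you charge each frame's entire $O(\deg^+)$ loop, including those $O(1)$ guarded calls, to the first emitted path containing it.
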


In Type-6's verifier, coherence is measured over \emph{edge-simple} source-to-sink paths (\algref{enum}). 
Parallel edges yield distinct paths, and three policies constrain enumeration. 
These are:
\begin{enumerate}[leftmargin=*]
    \item \textbf{Sink absorption}: once the sink is reached,
its out-edges are ignored
\item\textbf{Loopback budget}: at most $L$
loopback edges per path
\item\textbf{Restricted revisits}: a node may be re-entered only via a \textsf{jump}, \textsf{loopback}, or $\Omega_b$-labelled edge, never via a forward \textsf{logical} or \textsf{gap} edge.
\end{enumerate}

Policy (3) preserves legitimate revisits (jumps, loopbacks, connective returns) as distinct routes while pruning spurious forward-edge revisits, resolving both false-positive enumeration (double-counting sequential progress) and false-negative pruning.

Hence we need to account for these ahead of proving \lemref{enum}: 

\begin{lemma}[DFS frame cost and path length]
\label{lem:dfs}
A single $\textsc{DFS}$ frame at $v$ in \algref{enum}, excluding recursion, performs $O(\deg^+(v))$ work. 
Moreover, every emitted path has length at most $|V| - 1 + L + J$, so recursion
depth is $O(|V| + L + J)$, where $J$ counts \textsf{jump} and $\Omega_b$-labe;led edges.
\end{lemma}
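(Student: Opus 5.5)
The lemma has two parts: a per-frame cost bound and a path-length (hence recursion-depth) bound. I would prove them separately, reading \algref{enum} as a standard recursive DFS over edge-simple paths. The DFS carries a used-edge set, an on-path node record, and a loopback counter.

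For the frame cost, I would note that a frame at $v$ does three things.
\begin{enumerate}[leftmargin=*]
\item It checks whether $v$ is the sink, in which case it yields and returns by sink absorption. This costs $O(1)$ if the sink test and yield are pointer operations; copying the path at yield time is charged to the emitted path, not the frame.
\item Otherwise it iterates over the out-edges of $v$. For each edge $e=(v,w,k)$ it performs constant-time tests: $e$ is not already used (a hash-set or boolean array indexed by edge id), the loopback budget is respected if $\kappa(e)=\textsf{loopback}$, and policy (3) holds, i.e. if $w$ is already on the path then $\kappa(e)\in\{\textsf{jump},\textsf{loopback}\}$ or $\lambda(e)\in\Omega_b$.
\item Each test is $O(1)$ given that edge kind and label are stored on the edge (\lemref{build}) and path membership is maintained as counters. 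Push/pop of the path stack and used-edge mark are $O(1)$ per child.
\end{enumerate}
Summing over the iteration gives $O(\deg^+(v))$ excluding recursion.

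For the path-length bound, I would classify the edges of an emitted path $P=(e_1,\dots,e_m)$ by whether their head is visited for the first time on $P$. The source occupies one node. Each edge whose head is a node not previously on $P$ contributes a new node, so there are at most $|V|-1$ such edges. Every other edge re-enters an on-path node. By the restricted-revisit policy it must be a \textsf{jump}, \textsf{loopback}, or $\Omega_b$-labelled edge. Since $P$ is edge-simple, each such edge appears at most once. Loopbacks are additionally capped by $L$ per path, and the jump/$\Omega_b$ revisit edges number at most $J$ in total. Hence $m\le |V|-1+L+J$; indeed $J$ already counts loopbacks, so the bound is loose but valid. Recursion depth equals the current path length plus one, so it is $O(|V|+L+J)$. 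The same bound covers non-yielding branches, since every partial path obeys the same policies.

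The main obstacle is making policy (3) airtight in the counting. A node might be re-entered via a forward edge when it was first visited only as a non-path node, and the pseudocode must check path membership rather than global visitation. I would state explicitly that the "first-visit" classification is relative to the current path, and argue that the DFS enforces it at step 2.
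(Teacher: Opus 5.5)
Your proposal is correct and follows essentially the same argument as the paper: $O(1)$ membership, kind/label and budget checks per out-edge give the $O(\deg^+(v))$ frame cost, and the path length splits into at most $|V|-1$ fresh-node entries plus revisit edges, which are either loopbacks (at most $L$) or \textsf{jump}/$\Omega_b$-labelled edges (at most $J$, by edge-simplicity). Two of your steps are small refinements that the paper's proof leaves implicit: counter-based path membership, so that backtracking after a \textsf{jump} re-entry does not wrongly delete a node still on the path, and the explicit extension of the depth bound to non-yielding partial paths.
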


\begin{proof}
A frame at $v$ iterates over $\deg^+(v)$ out-edges, each with $O(1)$ membership tests, kind/label lookups, a budget comparison, and set insert/removal. 
Moreover, $\textsc{DFS}$ refuses to re-enter a node in $V_v$ unless it is the sink or the edge is \textsf{jump}, \textsf{loopback}, or $\Omega_b$-labeled. 
Non-sink fresh-node entries are $\leq |V| - 1$; 
loopback traversals are capped at $L$, and jump/connective traversals are edge-simple, so $\leq J$. 
Hence $|\pi| \leq |V| - 1 + L + J$.
\end{proof}

\begin{lemma}[Simple-path count]\label{lem:paths}
The number of edge-simple source-to-sink paths in a multi-digraph on
$|V|$ nodes with maximum multiplicity $\mu$, respecting the three
policies, is
\[
N_{\mathrm{total}} \leq \mu^{|V| - 1 + L + J} \cdot
  \sum_{k=0}^{|V| - 2 + L + J} \frac{(|V| + L + J - 2)!}{k!}
= O\!\left(\mu^{|V| + L + J} \cdot (|V| + L + J)! \cdot e\right).
\]
\end{lemma}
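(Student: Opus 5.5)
The plan is a counting argument over edge-label sequences, split into two stages. First we count the underlying node sequences of admissible paths. Then we multiply by the number of ways to choose a parallel edge at each step.

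Fix an admissible edge-simple source-to-sink path. By \lemref{dfs}, its length is at most $m := |V| - 1 + L + J$. We project the path to its node sequence $v_0 = \text{source}, v_1, \dots, v_r = \text{sink}$ with $r \le m$. Since consecutive edges are chosen among at most $\mu$ parallel copies, each node sequence lifts to at most $\mu^{r} \le \mu^{m}$ edge sequences (taking $\mu \ge 1$). This gives the factor $\mu^{|V|-1+L+J}$. It remains to bound the number of admissible node sequences by $\sum_{k=0}^{m-1} (m-1)!/k!$.

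For the node sequences, the endpoints are fixed. The interior is a sequence of $r-1 \le m-1$ intermediate positions. The restricted-revisit policy (3), together with the loopback budget and edge-simplicity on jump/connective edges, implies that repeated nodes occur only at positions reached through one of at most $L+J$ revisit edges. We treat the revisit positions as additional ``virtual'' slots. In effect, we work over a ground set of $M := |V| - 2 + L + J = m-1$ symbols: the $|V|-2$ non-endpoint nodes plus $L+J$ revisit tokens, each used at most once.

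Under this encoding, an admissible interior sequence is injectively encoded as an ordered arrangement of distinct symbols from an $M$-set. Summing over the interior length $j$, the number of such arrangements is $\sum_{j=0}^{M} M!/(M-j)! = \sum_{k=0}^{M} M!/k!$ after reindexing $k = M-j$. This matches the displayed sum with upper index $|V|-2+L+J$.

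The closed form follows from $\sum_{k\ge 0} 1/k! = e$. This gives $M! \cdot e$, hence $O(\mu^{|V|+L+J}\,(|V|+L+J)!\,e)$, absorbing the constant-factor gap between $M!$ and $(|V|+L+J)!$.

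The main obstacle is making the encoding of revisits injective. A revisit edge returns to an already-seen node, so the node label alone does not record the revisit; the slot must record which revisit edge was used. I would first try to encode each revisit step by the identity of the jump/loopback/$\Omega_b$ edge traversed. Edge-simplicity and the budget $L$ guarantee each such token appears at most once. The node it lands on is then determined by that edge, so the map from paths to symbol arrangements is injective.

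A subtlety is that this double-counts the multiplicity choice on revisit steps. That only loosens the bound, so it is harmless for an upper bound. Sink absorption ensures the sink appears only as the final position, justifying the exclusion of both endpoints from the ground set.
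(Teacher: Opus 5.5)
Your route is the paper's. You bound the length by \lemref{dfs}, count interior node sequences as arrangements of distinct symbols from a set of $M=|V|-2+L+J$ elements, sum over lengths, and lift by $\mu^{|V|-1+L+J}$. In two places you are more careful than the paper. Its per-length count ``$M!/k!$ for $k$ intermediates'' is misindexed: it should be $M!/(M-k)!$, and only the sum over all $k$ agrees. Your reindexing $k=M-j$ handles this correctly. Also, the paper never justifies treating revisits as $L+J$ extra ``positions''; your edge-identity injection is meant to supply that.

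That injection, however, does not deliver the alphabet you claim. The number of arrangements of distinct symbols depends on the size of the alphabet, not on how many symbols a single path uses. If tokens are identities of traversed jump, loopback, and $\Omega_b$ edges, the alphabet contains one token for every such edge in the graph.

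Your justification (``the loopback budget and edge-simplicity on jump/connective edges'') follows \lemref{dfs}, where $J$ counts only jump and $\Omega_b$ edges. Under that reading, $L$ caps the number of loopback traversals per path but not the number of distinct loopback edges. The alphabet then has $|V|-2+J+|E_{\mathsf{loopback}}|$ symbols, which exceeds $M$ as soon as there are more than $L$ loopback edges. Anonymous tokens $\ell_1,\dots,\ell_L$ would not rescue this: a loopback may land on any previously visited node, so the symbol sequence would no longer determine the path.

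The fix is to take $J$ as in the notation paragraph of \appref{comp-prelim}, i.e.\ the number of all revisit edges, loopbacks included. Then the alphabet has $|V|-2+J\le M$ symbols. Since $a_M=\sum_{k=0}^{M}M!/k!$ is increasing in $M$ (indeed $a_{M+1}=(M+1)a_M+1$), the stated bound follows, without using $L$ at all.

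Finally, $M!$ and $(|V|+L+J)!$ differ by the factor $(M+1)(M+2)$, not a constant. This is harmless only because you need an upper bound, so simply use $M!\le(|V|+L+J)!$.
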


\begin{proof}
By \lemref{dfs}, any path has length $\leq |V| - 1 + L + J$. 
In the underlying simple digraph (parallel edges collapsed), a simple path of edge-length $k+1$ is an ordered choice of $k$ distinct intermediates from $|V| + L + J - 2$ positions, $\leq (|V| + L + J - 2)!/k!$ choices; 
summing over $k$ gives the numerator, bounded by $(|V| + L + J - 2)! \cdot e$. Restoring multiplicities, each of $\leq |V| - 1 + L + J$ transitions selects one of $\mu$ parallel edges, giving $\mu^{|V| - 1 + L + J}$. 
Multiplying gives the claim.
\end{proof}

\textbf{Loopback budget}. Remark that $L$ does not reduce the leading factorial: an adversarial graph realizes the bound with paths containing no loopback edges, so the budget is never engaged. 
It does cap the length contributed by loopback revisits (\lemref{dfs}) and is essential to termination when loopback edges are present.

\begin{lemma}[Sampling cost]
\label{lem:sample}
$\textsc{SamplePaths}$ (\algref{sample}) runs in $O(S \cdot \ell_{\max}^2 \cdot \Delta)$ time and $O(\ell_{\max})$ auxiliary space per walk.
\end{lemma}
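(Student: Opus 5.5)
We have not yet seen \algref{sample}, so the plan assumes the natural shape of $\textsc{SamplePaths}$: $S$ independent random walks from the source, each capped at $\ell_{\max}$ steps. Every walk keeps a list of the edges it has used, so that it stays edge-simple. It also keeps a loopback counter for policy (2), and a set of visited nodes for policy (3). At each step the walk filters the out-edges of the current vertex by the three policies, picks one admissible edge uniformly at random, and appends it. The walk stops when it reaches the sink (sink absorption), when no admissible edge remains, or when it hits $\ell_{\max}$. The proof charges cost to a single step of a single walk, then sums over steps and over walks.

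\textbf{Step 1 (per-step cost).} At the current vertex $v$ the walk scans at most $\deg^+(v) \le \Delta$ out-edges. For each edge it performs a constant number of $O(1)$ operations: a kind/label lookup, the loopback-budget comparison, and the revisit test on the target. The remaining check is edge-simplicity, i.e.\ whether the edge is already on the current walk. If the walk prefix is stored as a plain list, this check costs $O(\ell_{\max})$ per candidate edge. A step therefore costs $O(\ell_{\max}\cdot\Delta)$, and the uniform choice plus the append cost only $O(\Delta)$ more.

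\textbf{Step 2 (per-walk and total).} A walk takes at most $\ell_{\max}$ steps, so by Step 1 one walk costs $O(\ell_{\max}^2\Delta)$. Summing over the $S$ walks gives $O(S\cdot\ell_{\max}^2\cdot\Delta)$. If walks are deduplicated before being passed on for scoring, the comparison of a new walk against the stored ones should be done by hashing. Each walk is then hashed once in $O(\ell_{\max})$, which is dominated by the bound above. For space, each walk holds only its prefix, its visited set and its counter, all of size $O(\ell_{\max})$. These are discarded or overwritten between walks, which gives $O(\ell_{\max})$ auxiliary space per walk. The $S$ emitted paths themselves are output, not auxiliary space, and are accounted for by the $N\cdot\bar p$ term of \thmref{total-worst}.

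\textbf{Main obstacle.} The hard part is justifying the quadratic factor, because it depends on how $\textsc{SamplePaths}$ stores the walk. If the implementation keeps a hash set of used edges, the membership test is $O(1)$ and the argument actually yields the stronger bound $O(S\cdot\ell_{\max}\cdot\Delta)$. The plan is therefore to state the edge-simplicity check as a linear scan of the walk list, which is the conservative reading of the pseudocode. The stated $O(S\cdot\ell_{\max}^2\cdot\Delta)$ then follows in both cases, and a remark can note the tighter bound when hashing is used. A second thing to check is that every walk terminates within $\ell_{\max}$ steps even when loopback edges are present. This holds because of the explicit length cap, and, consistently with \lemref{dfs}, $\ell_{\max}$ can be taken as $\min\{\ell, |V|-1+L+J\}$.
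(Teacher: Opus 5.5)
Your proposal is correct and matches the paper's proof. Each walk takes at most $\ell_{\max}$ steps; each step scans at most $\Delta$ out-edges, with an $O(\ell_{\max})$ membership test against the partial path plus $O(1)$ policy and budget checks, which gives $O(\Delta\,\ell_{\max})$ per step and $O(S\,\ell_{\max}^2\,\Delta)$ overall (the paper also notes that the implementation's cap of $50S$ walk attempts only adds a constant factor).
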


\begin{proof}
Each walk runs $\leq \ell_{\max}$ steps; each step iterates $\leq \Delta$ out-edges, testing membership against a partial path of length $\leq \ell_{\max}$ ($O(\ell_{\max})$), with $O(1)$ policy and budget checks and $O(1)$ selection. 
Per-step cost $O(\Delta \cdot \ell_{\max})$, per-walk $O(\ell_{\max}^2 \cdot \Delta)$, $S$ walks $O(S \cdot \ell_{\max}^2 \cdot \Delta)$; 
the $50S$ attempt cap is a constant factor. 
Sampled paths are drawn proportional to the product of uniform local choices. 
This is a well-defined (but in general biased) estimator relative to uniform over $N_{\mathrm{total}}$.
\end{proof}

\begin{algorithm}[t]
\caption{$\textsc{EnumeratePaths}(G, \mathrm{src}, \mathrm{snk}, K, L)$}
\label{alg:enum}
\begin{algorithmic}[1]
\State $Y \gets 0$
\Procedure{DFS}{$u, E_v, V_v, \ell$}
  \If{$Y \geq K$} \Return \EndIf
  \If{$u = \mathrm{snk}$ and $E_v \neq \emptyset$}
    \State \textbf{yield} $E_v$; $Y \gets Y + 1$; \Return \EndIf
  \For{$(u, v, k) \in \text{out-edges}(u)$}
    \If{$(u, v, k) \in E_v$} \textbf{continue} \EndIf
    \If{$v \in V_v$ and $v \neq \mathrm{snk}$
        and $\kappa((u,v,k)) \notin \{\textsf{jump}, \textsf{loopback}\}$
        and $\lambda((u,v,k)) \notin \Omega_b$}
      \State \textbf{continue} \EndIf
    \State $\ell' \gets \ell + [\kappa((u,v,k)) = \textsf{loopback}]$
    \If{$\ell' > L$} \textbf{continue} \EndIf
    \State \Call{DFS}{$v, E_v \cup \{(u,v,k)\}, V_v \cup \{v\}, \ell'$}
  \EndFor
\EndProcedure
\State \Call{DFS}{$\mathrm{src}, \emptyset, \{\mathrm{src}\}, 0$}
\end{algorithmic}
\end{algorithm}

\begin{algorithm}[t]
\caption{$\textsc{SamplePaths}(G, \mathrm{src}, \mathrm{snk}, S, L, \ell_{\max})$}
\label{alg:sample}
\begin{algorithmic}[1]
\For{$s \in 1..S$}
  \State $u \gets \mathrm{src}$; path $\gets [\,]$; $V_v \gets
    \{\mathrm{src}\}$; $\ell \gets 0$
  \For{$t \in 1..\ell_{\max}$}
    \If{$u = \mathrm{snk}$ and path $\neq [\,]$}
      \State \textbf{yield} path; \textbf{break} \EndIf
    \State $F \gets$ feasible out-edges of $u$ under the three policies
    \If{$F = \emptyset$} \textbf{break} \EndIf
    \State draw $(u, v, k) \sim \mathrm{Unif}(F)$; append to path; update
      $V_v, \ell, u$
  \EndFor
\EndFor
\end{algorithmic}
\end{algorithm}

We are now ready to prove \lemref{enum}.

\begin{lemma}[Enumeration cost]\label{lem:enum2}
Let $N_{\mathrm{emit}} = \min(K, N_{\mathrm{total}})$. Then
$\textsc{EnumeratePaths}$ (\algref{enum}) runs in
\[
T_{\mathrm{enum}} = O\!\left(N_{\mathrm{emit}} \cdot (|V| + L + J) \cdot
  \Delta + W_{\mathrm{dead}}\right),
\]
with $W_{\mathrm{dead}}$ the frame work on non-yielding branches, in
$O(|V| + L + J)$ auxiliary space. 

Moreover, ignoring $K$, $N_{\mathrm{total}}$ satisfies \lemref{paths} and
\[
T_{\mathrm{enum}} = O\!\left(\mu^{|V| + L + J} \cdot (|V| + L + J)!
  \cdot (|V| + L + J) \cdot \Delta\right).
\]

\end{lemma}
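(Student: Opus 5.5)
The approach is to charge all DFS work to frames of the recursion tree of \algref{enum} and then split that tree into yielding and non-yielding parts. By \lemref{dfs}, a frame at $v$ costs $O(\deg^+(v)) \subseteq O(\Delta)$ excluding recursion, and every root-to-leaf branch has depth $O(|V|+L+J)$. A frame is \emph{productive} if some descendant (or itself) yields. Each productive frame lies on the root-to-leaf branch of at least one emitted path. Only $N_{\mathrm{emit}} = \min(K, N_{\mathrm{total}})$ paths are emitted, because of the counter $Y$ and the check at line 3. So the number of productive frames is at most $N_{\mathrm{emit}}\cdot O(|V|+L+J)$, and their total work is $O(N_{\mathrm{emit}}(|V|+L+J)\Delta)$.

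All remaining frames are non-yielding. Their work, plus the $O(1)$ early-return frames once $Y \ge K$, is by definition absorbed into $W_{\mathrm{dead}}$. For the early-return frames, each is charged to its parent's loop iteration, which is already counted in the parent's $O(\Delta)$.

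For auxiliary space, I would note that the recursion stack holds $O(|V|+L+J)$ frames. The sets $E_v$ and $V_v$ are extended by one element per level, so they can be maintained incrementally with undo on return, as the insert/removal in the proof of \lemref{dfs} suggests. That gives $O(|V|+L+J)$ total rather than a quadratic copy cost.

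For the ``moreover'' clause, I drop $K$, so $N_{\mathrm{emit}} = N_{\mathrm{total}}$, and substitute \lemref{paths}: $N_{\mathrm{total}} = O(\mu^{|V|+L+J}(|V|+L+J)!)$. The productive term then matches the claimed bound directly.

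The main obstacle is $W_{\mathrm{dead}}$, since the ``moreover'' statement has no separate dead-work term and so it must be absorbed. My plan is to bound the total number of frames, dead or not, by the number of edge-simple policy-respecting walks from the source, i.e.\ prefixes rather than only source-to-sink paths. Every frame corresponds to a distinct such prefix, determined by its sequence of chosen edges. The counting argument of \lemref{paths} never uses the endpoint being the sink: it is an ordered choice of distinct intermediates times $\mu$ per transition, with length bounded via \lemref{dfs}. So the same argument bounds the number of prefixes by $O(\mu^{|V|+L+J}(|V|+L+J)!\,e)$, possibly with an extra factor of $|V|$ for the free endpoint, which is absorbed into the factorial by shifting the index. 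Multiplying by $O(\Delta)$ per frame gives a total at most the stated $O(\mu^{|V|+L+J}(|V|+L+J)!(|V|+L+J)\Delta)$, with room to spare.

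The one point to check carefully is that revisits via jump, loopback, or $\Omega_b$ edges do not break the ``distinct intermediates'' count. I would handle this exactly as \lemref{paths} does, by treating the $\le L+J$ revisit positions as extra slots in the $|V|+L+J-2$ positions.
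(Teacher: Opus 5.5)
Your first part matches the paper's argument. You charge each frame's $O(\Delta)$ work to the branch of an emitted path: there are at most $N_{\mathrm{emit}}$ such branches, each of depth $O(|V|+L+J)$ by \lemref{dfs}. Everything else goes into $W_{\mathrm{dead}}$, and space is the stack plus $E_v, V_v$. Your incremental-undo remark only makes explicit what the ``set insert/removal'' in \lemref{dfs} presupposes.

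The real difference is the ``moreover'' clause. The paper substitutes \lemref{paths} into the first term and says the worst-case bound follows, but never accounts for $W_{\mathrm{dead}}$. That term is not controlled by $N_{\mathrm{total}}$ in general: a large dead-end region hanging off the source, with a single route to the sink, makes it dominate. Your prefix count supplies this missing step. Every frame is a distinct policy-respecting edge-simple walk from the source, so bounding such walks bounds all the work, dead or not, at once. It also shows the depth factor is unnecessary, giving $O(\mu^{|V|+L+J}(|V|+L+J)!\,\Delta)$ in total.

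The price is that you re-run the counting of \lemref{paths} with a free endpoint, and so inherit its loose treatment of revisits as extra slots. If you would rather use \lemref{paths} as a black box, attach a fresh sink $s^\ast$ with one logical edge from every node. Prefixes never continue past the original sink, so appending $(u, s^\ast)$ to a prefix ending at $u$ is an injection into source-to-$s^\ast$ paths. That graph has $|V|+1$ nodes and the same $\mu$, $L$, $J$, so there are at most $\mu^{|V|+L+J}\, e\,(|V|+L+J-1)!$ frames.
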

\begin{proof}
From \lemref{dfs} we know that each yielded path has length $\leq |V| - 1 + L + J$, produced by a branch of that depth with $O(\Delta)$ per frame. 
Charging frame work to terminal yields gives the first term, with non-yielding work $W_{\mathrm{dead}}$. 
Auxiliary state is the recursion stack plus $E_v, V_v$, all of which are $O(|V| + L + J)$.
The worst-case bound follows. 
\end{proof}

\subsubsection{Verifier Cost}\label{app:comp-verifier}

The verifier alternates within construction: after each statement,
\textsc{VerifierStep} updates $\mathrm{vs}$, extends any constraint
chain, and runs back-propagation. 

In this section we prove the following: 

\begin{lemma}[Amortised verifier cost]\label{lem:verifier-amortized}
Across the trace, $\textsc{VerifierStep}$ costs
\[
T_{\mathrm{verify}} = O\!\left(nt + n \cdot |\mathcal{K}| \cdot
  \kappa_{\max}\right) = O(n^2 t),
\]
where $|\mathcal{K}| = O(n)$, $\kappa_{\max} \leq t$; when $t = O(1)$,
$O(n^2)$.
\end{lemma}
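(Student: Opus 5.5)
We split the cost of $\textsc{VerifierStep}$ into two parts and bound each separately: the \emph{per-statement update} (state, chain, event logging) and the \emph{back-propagation sweep} over $\mathcal{K}$ that runs after every statement.

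\textbf{Per-statement update.} For the $i$-th statement, $\textsc{ParseLine}$ (\algref{parse}) has already produced at most $k_i\le t$ operators. The update reads this stack left to right and applies the rules of \figstworef{t6-rules-modal}{t6-rules-not}. By \propref{t6-det}, exactly one rule fires at each level. NOT-Modal recurses once per operator, so a stack costs $O(k_i)$ rule applications, each $O(1)$ given hashed access to $\Sigma$.

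Chain extension appends one triple in $O(1)$. $\mathrm{close}$ (\defref{t6-close}) builds $\Psi(\chi)$ in $O(|\chi|)$. Every operand of $\chi$ was contributed by an earlier statement and is consumed by exactly one close or terminate, so chain work amortises to $O(n)$ over the trace.

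Revision-segment and licensing checks are kept $O(1)$ per step. We store, per proposition, the index of the last commitment, the last jump/loopback index, and the last constraint-targeting index. A revision segment exists iff both of the latter exceed the former. Summing over statements gives $O(\sum_i k_i + n)=O(nt)$ for this part.

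\textbf{Back-propagation sweep.} After each step we re-evaluate every $c\in\mathcal{K}$ under $\Sigma$. We need two facts about $\mathcal{K}$:
\begin{itemize}[leftmargin=*]
\item Constraints are added only by Then-Close or NOT-Target, at most one per statement, so $|\mathcal{K}|\le n$.
\item A constraint's operands come from a single chain, which is bounded by the operators accumulated since the last reset. We write $\kappa_{\max}$ for the maximum constraint arity.
\end{itemize}
Evaluating $\phi$ with the Kleene tables of \appref{t6-connectives} takes $O(\kappa_{\max})$: it is a left-to-right fold with AND-before-OR, using constant-time table lookups and abstention short-circuiting. Deciding among Pin, Conflict and Unverifiable is $O(1)$ by \propref{t6-det}. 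The deduplication check compares against the stored last-logged status, also $O(1)$. One sweep therefore costs $O(|\mathcal{K}|\kappa_{\max})$, and $n$ sweeps cost $O(n|\mathcal{K}|\kappa_{\max})$.

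Adding the two parts gives the first form, $O(nt+n|\mathcal{K}|\kappa_{\max})$. Substituting $|\mathcal{K}|=O(n)$ and $\kappa_{\max}\le t$ yields $O(n^2t)$, which collapses to $O(n^2)$ when $t=O(1)$.

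\textbf{Main obstacle.} The delicate step is justifying $\kappa_{\max}\le t$. A chain can span several statements (seed, then several AND/OR lines, then THEN), so its arity is really bounded by the number of statements since the last reset, not by the tokens in one line. We would first try to treat $\kappa_{\max}$ as a parameter and note that it satisfies $\kappa_{\max}\le\min(n,\,\text{chain length})$. If the literal bound fails, we would fall back on an amortised count: each operand enters exactly one constraint, so $\sum_{c}\kappa(c)=O(nt)$. A single sweep then costs $O(nt)$, and $n$ sweeps cost $O(n^2t)$. This reaches the same final bound without needing the per-constraint claim.

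A second subtlety is Pin cascades, where pinning one proposition enables further pins within the same step. \propref{t6-det} allows at most one rule per constraint per step, so a step performs a single sweep and the sweep cost is not multiplied. Any cascade simply continues at later steps, which are already counted.
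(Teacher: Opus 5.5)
Your proof is correct and takes the paper's route: the paper also splits the cost into the per-statement update (\propref{vs-update}, $O(t)$ each and $O(nt)$ in total) and a back-propagation sweep of cost $O(|\mathcal{K}_i|\,\kappa_{\max})$ at every step (\propref{backprop}), then sums using $|\mathcal{K}_i| = O(n)$. Your amortised accounting of chain operands usefully tightens the argument, because the paper asserts $|\chi| \le k$ and $\kappa_{\max} \le t$ without justification even though a chain can span several statements; your fallback bound $\sum_c \kappa(c) = O(nt)$ recovers $O(n^2 t)$ without that assumption.
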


For this we rely on the following propositions:

\begin{proposition}[Update costs]\label{prop:vs-update}
For a statement of $k \leq t$ operators, the sequential update
(K-/B-Commit, Bare-Assert, Doubt, Pivot, Loopback, or a NOT variant)
runs in $O(t)$; quality-signal checks are $O(1)$ using the tracked
history. 

Moreover, chain updates (Bare-Seed, And/Or-Extend, If-Open, Then-Close,
N-preservation, R-reset) run in $O(k) \leq O(t)$.
\end{proposition}
\begin{proof}
Each modal/bare update touches one entry of $\Sigma$ via $O(1)$ lookup and appends a constant-size event; 
the NOT fold is resolved at parse time (\lemref{build}). 
Doubt/Pivot/Loopback flip a flag or leave $\Sigma$ unchanged in $O(1)$. 
Iterating the $k$ operators is $O(k) \leq O(t)$. 
Quality checks compare against the prior commitment on the same proposition, $O(1)$ from $\mathrm{vs}$.

For the chain updates, note that seed and extensions append a constant-size triple; If-Open resets after
$O(1)$ $\mathrm{terminate}$; Then-Close builds a constraint via $\mathrm{close}$ in $O(|\chi|)$, with $|\chi| \leq k$. 
Iterating the $k$ operators is $O(k)$.
\end{proof}

\begin{proposition}[Back-propagation per statement]\label{prop:backprop}
With $|\mathcal{K}_i|$ the constraint-set size after $s_i$ and
$\kappa_{\max}$ the maximum chain length, back-propagation at step $i$
runs in $O(|\mathcal{K}_i| \cdot \kappa_{\max})$.
\end{proposition}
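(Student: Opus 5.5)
The plan is to bound one pass of back-propagation at step $i$ as (number of constraints visited) $\times$ (cost per constraint), and then show that the per-constraint cost is $O(\kappa_{\max})$. Following the back-propagation rules (\figref{t6-rules-backprop}), back-propagation after $s_i$ iterates once over every $c \in \mathcal{K}_i$. At most one of Pin, Conflict, or Unverifiable fires per constraint per step (\propref{t6-det}). So the outer loop contributes a factor $|\mathcal{K}_i|$, provided no rule re-enqueues constraints within the same step. The rules fire at most once per constraint per step, so there is no fixpoint iteration inside a step; any cascading effect of a Pin is deferred to the next statement's pass.

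For the inner cost, fix $c \equiv (\pi, \epsilon_t) \equiv \Psi(\chi)$ or $\Psi(\chi) \to (\pi, \epsilon_t)$. The constraint was built by $\mathrm{close}$ (\defref{t6-close}) from a chain $\chi$ with $|\chi| \le \kappa_{\max}$, so $\Psi(\chi)$ has $O(\kappa_{\max})$ operands. Deciding whether $\mathcal{K} \models_\Sigma \pi = v^\ast$ proceeds in four steps:
\begin{enumerate}
\item Look up each operand's $\tau$ in $\Sigma$ in $O(1)$ and apply its polarity.
\item Evaluate the AND-binds-tighter-than-OR combination left to right with the finite tables of \appref{t6-connectives}. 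Accumulate a running disjunct of conjunctions, which takes $O(1)$ per operand, so the whole evaluation is $O(\kappa_{\max})$. Abstention $\ast$ is just one more table value.
\item Apply $\mathrm{flip}$ for $\epsilon_t$. For implications, check whether the antecedent is $T$ before forcing.
\item Compare against $\Sigma(\pi)$ to select Pin, Conflict, or Unverifiable.
\end{enumerate}
Steps 3 and 4 cost $O(1)$.

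The deduplication policy compares $c$'s current outcome against its stored last-logged-status field, which costs $O(1)$. Emitting an event appends a constant-size record, and a Pin performs a single $O(1)$ update of $\Sigma$. Summing over constraints gives $O(|\mathcal{K}_i| \cdot \kappa_{\max})$.

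The main obstacle is the first step: justifying that a Pin within step $i$ does not trigger re-evaluation of other constraints in the same step. Such re-evaluation would multiply the cost by the propagation depth. I would argue it from the rule statement ("at most one rule applies per constraint per step") read as a single sweep. As a fallback, Monotonicity (\propref{t6-mono}) gives that each Pin strictly promotes some $U_k$ to $\{T,F\}$. The fallback needs this to hold regardless of the order in which constraints are visited in the sweep, so that later constraints in the same sweep see the updated $\Sigma$ without revisiting earlier ones.
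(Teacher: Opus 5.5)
Your proposal is correct and follows the paper's proof: each $c \in \mathcal{K}_i$ is evaluated by an $O(\kappa_{\max})$ scan of its operand chain with $O(1)$ lookups, followed by an $O(1)$ three-way case split, deduplication check and event emission, and summing over $c$ gives the bound. The paper takes the single-sweep reading for granted (``at most one rule applies per constraint per step''), so your monotonicity fallback is unnecessary; note also that under a genuine fixpoint semantics it would only bound the number of sweeps by the number of Pins, and so would not recover the stated bound.
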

\begin{proof}
Each $c \in \mathcal{K}_i$ is evaluated by a scan of its operand chain
($\leq \kappa_{\max}$, each lookup $O(1)$); the three-way case split and
$O(1)$ event emission with $O(1)$ deduplication check follow. Summing
over $c$ gives the claim.
\end{proof}

We are now ready to prove \lemref{verifier-amortized}: 
\begin{lemma}[Amortised verifier cost]\label{lem:verifier-amortized2}
Across the trace, $\textsc{VerifierStep}$ costs
\[
T_{\mathrm{verify}} = O\!\left(nt + n \cdot |\mathcal{K}| \cdot
  \kappa_{\max}\right) = O(n^2 t),
\]
where $|\mathcal{K}| = O(n)$, $\kappa_{\max} \leq t$; when $t = O(1)$,
$O(n^2)$.
\end{lemma}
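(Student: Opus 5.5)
The bound follows by summing per-statement costs over the $n$ calls to \textsc{VerifierStep} made inside \textsc{BuildGraph} (\algref{build}). Each call at step $i$ does two pieces of work: the sequential and chain update for $s_i$, and one back-propagation sweep over $\mathcal{K}_i$. So I will write
\[
T_{\mathrm{verify}} = \sum_{i=1}^{n} \bigl( U_i + P_i \bigr),
\]
where $U_i$ is the update cost and $P_i$ is the back-propagation cost.

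\textbf{Update cost.} By \propref{vs-update}, $U_i = O(k_i) \leq O(t)$. This holds for the modal, bare, structural and NOT updates, and also for the chain updates Bare-Seed, And/Or-Extend, If-Open and Then-Close. The quality-signal checks add only $O(1)$. Summing over $i$ gives the $O(nt)$ term.

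\textbf{Back-propagation cost.} By \propref{backprop}, $P_i = O(|\mathcal{K}_i| \cdot \kappa_{\max})$. The constraint set only grows, through Then-Close and NOT-Target, so $|\mathcal{K}_i| \leq |\mathcal{K}_n| =: |\mathcal{K}|$. Summing then gives $O(n \cdot |\mathcal{K}| \cdot \kappa_{\max})$.

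\textbf{Sizing the parameters.} Two auxiliary bounds are needed to collapse this into $O(n^2 t)$.
\begin{itemize}
\item For $|\mathcal{K}| = O(n)$: constraints are added only by a $\mathrm{THEN}$-led statement, and each such statement adds exactly one constraint. Every other rule leaves $\mathcal{K}$ unchanged, including $\mathrm{terminate}$, which discards rather than adds. Hence $|\mathcal{K}| \leq n$.
\item For $\kappa_{\max} \leq t$: I would argue that a chain $\chi$ gains at most one operand per operator token. I expect this to be the main obstacle, because a chain can span several statements through And/Or-Extend, and $N$ preserves the open chain. As stated, the bound only counts operand triples contributed by one statement. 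The fallback I would use is $\kappa_{\max} \leq \sum_i k_i = O(T)$, combined with amortisation: each operand triple enters exactly one closed constraint, so the total evaluation work over all constraints is $O(T) = O(nt)$ per sweep. That still yields $O(n \cdot nt) = O(n^2 t)$. For consistency with the paper I would adopt the paper's convention that $\kappa_{\max}$ counts operands per closing statement, and note this alternative argument as a remark.
\end{itemize}

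Substituting both bounds gives $O(nt + n \cdot n \cdot t) = O(n^2 t)$. Setting $t = O(1)$ then gives $O(n^2)$, which completes the statement.
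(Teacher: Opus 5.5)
Your argument is correct and is essentially the paper's proof, which sums \propref{vs-update} ($O(t)$ per statement) and \propref{backprop} ($O(|\mathcal{K}_i|\,\kappa_{\max})$ per step) using $|\mathcal{K}_i|\le|\mathcal{K}|=O(n)$; the paper simply asserts $|\mathcal{K}|=O(n)$ and $\kappa_{\max}\le t$, and your count of THEN-led statements justifies the former. Your worry about $\kappa_{\max}\le t$ is well founded, because a chain gains one triple per connective-led statement (the paper's claim $|\chi|\le k$ in \propref{vs-update} has the same gap), and your amortisation is the right repair once ``exactly one'' is weakened to ``at most one'', since Loopback and If-Open discard open chains; however, do not reinterpret $\kappa_{\max}$ as operands per closing statement, because \propref{backprop} scans the whole chain, so either read $\kappa_{\max}\le t$ as a hypothesis of the lemma or make the amortised count your main argument.
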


\begin{proof}
Immediate from \propref{vs-update} and \propref{backprop}. 
The former gives $O(t)$ per statement, $O(nt)$ total. 
The latter gives $O(|\mathcal{K}_i| \cdot \kappa_{\max})$ per step; 
summing with $|\mathcal{K}_i| \leq |\mathcal{K}| = O(n)$ gives $O(n^2 t)$.
\end{proof}

\subsection{Proof of \thmref{total-worst}}
The proof follows: 
\begin{theorem}[Worst-case total]\label{thm:total-worst2}
The full pipeline runs in
\[
T_{\mathrm{total}} = O(nt) + T_{\mathrm{verify}} + T_{\mathrm{paths}} +
  O(N \cdot \bar p)
\]
steps, with $T_{\mathrm{verify}} = O(n^2 t)$, $T_{\mathrm{paths}}$ either
$T_{\mathrm{enum}}$ or $T_{\mathrm{sample}}$, $N$ the number of scored paths, and $\bar p$
the mean path length; space $S_{\mathrm{total}} = O(nt + N \cdot \bar p)$.
\end{theorem}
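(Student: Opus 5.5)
The plan is to decompose the pipeline into its four sequential phases, bound each one using an earlier result, and sum the bounds, since the phases run one after another. The phases are: (i) parsing and graph construction; (ii) the interleaved \textsc{VerifierStep} calls; (iii) path analysis, by enumeration or by sampling; and (iv) per-path scoring and report aggregation.

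\textbf{Phases (i)--(iii).} For (i), invoke \lemref{build}. This gives $O(nt)$ time with $|V|,|E|\in O(nt)$, once the verifier work performed inside the loop is excluded. That verifier work is charged separately to (ii), so nothing is double counted. For (ii), invoke \lemref{verifier-amortized2}, which gives $T_{\mathrm{verify}}=O(nt+n|\mathcal{K}|\kappa_{\max})=O(n^2t)$; the $O(nt)$ term is absorbed into the leading $O(nt)$. For (iii), write $T_{\mathrm{paths}}$ for whichever routine is called. If enumeration is used, it equals $T_{\mathrm{enum}}$ from \lemref{enum2}. If the sampling fallback is used, it equals $T_{\mathrm{sample}}=O(S\ell_{\max}^2\Delta)$ from \lemref{sample}. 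We keep $T_{\mathrm{paths}}$ symbolic, as the statement does.

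\textbf{Phase (iv).} Here I argue directly. Each of the $N$ emitted paths (with $N\le K$ or $N\le S$) is projected against the sequential event log and scored. I need the per-edge work to be $O(1)$: looking up the statement index of an edge, testing chain membership, and updating the terminal-state projection. I would get this by precomputing, during construction, a map from edges to statement-level events. Building that map costs $O(|E|)=O(nt)$, which is absorbed. Summing over the edges of each path then gives $\sum_j|\pi_j|=N\bar p$. Cross-path disagreement can be tabulated by accumulating terminal values per proposition in a hash table while walking each path, which adds only $O(N\bar p)$ more. The coherence distributions (min, max, mean) are computed in one pass over the $N$ scores.

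\textbf{Space.} The graph and the verifier state need $O(nt)$, using $|\mathcal{K}|=O(n)$ constraints of size at most $t$. Storing the paths needs $O(N\bar p)$. The auxiliary recursion state is $O(|V|+L+J)$ by \lemref{dfs}, and the sampling state is $O(\ell_{\max})$ per walk. Both are dominated once we note $|V|=O(nt)$ and $J\le|E|$. For $L$ and $\ell_{\max}$, I would assume these are treated as constants or absorbed into the stored path lengths.

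The main obstacle is phase (iv). Earlier results do not bound it, and a careless implementation of the projection would cost $O(n)$ per path, from rescanning the whole event log each time. The $O(1)$ per-edge bound therefore rests entirely on the precomputed edge-to-event index. The cross-path aggregation also needs an explicit argument showing it is linear in the total path length.
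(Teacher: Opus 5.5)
Your proposal is correct and follows the paper's proof: the paper likewise sums the $O(nt)$ build cost from \lemref{build}, the $O(n^2t)$ verifier cost from \lemref{verifier-amortized}, $T_{\mathrm{paths}}$ with its auxiliary space absorbed into $O(nt)$, and an $O(N\cdot\bar p)$ scoring term. The paper simply asserts both the scoring cost and the absorption of the $L$ and $\ell_{\max}$ terms, whereas you sketch an edge-to-event index for the former; for the latter, edge-simplicity already caps the DFS stack at $|E| = O(nt)$, so you need no assumption on $L$.
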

\begin{proof}
Parse/build: $O(nt)$ (\lemref{build}). 
Verify: $O(n^2 t)$ time, $O(n)$ space (\lemref{verifier-amortized}). 
Paths: $T_{\mathrm{paths}}$ time with $O(nt + L + J)$ (enum) or $O(\ell_{\max})$ 
(sample) auxiliary space, absorbed into $O(nt)$. 
Scoring/aggregation: $O(N \cdot \bar p)$. 
Summing gives the claims.
\end{proof}

\subsection{The Mostly-Linear Regime}\label{app:mostly-linear}

The deduplication policy does not change the per-statement worst case (each constraint is still evaluated per step). 
It, however, bounds total emitted events at $O(|\mathcal{K}|) = O(n)$ rather than $O(n^2)$. 
In practice constraint sets rarely reach their worst-case size. 
Thus, in this section we show that Type-6 verification can be made \textit{mostly} linear. 
First, we define the following:

\begin{definition}[Mostly-linear graph]\label{def:mostly-linear}
We say a graph $G$ is \emph{mostly-linear} if $E = E_{\mathrm{spine}} \sqcup E_{\mathrm{par}} \sqcup E_{\mathrm{back}}$, 
where $E_{\mathrm{spine}}$ is a forward directed path (the \emph{spine}) visiting every non-blank node at most once and covering all propositions in first-appearance order; 
$E_{\mathrm{par}}$ are parallel-edge duplicates of spine/par edges; 
and $E_{\mathrm{back}}$ are \emph{backward} edges (target precedes source in spine order). 
We write $B = |E_{\mathrm{back}}|$ (so $B \geq J$, with equality when every backward edge is a revisit edge).
\end{definition}

Thus, in this section we prove the following:

\begin{theorem}[Mostly-linear total]
\label{thm:total-linear}
For a mostly-linear graph from $n$ statements with $t, \mu, \Delta, L,
\kappa_{\max} = O(1)$ and $B = O(\log n)$,
\[
T_{\mathrm{total}} =
\begin{cases}
O(n + \min(K, n^c) \cdot n) & \text{(enumeration),} \\
O(n + S \cdot n^2) & \text{(sampling),}
\end{cases}
\]
for a constant $c$ depending on backward-edge sparsity, in $O(n + N \cdot
\bar p)$ space.
\end{theorem}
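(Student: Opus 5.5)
We follow the decomposition of \thmref{total-worst} and re-bound each term under the mostly-linear hypotheses of \defref{mostly-linear}, with $t,\mu,\Delta,L,\kappa_{\max}=O(1)$ and $B=O(\log n)$.

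\textbf{Parse/build and sizes.} With $t=O(1)$, \lemref{build} gives $O(n)$ time and $|V|,|E|\in O(n)$.

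\textbf{Verification.} We must improve the $O(n^2 t)$ of \lemref{verifier-amortized}. By \propref{vs-update}, per-statement updates cost $O(n)$ in total. Back-propagation costs $O(|\mathcal{K}_i|\kappa_{\max})$ per step by \propref{backprop}. The plan is to avoid re-scanning all of $\mathcal{K}$ at every step. A constraint's forcing status can change only when one of its at most $\kappa_{\max}+1$ operand or target propositions changes $\tau$. So we index constraints by proposition and re-evaluate only those touching the propositions updated at step $i$.

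Each statement updates $O(1)$ propositions directly. Backprop-Pin only promotes $U_k$ to $T/F$, so by \propref{t6-mono} each proposition is pinned at most once between licensed revisions. We charge each re-evaluation to a state change of a proposition occurring in that constraint. Constraints containing a given proposition are created only by $\mathrm{THEN}$ statements, each of size $O(1)$, so the total incidence $\sum_c |c| = O(n)$.

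The main obstacle is that a single proposition may occur in many constraints, and it may be revisited many times via jumps or loopbacks. We will bound revisits using mostly-linearity. Re-assertions of an already-seen proposition require a backward edge, so there are $O(B)$ of them. Each one touches at most $O(n)$ incident constraint occurrences, giving $O(Bn)$ extra work. Absorbing this into the enumeration term (which is $\Omega(n)$ per path, and at least one path is emitted) gives verification $O(n)+O(n\log n)$. This is subsumed by $\min(K,n^c)\cdot n$ for $c\ge1$, and in the sampling case by $S n^2$. If this charging argument fails, the fallback is to state the verification term explicitly; it is dominated anyway.

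\textbf{Paths.} This is the key step. A source–sink edge-simple path in a mostly-linear graph is a spine-order walk broken by a sequence of backward edges. Each backward edge is used at most once, and loopbacks number at most $L$. Between backward edges the path follows spine or parallel edges forward. With $\mu=O(1)$ and $\Delta=O(1)$, the parallel choices per forward segment contribute a factor bounded by $\mu$ to the power of the number of positions where parallel edges exist. Here we will need that parallel duplicates occur only at $O(\log n)$ positions; we treat this as part of the sparsity constant, or absorb it into $c$.

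The number of ordered subsets of backward edges is $\sum_{j\le B} B!/(B-j)! \le e\cdot B! = (\log n)^{O(\log n)}$. This is quasi-polynomial rather than polynomial, so we additionally use that backward edges respect spine order. Along a path, after each backward edge the walk proceeds forward, so the usable backward edges form a chain. We bound the count by $2^{B}\cdot\mu^{O(B)} = n^{c}$ with $c$ depending on the constant in $B=O(\log n)$. Hence $N_{\mathrm{total}}\le n^c$ and $N_{\mathrm{emit}}\le\min(K,n^c)$.

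Each path has length $O(|V|+L+J)=O(n)$ by \lemref{dfs}. Dead-branch work $W_{\mathrm{dead}}$ is charged to the nearest yielding ancestor: every forward spine step eventually reaches the sink, because the spine covers all nodes through to the sink, so dead branches arise only from backward-edge prunings. This costs $O(n)$ per emitted path times $O(1)$. \lemref{enum2} then gives $O(\min(K,n^c)\cdot n)$. For sampling, \lemref{sample} with $\ell_{\max}=O(n)$ gives $O(S n^2)$.

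Scoring costs $O(N\bar p)$ with $\bar p=O(n)$, which is absorbed. Space is $O(n+N\bar p)$ by the space clause of \thmref{total-worst} with $t=O(1)$.
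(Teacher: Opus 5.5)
Your decomposition matches the paper's: build via \lemref{build}, then verification, then paths via \lemref{enum2} and \lemref{sample}, with scoring absorbed. But the proposal does not establish the statement under its stated hypotheses, and two of its weak points sit exactly where the paper's own proof is unsound. The main one is the path count. You are right that $\mu=O(1)$ does not make the parallel-edge factor polynomial. However, a bound on the number of parallel positions cannot be absorbed into $c$, because the stated hypotheses genuinely allow exponentially many paths. Take the bare trace $q, p_1,\dots,p_m, p_1,\dots,p_m$ followed by $\mathrm{THEN}\,a$. \textsc{BuildGraph} yields:
\begin{itemize}
\item the spine $q\to p_1\to\cdots\to p_m\to a$;
\item a single backward \textsf{jump} $p_m\to p_1$;
\item a parallel \textsf{jump} duplicate of every \textsf{gap} edge $p_i\to p_{i+1}$.
\end{itemize}
This graph is mostly-linear with $B=1$, $\mu=\Delta=2$, $t=\kappa_{\max}=1$. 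Yet it has at least $2^{m-1}$ admissible source--sink paths, so once $K\ge 2^{m-1}$ the claimed $O(\min(K,n^{c})\cdot n)$ fails for every constant $c$. The paper gets past this point only by asserting in \lemref{enum-linear} that $\mu^{|V|}\cdot 2^{B}=O(|V|^{c})$, which is false for every $\mu\ge 2$. Your extra hypothesis (parallel duplicates at $O(\log n)$ positions) is therefore necessary. State it as a hypothesis of the theorem rather than folding it into $c$. Beyond that, your count is the paper's \lemref{paths-linear}. Your claim that the usable backward edges ``form a chain'' is asserted rather than proved, just like the paper's ``ordering forced up to constants''.

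You are also right that summing \propref{backprop} gives $\Theta(n^2)$ once $\Theta(n)$ $\mathrm{THEN}$-closures accumulate. This needs no revisits at all (e.g.\ alternating bare $p_{2k-1}$ and $\mathrm{THEN}\,p_{2k}$). \lemref{verify-linear} reaches $O(n)$ only through the unproved proviso ``when prior constraints are largely inert''. Your proposition-indexed re-evaluation makes that proviso precise. It is, however, a change of implementation (the paper says each constraint is still evaluated per step), so present it as one.

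Two steps of your charging fail as written.
\begin{itemize}
\item Re-assertions need not create backward edges. In the example above, one backward edge enables $m$ bare revisits, each adding a forward parallel \textsf{jump}. Revisits are bounded by $B$ plus the number of parallel duplicates, which is $O(\log n)$ only under your added hypothesis.
\item Absorbing $O(n\log n)$ into $\min(K,n^{c})\cdot n$ fails whenever $K<\log n$, e.g.\ for a fixed cap.
\end{itemize}
Neither step is needed. Suppose $p$ is named $1+r_p$ times. Then $\tau(p)$ changes at most $2(1+r_p)$ times, since a pin needs $\tau(p)=U_k$ and only a statement naming $p$ can restore that. Also, $p$ occurs in at most $1+r_p$ constraints. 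So, with one initial evaluation per constraint, re-evaluation totals $O\bigl(n+\sum_p(1+r_p)^2\bigr)=O(n+R^2)=O(n)$ for $R=\sum_p r_p=O(\log n)$.

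Finally, charging $W_{\mathrm{dead}}$ to the nearest yielding ancestor with $O(1)$ multiplicity is unjustified. A yielding path can have many dead sibling branches, and the search may do dead work before its first yield. Bound it directly by counting backward-edge choice sequences, as \lemref{dead-linear} does.
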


Observe that the spine order in a mostly-linear graph is exactly the \textsf{jump} and \textsf{loopback} edges, plus any $\Omega_b$-labelled edge returning to a prior proposition. 
Hence, the decomposition from \defref{mostly-linear} reflects construction: 
each statement extends the spine forward by $O(t)$ edges;  
parallel edges arise when a statement lands on an existing proposition via a new operator  (e.g., $\mathrm{AND}\,a$ on a prior $a$); 
backward edges arise only from $R$-chain closures (\textsf{loopback}), revisit prefixes (\textsf{jump}), or connective edges to an earlier proposition. 
Gap edges are always forward.

We now introduce two technical lemmas for general mostly-linear multi-digraphs.
\begin{lemma}[Path count, mostly-linear]\label{lem:paths-linear}
The number of edge-simple source-to-sink paths in a mostly-linear multi-digraph, respecting the three policies, is $N_{\mathrm{total}} = O(\mu^{|V|} \cdot 2^{B})$.
\end{lemma}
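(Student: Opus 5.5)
We propose to prove \lemref{paths-linear} by encoding each admissible path as a pair: a choice of which backward edges it uses, and a choice of parallel edge at each step. The key observation is that, once the backward edges are fixed, the forward part of a path is forced, up to parallel edges.

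\textbf{Step 1: forward segments are determined.} In a mostly-linear graph (\defref{mostly-linear}), every non-backward edge lies in $E_{\mathrm{spine}} \sqcup E_{\mathrm{par}}$ and therefore joins consecutive spine nodes in the forward direction. The underlying simple forward structure is a single directed path. Consequently, between two consecutive uses of backward edges, a path is a contiguous forward run along the spine. That run is determined by its start and end nodes, except for the choice among at most $\mu$ parallel copies at each forward transition. By sink absorption, the run that reaches the sink ends there.

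\textbf{Step 2: backward edges are used at most once and in a determined order.} Paths are edge-simple, so each backward edge appears at most once; the set $\mathcal{B} \subseteq E_{\mathrm{back}}$ of backward edges used takes at most $2^B$ values. We then argue that $\mathcal{B}$ fixes the order in which its edges are taken. Starting from the source, the path runs forward until it meets the tail of some edge of $\mathcal{B}$. Under the restricted-revisit policy, a forward logical or gap edge cannot re-enter a visited node, so the path cannot skip past that tail and return to it later by forward edges alone. It must therefore take the first backward tail it reaches, land at that edge's head, and run forward again. Iterating this argument, the sequence of backward edges is a deterministic function of $\mathcal{B}$.

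\textbf{Step 3: main obstacle and counting.} The delicate point is the argument in Step 2: after a backward edge, the path re-traverses spine nodes it has already visited. We need to show that the revisit policy plus edge-simplicity still leaves no choice beyond parallel edges. If this cannot be made airtight, we will fall back on a weaker but sufficient claim: each element of $\mathcal{B}$ is used at most once, and between uses the path is a forward run. This should still give at most one ordering per subset, up to a constant factor absorbed into the $O(\cdot)$.

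For the parallel-edge factor, a naive count gives length $\le |V|-1+L+J$ by \lemref{dfs}, hence $\mu^{|V|+L+J}$. To reach $\mu^{|V|}$, we use edge-simplicity on each spine transition: each spine pair $(u,v)$ has at most $\mu$ parallel copies in total, and a path that crosses the pair several times must use distinct copies on each crossing. The total number of ways to choose copies across all crossings is therefore at most $\mu^{|V|}$ overall rather than per crossing. Combining the $2^B$ choices of $\mathcal{B}$ with this $\mu^{|V|}$ bound gives $N_{\mathrm{total}} = O(\mu^{|V|} 2^B)$.
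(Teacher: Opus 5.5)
Your decomposition (a subset of $E_{\mathrm{back}}$, forced forward runs between backward edges, and a choice of parallel copies) is the same as the paper's. The paper's proof asserts exactly the two steps you flag, ``ordering forced up to constants'' and ``bounding (ii) by $\mu^{|V|}$ (under $L+J=O(|V|)$)'', without argument. Neither step holds, and the gap cannot be closed, because under the policies as implemented in \algref{enum} the stated bound is false.

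\textbf{Step~2 (ordering).} After a backward edge lands on an already visited head, your Steps~1--2 need the path to run forward again through visited spine nodes. Policy~(3) permits that only along forward copies of kind \textsf{jump} or \textsf{loopback}, or with an $\Omega_b$ label. Once such copies exist, the path can pass the tail $t$ of some $b\in\mathcal{B}$, later take a backward edge whose head precedes $t$, re-cross $t$ along such a copy, and take $b$ then. Ruling out a return to $t$ ``by forward edges alone'' does not address this.

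Concretely, take the spine $0,1,\dots,m+1$ (source $0$, sink $m+1$), two $\mathrm{AND}$-labelled copies per spine pair ($\mu=2$), and \textsf{jump} edges $b_j\colon j+1\to j$ for $0\le j<m$. For $m=2$, both $0,1,0,1,2,1,2,3$ and $0,1,2,1,0,1,2,3$ are admissible with $\mathcal{B}=\{b_0,b_1\}$. In general, a maximal block of $r$ consecutive $b_j$ admits $2^{r-1}$ orders. The number of admissible (subset, order) pairs is then the Fibonacci number $F_{2m+1}=\Theta(\varphi^{2m})$, with $\varphi$ the golden ratio. Hence $N_{\mathrm{total}}=2^{m+1}F_{2m+1}$, which is not $O(\mu^{|V|}2^{B})=O(4^{m})$. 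Your fallback of one order per subset up to a constant is exactly what fails.

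\textbf{Step~3 (copies).} The copy count also fails once $\mu\ge3$. Distinctness caps the number $c$ of crossings of a spine pair at $\mu$. However, the number of ways to assign distinct copies to those crossings is $\mu(\mu-1)\cdots(\mu-c+1)$, which can reach $\mu!$. So the product over pairs is only bounded by $(\mu!)^{|V|-1}$. For an example, take three $\mathrm{AND}$-labelled copies per spine pair and two \textsf{jump} edges $b,b'\colon m\to0$, so $B=2$. The trails using both $b$ and $b'$ already number $2\cdot3\cdot6^{m}$, against $\mu^{|V|}2^{B}=4\cdot3^{m+2}$.

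\textbf{What survives.} Your Step~1 does give
$N_{\mathrm{total}}\le\bigl(\sum_{k=0}^{B}B!/(B-k)!\bigr)(\mu!)^{|V|-1}$.
A path is determined by the ordered sequence of backward edges it uses, together with, for each spine pair, the ordered list of copies on its at most $\mu$ crossings.

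Conversely, suppose forward edges never permit re-entry (the reading behind the paper's parenthetical $B\ge J$), and the source is first on the spine. Then no admissible path can move forward again after a backward edge, so every source-to-sink path is a single forward run. In that case $N_{\mathrm{total}}\le\mu^{|V|-1}$ holds trivially, but the repeated forward runs your argument is built around never occur.
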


\begin{proof}
Fix any edge-simple path $\pi$. 
Being node-simple away from the sink, and with the spine in fixed order, the spine nodes $\pi$ touches form a monotone subsequence determined by the backward edges $\pi$ traverses:
each backward edge selects a jump-back point, and between traversals the walk runs strictly forward along a contiguous segment. 
Thus $\pi$ is fixed by (i) a subset $S \subseteq E_{\mathrm{back}}$ (ordering forced up to constants) and (ii) a choice of $\leq \mu$ parallel edges per transition. 
Bounding (i) by $2^B$ and (ii) by $\mu^{|V|}$ (under $L + J =O(|V|)$) gives the claim.
\end{proof}

\begin{lemma}[Spine length]\label{lem:spine}
In a mostly-linear graph from $n$ statements with per-statement bound
$t$, $|E_{\mathrm{spine}}| = O(nt)$ and $|V| = O(nt)$.
\end{lemma}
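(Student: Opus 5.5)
The plan is to obtain the statement as a direct corollary of \lemref{build}, combined with the observation that every edge of a mostly-linear graph is an edge created by $\textsc{BuildGraph}$. The vertex bound is already available: \lemref{build} gives $|V|, |E| \in O(nt)$ for the graph produced from any trace of $n$ statements with per-statement token bound $t$. Mostly-linearity (\defref{mostly-linear}) only imposes a decomposition $E = E_{\mathrm{spine}} \sqcup E_{\mathrm{par}} \sqcup E_{\mathrm{back}}$ on the same construction; it adds no nodes. Hence $|V| = O(nt)$ follows immediately.

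For the spine, I will use the containment $E_{\mathrm{spine}} \subseteq E$, so that $|E_{\mathrm{spine}}| \le |E| = O(nt)$ by \lemref{build}. To make the bound self-contained and to expose the per-statement structure, I will also give a charging argument directly from \algref{build}. Each iteration of the main loop adds at most one blank node together with a \textsf{gap}/\textsf{jump} prefix edge, at most one loopback edge, and at most $|\text{ops}| \le t$ operator edges. By \algref{parse}, \textsc{ParseLine} returns at most $t$ operators. Every spine edge is one of these edges, so each is charged to the statement that created it, and each statement is charged $O(t)$ edges. Summing over the $n$ statements gives $|E_{\mathrm{spine}}| = O(nt)$. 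A second way to see the same bound is that the spine is a forward path, so it is node-simple on non-blank nodes by definition. Its length is therefore at most $|V|-1$, which again gives $O(nt)$.

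The main point to check is that the spine in \defref{mostly-linear} is a sub-structure of the constructed graph and not an abstract path that could reuse blank nodes. Such reuse could in principle make the path longer than $|V|$. Two facts rule this out. First, the spine consists of edges of $E$, and \algref{build} only creates blank nodes fresh, so each blank node has a unique creating statement. Second, $E_{\mathrm{spine}}$ is a set of edges. It is therefore bounded by $|E|$ regardless of how often it revisits blank nodes, and the bound $|E_{\mathrm{spine}}| \le |E|$ suffices without needing node-simplicity on blank nodes. When $t = O(1)$, both quantities are $O(n)$, which is the form used in \thmref{total-linear}.
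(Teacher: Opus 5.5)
Your proposal is correct and is essentially the paper's argument: both read the bound directly off \lemref{build}'s per-statement $O(t)$ node and edge count. The paper bounds the spine by $|V|-1$ (one forward edge per added node), which is your second route. Your primary bound $|E_{\mathrm{spine}}| \le |E|$ is slightly more robust, since it does not need the spine to be simple on blank nodes.
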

\begin{proof}
Each statement contributes $O(t)$ nodes and edges (\lemref{build}); 
the spine is a subset of forward edges, one per added node, so $|E_{\mathrm{spine}}| \leq |V| - 1 = O(nt)$.
\end{proof}

With both we can now show that $\textsc{EnumeratePaths}$ and $\textsc{SamplePaths}$ in the mostly-linear regime runs in polynomial (mostly linear!) time: 

\begin{lemma}[Enumeration, mostly-linear]\label{lem:enum-linear}
For a mostly-linear graph, \textsc{EnumeratePaths} with cap $K$ runs in
\[
T_{\mathrm{enum}} = O\!\left(\min(K, \mu^{|V|} \cdot 2^{B}) \cdot (|V| +
  L + J) \cdot \Delta + W_{\mathrm{dead}}\right).
\]
When $B = O(\log |V|)$ and $\mu = O(1)$ (the typical case) the path count is polynomial in $|V|$ and enumeration runs in polynomial time.
\end{lemma}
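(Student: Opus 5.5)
The plan is to combine the general enumeration bound of \lemref{enum2} with the mostly-linear path count of \lemref{paths-linear}. I will take the first-term form of \lemref{enum2}, $T_{\mathrm{enum}} = O(N_{\mathrm{emit}}\cdot(|V|+L+J)\cdot\Delta + W_{\mathrm{dead}})$ with $N_{\mathrm{emit}} = \min(K, N_{\mathrm{total}})$, as a black box. Since the DFS frame cost and depth bound of \lemref{dfs} hold for any multi-digraph, the only structural input needed is a better estimate of $N_{\mathrm{total}}$. Substituting $N_{\mathrm{total}} = O(\mu^{|V|}\cdot 2^B)$ from \lemref{paths-linear} immediately gives the displayed bound, because $\min(K,\cdot)$ is monotone. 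The residual term $W_{\mathrm{dead}}$ is carried along unchanged, exactly as in \lemref{enum2}.

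For the polynomial-time claim I will specialise to $B = O(\log|V|)$ and $\mu = O(1)$. The $2^B$ factor is immediate: it becomes $2^{O(\log |V|)} = |V|^{O(1)}$.

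The factor $\mu^{|V|}$ is the obstacle. Read literally, it is exponential even for constant $\mu>1$, so "$\mu = O(1)$" alone does not make it polynomial. My first attempt will be to reread the proof of \lemref{paths-linear}. Parallel duplicates occur only where a statement lands on an existing proposition via a new operator, so only edges that actually carry multiplicity contribute a factor of $\mu$. Under the spine/par/back decomposition, the number of such positions is at most $|E_{\mathrm{par}}|$. I would argue that $|E_{\mathrm{par}}| = O(\log |V|)$ in the typical regime, or simply take $\mu = 1$ for non-revisited spine edges, by charging each parallel edge to a revisit of a seen proposition, just as backward edges are charged. This replaces $\mu^{|V|}$ by $\mu^{O(\log|V|)} = |V|^{O(1)}$. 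If that charging argument does not go through, the fallback is to state the result under the interpretation that "$\mu = O(1)$" means multiplicity $1$ off a logarithmic set of positions, which is how \thmref{total-linear} later uses it with a constant $c$.

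With $N_{\mathrm{total}} = |V|^{c}$, the remaining factors are easy. $L + J \le L + B$ is $O(|V|)$, and $\Delta$ is bounded by a constant or at worst by $|E|$ through \lemref{spine}, so the first term is polynomial.

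For $W_{\mathrm{dead}}$, I would bound it by noting that every DFS frame is a node of the recursion tree. Each such frame is determined by a prefix of some edge-simple walk obeying the three policies, and the same subset-of-backward-edges and parallel-choice counting applies to prefixes. This gives at most $O(|V|\cdot \mu^{O(\log|V|)}\cdot 2^B)$ frames, each costing $O(\Delta)$ by \lemref{dfs}. That bound is also polynomial, which completes the claim.
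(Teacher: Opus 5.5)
Your first step is the paper's entire proof: substitute $N_{\mathrm{total}} = O(\mu^{|V|}\cdot 2^B)$ from \lemref{paths-linear} into \lemref{enum2}. For the polynomial claim the paper simply asserts $\mu^{|V|}\cdot 2^B = O(|V|^c)$ when $B = O(\log|V|)$ and $\mu = O(1)$. You are right that this fails for every constant $\mu \ge 2$, so the obstacle you found is a real error in the paper's argument.

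The gap is in your repair. Charging each parallel edge to a revisit is fine, but nothing bounds the number of revisits by $B$ or by $O(\log|V|)$. A revisit that moves forward in spine order is a \textsf{jump} edge that is not backward, and neither \defref{mostly-linear} nor the lemma constrains $|E_{\mathrm{par}}|$. One backward edge can therefore create linearly many parallel positions, even in graphs \textsc{BuildGraph} actually produces. Take the bare trace $q, p_1, \dots, p_m, p_1, \dots, p_m, \mathrm{THEN}\,a$, a ``let me restate the argument'' pass. Construction gives:
the spine $q \to p_1 \to \dots \to p_m \to a$ (\textsf{gap} edges plus the $\mathrm{THEN}$ edge);
one backward \textsf{jump} $p_m \to p_1$;
a parallel \textsf{jump} copy of each $p_i \to p_{i+1}$.
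This graph is mostly-linear with $B = 1$, $\mu = 2$, $\Delta = 2$ and $t = O(1)$. Yet a single forward pass, choosing either copy at each of the $m-1$ doubled positions, already gives $2^{m-1} = 2^{|V|-3}$ edge-simple source-to-sink paths. On such a pass every node is fresh, so all three policies allow these paths.

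So ``the path count is polynomial in $|V|$'' is false under the stated hypotheses, and no argument can establish it. Your fallback is therefore not an interpretation but a necessary change of hypothesis. \thmref{total-linear} does not supply it; it inherits the same error. Two ways forward:
\emph{Strengthen the hypothesis.} Assume $\mu = 1$, or multiplicity above $1$ on only $O(\log|V|)$ vertex pairs. Each such pair is traversed at most $\mu$ times, so the parallel-choice factor becomes $\mu^{O(\mu\log|V|)} = |V|^{O(1)}$. Your argument then goes through, including the prefix count for $W_{\mathrm{dead}}$, which the paper's proof leaves to \lemref{dead-linear}.
\emph{Weaken the conclusion.} Keep the hypotheses, drop the path-count claim, and get polynomial time only from a fixed cap $K$ plus a separate polynomial bound on $W_{\mathrm{dead}}$.

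One minor point: the same example has $J \ge m > B$, so $J \le B$ does not hold in general. Use $J \le |E| = O(\mu|V| + B)$ instead.
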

\begin{proof}
Substitute \lemref{paths-linear} into \lemref{enum}. 
Under these assumptions $\mu^{|V|} \cdot 2^B = O(|V|^c)$, so the leading term is polynomial in $|V|$, hence in $n$ by \lemref{spine}.
\end{proof}

\begin{lemma}[Sampling, mostly-linear]\label{lem:sample-linear}
With $\ell_{\max} = O(|V| + L + J)$, $\textsc{SamplePaths}$ runs in $T_{\mathrm{sample}} = O(S \cdot (|V| + L + J)^2 \cdot \Delta)$. 
When $L = O(1)$, $J = O(\log n)$, $\Delta = O(1)$, this is $O(S \cdot n^2 t^2)$, or $O(S \cdot n^2)$ when $t = O(1)$.
\end{lemma}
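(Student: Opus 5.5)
The plan is to specialise the general sampling bound of \lemref{sample} to the mostly-linear regime. \lemref{sample} gives $T_{\mathrm{sample}} = O(S\cdot \ell_{\max}^2\cdot\Delta)$ for any choice of walk-length cap. So the first claim follows by substituting $\ell_{\max} = O(|V|+L+J)$. This choice is also the natural one to justify: by \lemref{dfs}, every edge-simple source-to-sink path respecting the three policies has length at most $|V|-1+L+J$. A walk that has not reached the sink within that many steps therefore cannot yield a valid path, and truncating at this cap loses no admissible path.

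For the second claim I would bound $|V|$ using \lemref{spine}, which gives $|V| = O(nt)$ for a mostly-linear graph built from $n$ statements. Under $L=O(1)$ and $J=O(\log n)$,
\[
|V|+L+J = O(nt + 1 + \log n) = O(nt),
\]
since $\log n = O(n) \subseteq O(nt)$ for $t\geq 1$. Squaring and multiplying by $S$ and $\Delta=O(1)$ yields $O(S\cdot n^2t^2)$. Setting $t=O(1)$ then gives $O(S\cdot n^2)$, which matches the sampling branch of \thmref{total-linear}.

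The only step needing care is the use of the $O(\ell_{\max})$ per-step membership test inside \lemref{sample}. That test is exactly where the quadratic factor comes from. I would not try to improve it, for example with a hash set giving an $O(S\,\ell_{\max}\Delta)$ bound, because the statement asks only for the quadratic bound. I would simply note that the absorption of $L$ and $J$ into the $nt$ term is asymptotic. The proof is then a short chain of substitutions, with no genuine obstacle beyond checking that the hypotheses of \lemref{spine} hold, namely that the graph is mostly-linear and was produced by \textsc{BuildGraph}.
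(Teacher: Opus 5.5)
Your proposal is correct and follows the paper's proof: it substitutes $\ell_{\max} = O(|V|+L+J)$ into \lemref{sample}, then uses \lemref{spine} ($|V| = O(nt)$) to absorb $L = O(1)$ and $J = O(\log n)$ into $O(nt)$. Your justification of the cap via \lemref{dfs} is a harmless but unnecessary aside, because the lemma takes $\ell_{\max}$ as a hypothesis.
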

\begin{proof}
Straightforward, from substituting $\ell_{\max} = O(|V| + L + J)$ into \lemref{sample} and using \lemref{spine} for simplification. 
\end{proof}

\begin{lemma}[Dead-end work, mostly-linear]\label{lem:dead-linear}
For a mostly-linear graph, $W_{\mathrm{dead}} = O((|V| + L + J) \cdot
\Delta \cdot 2^{B})$.
\end{lemma}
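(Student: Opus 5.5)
Every dead frame of \textsc{EnumeratePaths} (\algref{enum}) will be charged to a pair consisting of a \emph{route prefix} and a \emph{forward run}. Throughout, the $K$-cap is ignored, since it can only reduce work. A non-yielding frame at $v$ is a DFS call whose subtree emits no path. Its own cost is $O(\deg^+(v)) = O(\Delta)$ by \lemref{dfs}. It therefore suffices to show that the number of dead frames is $O((|V| + L + J)\cdot 2^B)$.

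\textbf{Step 1: structure of the DFS state.} I will reuse the decomposition from the proof of \lemref{paths-linear}. Any edge-simple partial path under the three policies consists of maximal forward segments along the spine, joined by backward edges. The set of backward edges used so far, a subset $S \subseteq E_{\mathrm{back}}$, is traversed in an order forced up to constants. Together with the current node, $S$ determines the spine positions the prefix has visited. There are at most $2^B$ such subsets.

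\textbf{Step 2: each backward-edge subset accounts for $O(|V| + L + J)$ dead frames.} Fix $S$. The frames whose prefix uses exactly $S$ as its backward-edge set all lie on forward runs that start at the head of the last backward edge in $S$ (or at the source when $S = \emptyset$). Along such a run the DFS moves forward through spine positions. By \lemref{dfs}, the total depth is bounded by $|V| - 1 + L + J$, so a single run has $O(|V| + L + J)$ frames. Parallel edges must not multiply this count. The point is that dead frames are determined by the node reached and by $S$, not by which parallel copy was used: copies lead to identical subtrees, and the dead-or-alive status of a frame depends only on $(v, S, \ell)$. I will charge all copies of a dead subtree to a single canonical representative, which absorbs the $\mu$ factor into the $\Delta$ already paid per frame. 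Since $\ell \le L = O(1)$ is determined by $S$ up to a constant, this also contributes only a constant factor.

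\textbf{Step 3: summation.} Summing over the $2^B$ subsets gives $O((|V|+L+J)\cdot 2^B)$ dead frames. Multiplying by $O(\Delta)$ work per frame yields the stated bound $W_{\mathrm{dead}} = O((|V| + L + J)\cdot\Delta\cdot 2^B)$.

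The main obstacle is Step 2's claim that parallel edges do not multiply the dead work, because a naive count gives $\mu^{|V|}$ dead frames. I will first try to justify memoising on $(v,S,\ell)$. If the algorithm as written genuinely re-explores dead subtrees once per parallel copy, I will instead argue that every dead frame belongs to a maximal dead subtree whose root is entered through a forward run with no further branching. This relies on the mostly-linear assumption that parallel edges are duplicates of spine edges: along a forward run, parallel copies only create branching at nodes where the continuation is live, so dead branches hang off only at the $O(|V|+L+J)$ positions per $S$.
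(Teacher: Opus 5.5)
\textbf{Comparison with the paper.} Your Steps 1--3 are the paper's argument. The paper identifies dead-end branches with backward-edge choice sequences ($\le 2^B$ of them), bounds each by the depth $O(|V|+L+J)$ of Lemma~\ref{lem:dfs}, and pays $O(\Delta)$ per frame. The paper never mentions parallel edges, so the obstacle you isolate in Step 2 is genuine and is not resolved there either.

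\textbf{Why your Step 2 fixes fail.} Neither fix closes the gap. Algorithm~\ref{alg:enum} keeps no cache, so memoising on $(v,S,\ell)$ proves a bound for a different algorithm. Charging the copies of a dead subtree to a canonical representative only relabels work the DFS actually performs. Folding $\mu$ into the per-frame $\Delta$ covers the branching at a single frame, not its compounding along the prefix. Your fallback conflates spine positions with DFS frames. Even if dead branches hang off only $O(|V|+L+J)$ positions per $S$, the frame at such a position is instantiated once for every distinct choice of parallel copies on the live prefix leading to it. The dead branch is re-explored each time.

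\textbf{The bound fails for $\mu\ge 2$.} Take the spine $v_0\to v_1\to\cdots\to v_m\to t$, with source $v_0$ and sink $t$. Double each edge $v_i\to v_{i+1}$ ($i<m$) with \textsf{logical} copies carrying a non-connective label such as $K$, and add one backward \textsf{jump} edge $v_m\to v_{m-1}$. This graph is mostly-linear with $\mu=\Delta=2$ and $B=J=1$. Each of the $2^m$ prefixes reaching $v_m$ takes the jump. At $v_{m-1}$, one copy into $v_m$ is already in $E_v$, and the other enters a visited node through a forward non-revisit edge, so that frame dies. Hence $W_{\mathrm{dead}}\ge 2^m$, against a claimed $O(m)$.

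\textbf{What survives.} Your argument and the paper's are sound only for $\mu=1$. There, a path that has taken a backward edge finds the spine edges out of its head already in $E_v$ and can only continue backwards. Each dead frame is then fixed by its backward-edge set and position. For general $\mu$ you need one of two things. Either add a $\mu$-dependent factor like the one in Lemma~\ref{lem:paths-linear}. Or charge each dead branch to the live prefix it hangs from, i.e.\ fold it into the $N_{\mathrm{emit}}$ term of Lemma~\ref{lem:enum}, rather than bounding $W_{\mathrm{dead}}$ on its own.
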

\begin{proof}
A branch fails to yield only by exhausting feasible out-edges before the
sink. Forward spine edges either advance or terminate at a blank
($O(\Delta)$ work, then return). Dead-end branches thus correspond to
backward-edge choice sequences failing to re-enter the spine, $\leq 2^B$
of them; each has depth $O(|V| + L + J)$ (Lemma~\ref{lem:dfs}(ii)) with
$O(\Delta)$ per frame.
\end{proof}

\begin{lemma}[Verifier cost, mostly-linear]\label{lem:verify-linear}
On a mostly-linear graph with $B = O(\log n)$ and $\kappa_{\max} = O(1)$, $T_{\mathrm{verify}} = O(n)$.
\end{lemma}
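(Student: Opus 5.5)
The plan is to start from \lemref{verifier-amortized}, which splits the verifier cost into two parts. The first is a per-statement update term of $O(nt)$, from \propref{vs-update}. The second is a back-propagation term $\sum_i O(|\mathcal{K}_i|\cdot\kappa_{\max})$, from \propref{backprop}. Under the hypotheses of \thmref{total-linear}, $t=O(1)$, so the first term is already $O(n)$. The whole argument therefore comes down to showing that the summed back-propagation work is $O(n)$ once $\kappa_{\max}=O(1)$ and $B=O(\log n)$. The naive bound $|\mathcal{K}_i|=O(n)$ gives only $O(n^2)$, so a sharper count is needed.

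The key step is to stop charging every constraint at every step. Instead, charge each constraint only when one of its operands or its target changes state. The re-evaluation is organised as follows.
\begin{itemize}[leftmargin=*]
\item Index constraints by the propositions they mention. Since $\kappa_{\max}=O(1)$, each constraint is registered under $O(1)$ propositions.
\item At step $i$, re-evaluate only the constraints touching the proposition(s) updated by $s_i$, plus those triggered by a Pin.
\item This is justified because a constraint's outcome under $\Sigma_i$ depends only on its operands and target. If none of them changed, the outcome equals its outcome under $\Sigma_{i-1}$, so by the deduplication policy nothing is emitted and no work is needed.
\end{itemize}

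It remains to bound how many times constraints get re-evaluated. I will argue this in two cases.
\begin{itemize}[leftmargin=*]
\item \textbf{Pins.} By \propref{t6-mono}, resolvedness of $\tau(\pi)$ is nondecreasing except at licensed revisions. A Pin promotes $U_k$ to $T$ or $F$, so each proposition is pinned at most once between revisions. Each Pin, and each Pin it cascades into, costs $O(1)$ per touching constraint.
\item \textbf{Revisions and re-commitments.} A licensed revision requires a revision segment, which contains a \textsf{jump} or \textsf{loopback} edge. These are backward edges in the spine order of \defref{mostly-linear}, so there are at most $B=O(\log n)$ revision events.
\end{itemize}

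The main obstacle is the fan-in of a single proposition. A proposition touched by many constraints would make even one update costly. The first attempt is the following counting argument. In the mostly-linear regime, a constraint mentioning an earlier proposition must reach it through a connective edge to an earlier proposition, which is a backward edge, or through a parallel edge. Hence the total incidence $\sum_\pi \#\{c\ni\pi\}$ is $O(n\kappa_{\max}+B\mu)$. Since each forward statement adds $O(1)$ constraints of size $O(1)$, that sum is $O(n)$.

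Putting the pieces together, the total work is $O(n)$ for forward statements plus $O(n)$ for Pins. The $O(\log n)$ revision and re-commitment events cost $O(\log n)$ each when charged against high-fan-in propositions. That part is $O(\log^2 n)=o(n)$, so the overall back-propagation cost is $O(n)$.

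If the fan-in bound fails for bare re-assertions of hub propositions, the fallback is to amortise through deduplication. Events total $O(|\mathcal{K}|)=O(n)$, and each non-emitting re-evaluation is charged to the $O(1)$-size update that triggered it.
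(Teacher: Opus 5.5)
Your route differs from the paper's. The paper's proof takes $O(t)=O(1)$ per-statement work plus $O(|\mathcal{K}_i|\,\kappa_{\max})$ back-propagation per step, asserts that prior constraints are ``largely inert'', and cites deduplication. Your diagnosis that this does not give $O(n)$ is right. Deduplication bounds emitted events, not evaluations; the paper itself concedes that each constraint is still evaluated per step. Moreover, a trace alternating bare and $\mathrm{THEN}$ statements on fresh propositions has $B=0$ yet $|\mathcal{K}_i|\approx i/2$, so scanning all of $\mathcal{K}_i$ at every step, as \propref{backprop} describes, costs $\Theta(n^2)$. Indexing constraints by proposition and re-evaluating only when $\tau$ changes at an operand or target is the idea that makes ``largely inert'' precise. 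Say explicitly that you are bounding this incremental implementation of \textsc{VerifierStep}.

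The gap is in how you count triggers and their cost.

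(i) K-Commit, B-Commit and Bare-Assert overwrite $\tau(\pi)$ with no licensing condition. An unlicensed flip is merely logged ($K$-$K$, $B$-$B$, bare-reassertion), and a $B$- or bare statement may reset a resolved $\pi$ to $U_k$, enabling another Pin. None of this is a licensed revision, so your revision count does not bound the $\tau$-changes or re-pins. \propref{t6-mono} does not rescue it either, since its proof never treats unlicensed or $B$-commitments on $\pi$ itself. Even licensed revisions are not at most $B$ in total, because one \textsf{jump} edge can lie in the revision segments of many different propositions.

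(ii) Your incidence bound is a total, but a trigger on $\pi$ costs its fan-in $f_\pi=\#\{c\in\mathcal{K}:\pi\in c\}$. Your ``$O(\log n)$ each'' and your $O(n)$ for Pins both silently use a per-proposition bound you never prove. The fallback fails for the same reason: one update on $\pi$ causes $f_\pi$ non-emitting evaluations, so it cannot be charged $O(1)$.

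The fix is to count \emph{mentions} rather than revisions. Let $r_\pi$ be the number of statements re-mentioning $\pi$.
\begin{itemize}[leftmargin=*]
\item Each statement places its proposition in at most one chain, hence in at most one constraint, so $f_\pi\le 1+r_\pi$.
\item $\tau(\pi)$ changes only at a mention or a Pin, and after a Pin it can return to $U_k$ only via a mention. So there are at most $2(1+r_\pi)$ triggers on $\pi$.
\item In \textsc{BuildGraph} every re-mention adds a fresh edge with head $\pi$: a \textsf{jump}, a \textsf{loopback}, or the last operator edge out of a new blank.
\item In a mostly-linear graph the edges into $\pi$ are its spine edge, at most $\mu-1$ duplicates of it, and the $B_\pi$ backward edges into $\pi$. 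Hence $r_\pi\le\mu+B_\pi$.
\end{itemize}
Back-propagation therefore costs
\[
O\Big(\kappa_{\max}\sum_\pi(1+r_\pi)^2\Big)=O\big(n\mu^2+B^2\big)=O(n),
\]
which in fact needs only $B=O(\sqrt{n})$.
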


\begin{proof}
Per-statement update and chain work are $O(t) = O(1)$ (\propref{vs-update}), totalling $O(n)$. 
Back-propagation costs $O(|\mathcal{K}_i| \cdot \kappa_{\max}) = O(|\mathcal{K}_i|)$ (\propref{backprop}); 
when prior constraints are largely inert. 
Also recall from earlier that the deduplication policy bounds total re-emitted events at $O(n)$.
\end{proof}

We are now ready to prove \thmref{total-linear}:

\begin{theorem}[Mostly-linear total]\label{thm:total-linear2}
For a mostly-linear graph from $n$ statements with $t, \mu, \Delta, L,
\kappa_{\max} = O(1)$ and $B = O(\log n)$,
\[
T_{\mathrm{total}} =
\begin{cases}
O(n + \min(K, n^c) \cdot n) & \text{(enumeration),} \\
O(n + S \cdot n^2) & \text{(sampling),}
\end{cases}
\]
for a constant $c$ depending on backward-edge sparsity, in $O(n + N \cdot
\bar p)$ space.
\end{theorem}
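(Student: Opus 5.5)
The plan is to follow the same decomposition used for \thmref{total-worst2}, namely $T_{\mathrm{total}} = O(nt) + T_{\mathrm{verify}} + T_{\mathrm{paths}} + O(N \cdot \bar p)$, and to replace each summand by its mostly-linear counterpart. Throughout I take $t, \mu, \Delta, L, \kappa_{\max} = O(1)$ and $B = O(\log n)$, as in the statement.

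First, the parse/build term. By \lemref{build} with $t = O(1)$ it is $O(n)$, and by \lemref{spine} we get $|V| = O(n)$. Since $J \le B = O(\log n)$ (\defref{mostly-linear}), we also have $|V| + L + J = O(n)$. Second, the verification term is $O(n)$ by \lemref{verify-linear}. Both contributions are absorbed into the leading $O(n)$ in the statement.

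Third, the path term. For sampling, \lemref{sample-linear} with $t = O(1)$ gives $O(S \cdot n^2)$ directly. For enumeration, I will substitute into \lemref{enum-linear}. The obstacle is that \lemref{paths-linear} gives $N_{\mathrm{total}} = O(\mu^{|V|} 2^B)$, and $\mu^{|V|}$ is polynomial in $n$ only if parallel edges are rare. I will handle this as \lemref{enum-linear} does: under the typical-case assumption the product $\mu^{|V|} 2^B$ is $O(n^c)$. Concretely, $2^B = 2^{O(\log n)} = n^{O(1)}$, and the parallel-edge factor is charged to the constant $c$, treating the exponent of $\mu$ as the number of genuinely parallel transitions rather than $|V|$. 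This is the step I expect to be most delicate, and I will state that $c$ absorbs both the backward-edge and the multiplicity sparsity. Each emitted path then costs $O((|V| + L + J)\Delta) = O(n)$, which gives $O(\min(K, n^c) \cdot n)$. The dead-end work from \lemref{dead-linear} is $O(n \cdot 2^B) = O(n^{1+O(1)})$, which must also be folded into the $n^c \cdot n$ term by enlarging $c$. Under the cap $K$ this is the weakest point of the argument, because dead-end work is not strictly bounded by $K$. If needed, I will bound it by $n \cdot 2^B$ and note that it fits into $n^{c}\cdot n$.

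Finally, the scoring term $O(N \cdot \bar p)$ satisfies $\bar p = O(n)$, and $N \le \min(K, n^c)$ or $N \le S$, so it is dominated by the path term. For space, \thmref{total-worst2} gives $O(nt + N\bar p)$, which becomes $O(n + N\bar p)$ with $t = O(1)$. The auxiliary DFS space $O(|V| + L + J) = O(n)$ and the sampling space $O(\ell_{\max}) = O(n)$ are both absorbed. Summing all terms gives the two cases of the statement.
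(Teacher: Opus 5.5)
Your route is the paper's own proof. It uses the same split into parse/build (\lemref{build}), verification (\lemref{verify-linear}), paths (\lemref{enum-linear} with $W_{\mathrm{dead}}$ from \lemref{dead-linear}, or \lemref{sample-linear}) and scoring. The two steps you flag as delicate are exactly the ones the paper asserts without argument. Your patches do not close them, though, so the enumeration case is not established.

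\textbf{The multiplicity factor.} $\mu = O(1)$ does not make $\mu^{|V|}$ polynomial, and ``charging the multiplicity to $c$'' is an added hypothesis rather than a proof step. Consider the all-bare trace $v_0, v_1, \dots, v_m, v_0, v_1, \dots, v_m$. It builds a mostly-linear graph with $t=1$, $\mu=\Delta=2$ and $B=1$ (only $v_m \to v_0$ is backward), in which every spine edge $v_i \to v_{i+1}$ acquires a parallel \textsf{jump} copy. This gives $2^m$ edge-simple paths from $v_0$ to $v_m$ with $m = \Theta(n)$, so for $K \ge 2^m$ enumeration costs $2^{\Omega(n)}$ rather than $O(n^{c+1})$. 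With $J$ counting \textsf{jump}, \textsf{loopback} and connective edges, the same example has $J = m+1$ but $B=1$. So the inequality $J \le B$ you take from \defref{mostly-linear} is false in general; this is harmless for $|V|+L+J = O(n)$, since $J \le |E|$. What makes your idea work is an explicit assumption $J = O(\log n)$, as \lemref{sample-linear} already uses. Combine it with the fact that \textsc{BuildGraph} only ever creates a second edge between the same ordered pair as a \textsf{jump} or \textsf{loopback} edge. Then at most $J$ parallel copies exist, and the multiplicity factor is $\mu^{O(J)} = n^{O(1)}$.

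\textbf{The dead-end work.} The cap only stops the search after $K$ yields; it does nothing to limit non-yielding frames explored before then. The bound $O(n\,2^B)$ of \lemref{dead-linear} is independent of $K$. It fits under $n^c \cdot n$ but not under $\min(K,n^c)\cdot n$ once $K < 2^B$. In particular, for the default constant $K$ the claimed enumeration cost is $O(n)$, while $2^B$ can be any fixed polynomial in $n$. You must either carry $O(n\,2^B)$ as a separate term or prove an amortisation charging dead-end frames to yields, which the lemma does not supply.

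\textbf{The verification term.} The same objection applies to \lemref{verify-linear}, which you (and the paper) cite without comment. By \propref{backprop}, every $c \in \mathcal{K}_i$ is re-evaluated at every step, and the paper itself notes that deduplication does not change this. So the trace $p_1, \mathrm{THEN}\ q_1, p_2, \mathrm{THEN}\ q_2, \dots$ (a simple path with $B=0$ and $\mu=\Delta=\kappa_{\max}=1$) already spends $\sum_i |\mathcal{K}_i| = \Theta(n^2)$ on back-propagation. That exceeds the claimed $O(n)$ when $K$ is constant. The lemma's ``largely inert'' proviso is therefore an unstated hypothesis, such as $|\mathcal{K}| = O(1)$.

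The sampling case and the space bound are fine as you wrote them, since $T_{\mathrm{verify}} = O(n^2)$ is absorbed by $S\cdot n^2$.
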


\begin{proof}
Parse/build $O(n)$ (\lemref{build}); 
verify $O(n)$ (\lemref{verify-linear}); 
paths $O(\min(K, n^c) \cdot n)$ (\lemref{enum-linear}, with $W_{\mathrm{dead}}$ absorbed by \lemref{dead-linear}) or $O(S \cdot n^2)$ (\lemref{sample-linear}); 
scoring $O(N \cdot \bar p)$ with $N = O(\min(K, n^c))$ or $S$ and $\bar p = O(n)$, absorbed into the leading term. 
Summing gives the claim.
\end{proof}

CoT traces of length $n = 20$--$100$ produce graphs with $|V| \approx n$, $\Delta \leq 4$, $\mu \leq 2$, and a handful of backward edges. 
Enumeration stays within the default cap $K = 10^4$, the coherence report is produced in time polynomial in trace length, and the verifier runs linearly.
Hence, in the practical regime the pipeline is polynomial in trace length and tractable for realistic CoT sizes.

\section{Extended Results}\label{app:extended-results}

\subsection{Hard-fail and trace-level scores per model}

Hard-fail rates broken down by model and dataset split are in \tabref{allmodel-split-hard-fail}. 
LINC's near-zero hard-fail rates (0-5\% across all generators and splits) are a definitional consequence of its formal-verification mechanism: 
LINC hard-fails only when the LLM-generated FOL translation is syntactically ill-formed, not when the reasoning is substantively wrong. 
This is a \textbf{very different notion of `failure'} from the LLM-judge or PRM methods, which flag reasoning-level errors, and motivates our reporting of both hard-fail agreement and scalar correlations rather than either alone. 
LLM-judge hard-fail rates cluster strongly by judge capability: 
capable judges (Claude Opus, GPT-5.6, Gemma-4B) hard-fail 50-70\% of traces, while smaller or less-aligned judges (Qwen-3.5-9B, Llama-3.1-8B, GLM-4.7) hard-fail 97-100\%. 
This spread suggests hard-fail as measured is at least as much a property of the LLM as of the traces themselves. 
Finally, the PRM methods we tested exhibit near-total hard-fail rates (90-100\%) on natural-language reasoning traces despite being trained for step-wise correctness signals in mathematics; 
this is a \textbf{cross-domain transfer failure} worth further study. 

In \tabref{allmodel-split-trace-score} we report mean trace-level scores over the same breakdown. 
As before, LLM-judges span a wide range of mean scores (2.0-4.2), and their rankings across generation models are not consistent. 
PRM methods produce mean scores from 0.22 to 4.32 depending on model, with the near-zero mean of Qwen-2.5-PRM reflecting its consistent near-boundary decisions. 
Type-6 sits between the LLM-judge and PRM regimes, at a mean of 1.34 overall with a notable spike on the MMLU split (2.06). 
The trivial baseline sits at 2.60 across all splits by construction, providing a fixed reference line: 
any method whose mean score is close to this line is not distinguishing high- from low-quality traces on average, though it may still be doing so per-trace.

\begin{table}[]
    \centering
    \small
    \begin{tabular}{lrrrrr}
Hard-fail rate & CoT-Logic & DebateLab & MMLU & Hand-crafted & All \\
\midrule
\textit{LINC} & & &  & & \\
gemma4\_e4b & 0.0\%\,(183) & 0.0\%\,(228) & 0.0\%\,(348) & 0.0\%\,(28) & 0.0\%\,(787) \\
gpt\_56 & 1.1\%\,(183) & 7.0\%\,(273) & 3.8\%\,(346) & 17.6\%\,(34) & 4.8\%\,(836) \\
qwen35\_9b & 0.0\%\,(225) & 0.0\%\,(257) & 0.0\%\,(336) & 0.0\%\,(35) & 0.0\%\,(853) \\
\midrule
\textit{LLM-as-a-judge} & & &  & & \\
opus48 & 69.6\%\,(276) & 84.2\%\,(311) & 51.4\%\,(370) & 84.6\%\,(39) & 68.0\%\,(996) \\
gemma4 & 62.1\%\,(256) & 54.2\%\,(299) & 66.3\%\,(362) & 64.9\%\,(37) & 61.3\%\,(954) \\
glm47 & 100.0\%\,(148) & 100.0\%\,(186) & 99.3\%\,(150) & 100.0\%\,(16) & 99.8\%\,(500) \\
gpt56 & 63.1\%\,(274) & 79.4\%\,(311) & 50.5\%\,(370) & 76.9\%\,(39) & 64.1\%\,(994) \\
llama31 & 98.0\%\,(100) & 99.0\%\,(103) & 97.5\%\,(161) & 100.0\%\,(15) & 98.2\%\,(379) \\
qwen35 & 99.6\%\,(227) & 100.0\%\,(256) & 95.4\%\,(304) & 94.6\%\,(37) & 97.9\%\,(824) \\
\midrule
\textit{PRM} & & &  & & \\
llama31-prm & 64.9\%\,(276) & 73.0\%\,(311) & 64.6\%\,(370) & 92.3\%\,(39) & 68.4\%\,(996) \\
math-shepherd & 98.6\%\,(276) & 100.0\%\,(311) & 90.8\%\,(370) & 100.0\%\,(39) & 96.2\%\,(996) \\
qwen25-prm & 100.0\%\,(276) & 100.0\%\,(311) & 100.0\%\,(370) & 100.0\%\,(39) & 100.0\%\,(996) \\
\midrule
\textit{ROSCOE} & & &  & & \\
deberta-large & 100.0\%\,(276) & 100.0\%\,(311) & 100.0\%\,(370) & 100.0\%\,(39) & 100.0\%\,(996) \\
\midrule
\textit{Trivial} & & &  & & \\
trivial & 0.0\%\,(276) & 0.6\%\,(311) & 0.0\%\,(370) & 0.0\%\,(39) & 0.2\%\,(996) \\
\midrule
\textit{Type-6} & & &  & & \\
type6 & 87.9\%\,(273) & 89.0\%\,(309) & 79.2\%\,(370) & 94.7\%\,(38) & 85.3\%\,(990) \\
    \end{tabular}
\caption{Hard-fail rates by generation model and evaluation split, per method. 
Each cell reports the fraction of traces in the split that the method flagged as failing, with the number of traces in parentheses. 
Higher rates indicate the method rejects more traces; 
they do not directly indicate accuracy, as ground truth in most splits is unknown. 
LINC (three models, top block) produces a formal verdict on translated first-order-logic hypotheses and rejects a trace only when the formal check itself fails, giving very low hard-fail rates by construction. 
Trivial (bottom block) is a uniform-random baseline calibrated to reject $\sim$0\% of traces. 
Between these extremes, LLM-judge, PRM, and Type-6 methods vary substantially across generation model and split, reflecting both genuine reasoning-quality variation across generators and each method's internal sensitivity to trace surface features.}
\label{tab:allmodel-split-hard-fail}
\end{table}

\begin{table}[]
    \centering
    \small
    \begin{tabular}{lrrrrr}
Mean trace score & CoT-Logic & DebateLab & MMLU & Hand-crafted & All \\
\midrule
linc:gemma4\_e4b & --- & --- & --- & --- & --- \\
linc:gpt\_56 & --- & --- & --- & --- & --- \\
linc:qwen35\_9b & --- & --- & --- & --- & --- \\
\midrule
llm\_judge:claude\_opus\_48 & $3.446$ & $3.068$ & $3.741$ & $3.026$ & $3.421$ \\
llm\_judge:gemma4\_e4b & $4.277$ & $4.174$ & $4.022$ & $4.108$ & $4.142$ \\
llm\_judge:glm\_47 & $3.209$ & $3.081$ & $2.713$ & $2.625$ & $2.994$ \\
llm\_judge:gpt\_56 & $3.507$ & $3.048$ & $3.711$ & $3.026$ & $3.421$ \\
llm\_judge:llama\_31\_8b & $3.640$ & $3.379$ & $2.957$ & $3.400$ & $3.269$ \\
llm\_judge:qwen35\_9b & $2.004$ & $1.941$ & $2.132$ & $2.000$ & $2.032$ \\
\midrule
prm:llama31-8b-prm & $4.374$ & $4.249$ & $4.367$ & $4.183$ & $4.325$ \\
prm:math-shepherd & $1.724$ & $1.470$ & $2.145$ & $1.602$ & $1.796$ \\
prm:qwen25-prm & $0.208$ & $0.207$ & $0.226$ & $0.224$ & $0.215$ \\
\midrule
roscoe:deberta-large & $1.414$ & $1.423$ & $1.199$ & $1.381$ & $1.336$ \\
\midrule
trivial:trivial & $2.593$ & $2.648$ & $2.559$ & $2.563$ & $2.596$ \\
\midrule
type6 & $0.850$ & $0.976$ & $2.062$ & $0.855$ & $1.342$ \\
\end{tabular}
\caption{Mean trace-level coherence score $\in [0, 5]$ by
generation model and split, per method. Each cell reports the
method-emitted or method-derived scalar averaged across traces in
the split. LINC rows are marked with `---' because LINC produces
a binary verdict rather than a scalar coherence score, and no
principled projection to $[0, 5]$ exists; agreement between LINC
and the scalar methods is reported through hard-fail agreement
(\tabref{hard-fail-agreement-correlation}) instead. 
}\label{tab:allmodel-split-trace-score}
 \end{table}

\subsection{Per-model analysis}\label{app:per-method-tables}

We discuss further details from the ablation study from \secref{per-baseline}. 

\paragraph{Trivial baseline.}
Trivial's four signals span the corpus with median statement count
$55.5$ (IQR $[41, 75]$) and typ--token ratio median $0.376$. The
non-trivial correlation with LLM-judges does not necessarily
indicate false positives on the judges' part: shorter or more
repetitive traces may be genuinely less coherent, and Trivial may
be picking up a real signal. 
\tabref{trivial-summary} reports the distribution of the four
Trivial signals. Ranges are broad on all four,
providing informative surface features despite the baseline's
by-construction low hard-fail rate.

\begin{table}[t]
\centering
\small
\begin{tabular}{lrrrrr}
\toprule
Signal & Min & Q1 & Median & Q3 & Max \\
\midrule
Statement count & 4 & 41.0 & 55.5 & 75.0 & 326 \\
Connective density & 0.000 & 0.183 & 0.247 & 0.316 & 0.772 \\
Repetition rate & 0.000 & 0.022 & 0.046 & 0.089 & 0.386 \\
Type-token ratio & 0.080 & 0.301 & 0.376 & 0.437 & 0.740 \\
\bottomrule
\end{tabular}
\caption{Distributional summary of the four Trivial signals.}
\label{tab:trivial-summary}
\end{table}
\subsubsection{ROSCOE}
\label{sec:roscoe-diagnostics}

\paragraph{ROSCOE.}
Individual ROSCOE metrics carry informative distributions: chain
self-consistency median $0.056$ (IQR $[0.042, 0.070]$); CSE-step
(contradiction) median $0.033$ (IQR $[0.018, 0.063]$). The
degenerate hard-fail flag is therefore an artefact of ROSCOE's
default threshold rather than of its underlying signals; the
threshold was calibrated on datasets other than natural-language
reasoning and does not transfer here. 
\tabref{roscoe-summary} reports the distribution of the six ROSCOE
metrics used to compute the ROSCOE trace-level scalar and hard-fail
flag. %

\begin{table}[t]
\centering
\small
\begin{tabular}{lrrrrr}
\toprule
Metric & Min & Q1 & Median & Q3 & Max \\
\midrule
Faithfulness-step & 0.037 & 0.147 & 0.264 & 0.360 & 0.538 \\
Informativeness-step & 0.163 & 0.251 & 0.284 & 0.322 & 0.508 \\
Repetition-step & 0.281 & 0.559 & 0.600 & 0.651 & 0.863 \\
Reasoning alignment & 0.040 & 0.299 & 0.397 & 0.510 & 0.837 \\
Chain self-consistency & 0.001 & 0.042 & 0.056 & 0.070 & 0.313 \\
CSE-step (contradiction) & 0.002 & 0.018 & 0.033 & 0.063 & 0.432 \\
\bottomrule
\end{tabular}
\caption{Distributional summary of ROSCOE metrics.}
\label{tab:roscoe-summary}
\end{table}

\paragraph{LINC.}
\tabref{linc-summary} reports corpus-wide LINC translation-fidelity
distributions and verdict counts. 
Predicate grounding is at $1.000$ across all three LINC models,
suggesting that ontology extraction is a stable operation even
when the source-to-logic translation is not. 
\tabstworef{linc-verdicts-per-model}{linc-fidelity-per-model}
disaggregate these by underlying translation model. As discussed in
the main text, only GPT-5.6 produces informative
entailment/non-entailment verdicts on this corpus; Qwen-3.5-9B and
GLM-4.7 emit unknown or error verdicts on nearly all traces despite
Qwen achieving the highest coverage ratio of the three. 
All models showed zero within-method agreement (omitted).
The observation that GPT-5.6 achieves the lowest coverage ratio
but the only informative verdict distribution suggests that when
LINC's translation is confident enough to complete, it is also
confident enough to commit to a verdict; when translation is
partial, LINC reserves judgement and emits `unknown'.

\begin{table}[t]
\centering
\small
\begin{tabular}{lrrrrr}
\toprule
Fidelity metric & Min & Q1 & Median & Q3 & Max \\
\midrule
Coverage ratio & 0.001 & 0.221 & 0.329 & 0.431 & 1.000 \\
Predicate grounding & 0.000 & 1.000 & 1.000 & 1.000 & 1.000 \\
Parse errors per trace & 0 & 0.0 & 1.0 & 4.0 & 61 \\
Undeclared predicates & 0 & 0.0 & 0.0 & 0.0 & 60 \\
\midrule
Verdict & Count & \% of corpus &&& \\
\midrule
Entailment & 38 & 3.8\% & & & \\
Non-entailment & 189 & 19.0\%  & & & \\
Contradiction (hard-fail) & 14 & 1.4\% & && \\
Unknown & 746 & 74.9\% & && \\
Error (parse, timeout, or translation failure) & 9 & 0.9\% & && \\
\bottomrule
\end{tabular}
\caption{LINC summaries of translation fidelity (top) and verdict distributions (bottom). 
Coverage ratio is the fraction of source premises translated; predicate grounding the
fraction of predicates in the trace-extracted ontology. 
For verdicts, the hard-fail flag fires on contradiction.
}
\label{tab:linc-summary}
\end{table}

\begin{table}[h]
\centering
\small
\begin{tabular}{lrrrrr}
\toprule
Model & Entailment & Non-entail. & Contradiction & Unknown & Error \\
\midrule
GPT-5.6      & 14.7\% & 65.3\% & 4.0\% & 0.0\%  & 16.1\% \\
Gemma-4-E4B  & 0.0\% & 0.0\%  & 0.0\% & 79.0\% & 21.0\% \\
Qwen-3.5-9B  & 0.0\%  & 0.0\%  & 0.0\% & 85.6\% & 14.4\% \\
GLM-4.7      & 0.0\%  & 0.0\%  & 0.0\% & 72.8\% & 27.2\% \\
\bottomrule
\end{tabular}
\caption{LINC verdicts by underlying translation model. 
Only GPT-5.6 produces informative entailment/non-entailment verdicts;
smaller models emit predominantly unknown or error.}
\label{tab:linc-verdicts-per-model}
\end{table}

\begin{table}[h]
\centering
\small
\begin{tabular}{lrrrrr}
\toprule
Model & Min & Q1 & Median & Q3 & Max \\
\midrule
\multicolumn{6}{l}{\emph{Coverage ratio}} \\
GPT-5.6    & 0.000 & 0.004 & 0.006 & 0.008 & 0.018 \\
Qwen-3.5-9B& 0.033 & 0.328 & 0.515 & 0.714 & 1.024 \\
GLM-4.7    & 0.001 & 0.006 & 0.010 & 0.014 & 0.049 \\
Gemma-4-E4B  & 0.034 & 0.311 & 0.469 & 0.614 & 1.000 \\
\midrule
\multicolumn{6}{l}{\emph{Parse errors per trace}} \\
GPT-5.6    & 0 & 1.0 & 2.0 & 4.0 & 38 \\
Qwen-3.5-9B& 0 & 0.0 & 0.0 & 1.0 & 61 \\
GLM-4.7    & 0 & 0.0 & 0.0 & 0.0 & 161 \\
Gemma-4-E4B  & 0 & 0.0 & 0.0 & 3.0 & 36 \\
\bottomrule
\end{tabular}
\caption{LINC translation fidelity by underlying model. Coverage
ratio counts fraction of source premises formalised; higher is
better. GPT-5.6 emits the most parse errors but achieves the
lowest coverage ratio, indicating it fails on hard-to-translate
premises rather than surface parsing.}
\label{tab:linc-fidelity-per-model}
\end{table}

\paragraph{PRMs.}
The progressive collapse of `pass' emissions from Llama
($68\%$ hard-fail) through Math-Shepherd ($96\%$) to Qwen
($100\%$) suggests that transfer to natural-language reasoning
degrades sharply with the specificity of the PRM's mathematical
training distribution. Math-Shepherd's non-zero low-confidence
rate ($5.3\%$) provides a modest reference point: when the PRM
has any signal of uncertainty at all, it produces it; when it
does not, as with Llama and Qwen at $0\%$ low-confidence,
mislabelling is confident.
\tabref{prm-summary} reports the corpus-wide distribution of
per-trace PRM aggregations, and \tabref{prm-operational-per-model}
disaggregates the operational signals by PRM model. 

\begin{table}[t]
\centering
\small
\begin{tabular}{lrrrrr}
\toprule
Signal & Min & Q1 & Median & Q3 & Max \\
\midrule
Trace mean score & 0.267 & 0.392 & 0.426 & 0.453 & 0.564 \\
Trace minimum score & 0.005 & 0.037 & 0.067 & 0.197 & 0.454 \\
Trace variance & 0.0009 & 0.0087 & 0.0269 & 0.0404 & 0.0754 \\ 
\midrule
Signal & Traces & \% of corpus \\
\midrule
Low-confidence (var $<$ 0.01, mean $\in [0.4, 0.6]$) & 0 & 0.0\% & \\
Truncated (steps did not fit context) & 39 & 3.9\% &\\
Hard-fail (mean $<$ 0.5 or min $<$ 0.2) & 967 & 97.1\% &\\
\bottomrule
\end{tabular}
\caption{Distributional summary of PRM per-trace aggregations (top) and operational signals (bottom). 
}
\label{tab:prm-summary}
\end{table}

\begin{table}[h]
\centering
\small
\begin{tabular}{lrrr}
\toprule
PRM & Low-confidence & Truncated & Hard-fail \\
\midrule
Llama-3.1-8B-PRM   & 0.0\%  & 2.7\% & 68.4\% \\
Math-Shepherd      & 5.3\%  & 3.6\% & 96.2\% \\
Qwen-2.5-PRM       & 0.0\%  & 4.0\% & 100.0\% \\
\bottomrule
\end{tabular}
\caption{PRM operational signals by model. Low-confidence signals
low-variance predictions near the boundary (var $<$ 0.01, mean $\in [0.4, 0.6]$); 
hard-fail uses mean $< 0.5$ or minimum $< 0.2$.}
\label{tab:prm-operational-per-model}
\end{table}

\paragraph{LLMs-as-judges.}
Per-criterion flag rates reveal the underlying instability of
LLM-judge measurements. `Unresolved doubt' is flagged in
$52$-$87\%$ of traces depending on judge, and `modal mismatch'
in $29$-$95\%$; no single criterion produces stable rates
across judges. This confirms that the judge-model effect is not
narrowly attributable to overall scoring calibration but persists
criteria by criteria in the rubric. 
\tabref{llmjudge-flag-rates} reports the corpus-wide flag rate for
each of the four rubric axes (contradiction, unsupported conclusion,
modal mismatch, unresolved doubt).
\tabref{llmjudge-per-model} disaggregates these by judge model.
Even on this per-criterion rubric, judges vary by two to three times in
how often they flag a given criterion on the same corpus.
\tabref{within-llmjudge-kappa} shows the resulting within-method
pairwise agreement, with the maximum below $\kappa = 0.4$.

\begin{table}[t]
\centering
\small
\begin{tabular}{lrr}
\toprule
Axis & Flag rate & $\kappa$ with Type-6 analog \\
\midrule
Contradiction & 33.7\% & --- \\
Unsupported conclusion & 62.0\% & --- \\
Modal mismatch & 60.7\% & --- \\
Unresolved doubt & 80.3\% & --- \\
\bottomrule
\end{tabular}
\caption{LLM-judge per-axis flag rates and agreement with the
analogous Type-6 category: contradiction $\leftrightarrow$ $K$-$K$ plus
derived; unsupported conclusion $\leftrightarrow$ unverifiable
derivation plus reasoning-avoidance; modal mismatch $\leftrightarrow$
modal mismatch $(U_k/K)$; unresolved doubt $\leftrightarrow$ residual
unresolved-doubt at sink.}
\label{tab:llmjudge-flag-rates}
\end{table}

\begin{table}[h]
\centering
\small
\begin{tabular}{lrrrr}
\toprule
Judge & Contradict. & Unsup. concl. & Modal mismatch & Unresolved doubt \\
\midrule
Claude Opus 4.8   & 23.7\% & 54.7\% & 29.2\% & 66.3\% \\
Gemma-4-E4B       & 2.6\%  & 28.8\% & 49.8\% & 59.3\% \\
GLM-4.7           & 45.2\% & 99.4\% & 38.8\% & 52.6\% \\
GPT-5.6           & 34.3\% & 34.2\% & 43.3\% & 87.1\% \\
Llama-3.1-8B      & 33.5\% & 38.0\% & 95.0\% & 72.8\% \\
Qwen-3.5-9B       & 73.9\% & 95.9\% & 89.4\% & 85.2\% \\
\bottomrule
\end{tabular}
\caption{LLM-judge per-axis flag rates by judge model. Rates
vary by two to three times across judges on the same rubric axis
and the same corpus.}
\label{tab:llmjudge-per-model}
\end{table}

\paragraph{Type-6.}
Type-6's event incidence table shows that soft contradictions
(B-B conflicts, unjustified downgrades) and residuals at sink are
common, but only elevate to hard failures when they touch the
direct implication chain. This design decision is what allows
Type-6 to report a low false-positive rate: most of the corpus is
context or exposition, and a naive contradiction-counter would
hard-fail nearly everything. 
\tabref{type6-event-incidence} reports the incidence of each
Type-6 event category across the corpus, with the number of traces
triggered and average events per triggered trace.
\tabref{type6-chain-stats} reports the direct-implication-chain
statistics used to elevate residual signals to hard-fails.
\tabref{type6-reductions} reports the Spearman correlations
between three alternative graded-reduction strategies for the
Type-6 scalar, all of which are strongly rank-correlated.

\begin{table}[h]
\centering
\small
\begin{tabular}{lrrrrrr}
\toprule
 & Claude Opus & Gemma-4 & GLM-4.7 & GPT-5.6 & Llama-3.1 & Qwen-3.5 \\
\midrule
Claude Opus & ---     &         &         &         &         &     \\
Gemma-4     & $0.189$ & ---     &         &         &         &     \\
GLM-4.7     & $0.008$ & $0.006$ & ---     &         &         &     \\
GPT-5.6     & $0.382$ & $0.280$ & $0.007$ & ---     &         &     \\
Llama-3.1   & $-0.010$& $0.011$ & $0.000$ & $0.011$ & ---     &     \\
Qwen-3.5    & $0.084$ & $0.071$ & $0.196$ & $0.074$ & $0.090$ & --- \\
\bottomrule
\end{tabular}
\caption{Within-LLM-judge pairwise Cohen's $\kappa$ on hard-fail.
Even the highest agreement ($\kappa = 0.382$ between Claude Opus
and GPT-5.6) is below moderate.}
\label{tab:within-llmjudge-kappa}
\end{table}

\begin{table}[t]
\centering
\small
\begin{tabular}{lrrr}
\toprule
Category & Traces triggered & Avg.\ events/triggered & Blocks path? \\
\midrule
\multicolumn{4}{l}{\emph{Hard contradictions}} \\
$K$-$K$ contradiction & 62 & 2.258 & \checkmark \\
Derived contradiction & 409 & 5.276 & \checkmark \\
Modal mismatch $(U_c)$ & 36 & 1.417 & \checkmark \\
Modal mismatch $(U_k / K)$ & 182 & 2.308 & \checkmark \\
Reasoning-avoidance & 548 & --- & \checkmark (trace) \\
\midrule
\multicolumn{4}{l}{\emph{Soft contradictions}} \\
$B$-$B$ conflict & 87 & 2.805 &  \\
Bare-reassertion conflict & 68 & 2.206 &  \\
Unjustified downgrade & 200 & 2.225 &  \\
Unjustified modal shift & 75 & 1.680 &  \\
\midrule
\multicolumn{4}{l}{\emph{Residual (at sink)}} \\
Unresolved doubt & 495 & --- & \\
Unresolved unknowability & 878 & --- & \\
\midrule
\multicolumn{4}{l}{\emph{Quality signals}} \\
Redundant re-assertion & 171 & 2.000 &  \\
Self-questioned $K$ & 0 & --- &  \\
Self-questioned $B$ & 14 & 2.000 &  \\
Unverifiable derivation & 250 & 2.716 &  \\
Ambiguous negated connective & 72 & 2.792 &  \\
\bottomrule
\end{tabular}
\caption{Incidence of each Type-6 event category. `Traces triggered'
counts traces with at least one event of that category; the average is
the mean event count within those traces. Residual signals at sink are
elevated to hard when on the direct implication chain
(\secref{type6}).}
\label{tab:type6-event-incidence}
\end{table}

\begin{table}[t]
\centering
\small
\begin{tabular}{lrrrrr}
\toprule
Statistic & Min & Q1 & Median & Q3 & Max \\
\midrule
Chain length (props on direct implication chain to sink)
                   & 1 & 1.0 & 1.0 & 3.0 & 32 \\
On-chain fraction (chain length $/$ trace length)
                   & 0.003 & 0.016 & 0.025 & 0.074 & 0.733 \\
Elevated residuals per trace
                   & 0 & 0.0 & 0.0 & 0.0 & 16 \\
\bottomrule
\end{tabular}
\caption{Direct-implication-chain statistics. Higher on-chain fractions
indicate tighter reasoning; elevated residuals count the
unresolved-doubt or unresolved-$U_k$ signals raised to hard by touching
the chain.}
\label{tab:type6-chain-stats}
\end{table}

\begin{table}[h]
\centering
\small
\begin{tabular}{lrrr}
\toprule
 & strict & graded & proportional \\
\midrule
strict       & ---     &         &     \\
graded       & $0.703$ & ---     &     \\
proportional & $0.697$ & $0.976$ & --- \\
\bottomrule
\end{tabular}
\caption{Spearman $\rho$ between Type-6 alternative reduction
strategies on the trace-level coherence scalar. All three are
strongly rank-correlated, indicating the choice of reduction is
not a large source of variance in Type-6's output.}
\label{tab:type6-reductions}
\end{table}

\end{document}